\newcommand{\cluster}{\pname{$k$\hy{}Clustering$^q$}\xspace}
\newcommand{\fl}{\pname{Facility Location$^q$}\xspace}
\newcommand{\hw}{h}
\newcommand{\offset}{\tau(\eps,q,\sigma)}
\DeclareMathOperator{\cost}{cost}
\DeclareMathOperator{\dist}{dist}
\DeclareMathOperator{\diam}{diam}
\newcommand{\opt}{\ensuremath\textsc{OPT}}
\title{Polynomial Time Approximation Schemes for Clustering in Low Highway 
Dimension Graphs\footnote{A preliminary version of this paper appeared at ESA 
2020~\cite{DBLP:conf/esa/FeldmannS20}.}}
\author{Andreas Emil Feldmann\footnote{
Supported by the Czech Science Foundation GA{\v C}R
(grant \#19-27871X), and by the Center for Foundations of Modern Computer
Science (Charles Univ.\ project UNCE/SCI/004).
}\\ 
Charles University, Prague, Czechia \\ \texttt{feldmann.a.e@gmail.com}
\and
David Saulpic \\
LIP6, Sorbonne Universit\'e, Paris, France\\
\texttt{david.saulpic@lip6.fr}
}
\date{}
\begin{document}

\maketitle

\begin{abstract}
We study clustering problems such as $k$-Median, $k$-Means, and Facility 
Location in graphs of low highway dimension, which is a graph parameter modeling 
transportation networks. It was previously shown that approximation schemes for 
these problems exist, which either run in quasi-polynomial time (assuming 
constant highway dimension) [Feldmann et al.\ SICOMP 2018] or run in FPT time 
(parameterized by the number of clusters $k$, the highway dimension, and the 
approximation factor) [Becker et al.\ ESA~2018, Braverman et al.\ SODA 2021]. In 
this paper we show that a polynomial-time approximation scheme (PTAS) exists 
(assuming constant highway dimension). We also show that the considered problems 
are NP-hard on graphs of highway dimension 1.
\end{abstract}

\newpage
\setcounter{page}{1}

\section{Introduction}
Clustering is a standard optimization task that seeks a ``good'' partition of a 
metric space, such that two points that are ``close'' should be in the same 
part. A good clustering of a dataset allows to retrieve and exploit data, and 
is therefore a common routine in data analysis. The underlying data can come 
from various sources and represent many different objects. In particular, it is 
often interesting to cluster geographic data. In that case, the metric space can 
be given by a transportation network, which can be modeled  by graphs with low 
\emph{highway dimension} (see \cref{dfn:hd}).

In this article, we study some popular clustering objectives, namely 
\pname{Facility Location}, \pname{$k$\hy{}Median}, and \pname{$k$-Means}, in 
graphs with constant highway dimension. The two latter problems seek to find a 
set $S$ of $k$ points called \emph{centers} in a metric $(V, \dist)$ that 
minimizes the function $\sum_{v \in V} \min_{f \in S} \dist(v, f)^q$, with $q=1$ 
for \pname{$k$-Median} and $q=2$ for \pname{$k$-Means}. The objective for 
\pname{Facility Location} is slightly different: each point $f$ of the metric 
space has an \emph{opening cost}~$w_f$, and the goal is to find a set $S$ that 
minimizes $\sum_{f \in S} w_f + \sum_{v \in V} \min_{f \in S} \dist(v, f)$. 
These problems are APX-hard in general metric spaces, see for example 
\citet{guha1999greedy} for \pname{Facility Location}, \citet{jain2002new} for 
\pname{$k$-Median}, and \citet{AwasthiCKS15} for \pname{$k$-Means}.

To bypass the hardness of approximation known for these problems, researchers 
have considered low dimensional input, such as Euclidean spaces of fixed 
dimension, metrics with bounded doubling dimension, or metrics arising from 
classes of minor-free graphs. Many algorithmic tools were developed for that 
purpose: in their seminal work, \citet{AroraRagRao98} gave the first polynomial 
time approximation scheme (PTAS) for Euclidean \pname{$k$-Median} in 
$\mathbb{R}^2$, which generalizes to a quasi-polynomial time approximation 
scheme (QPTAS) in $\mathbb{R}^d$ for fixed $d$. This result was generalized by 
\citet{talwar2004bypassing}, who gave a QPTAS for metrics with bounded doubling 
dimension, and more recently by \citet{DBLP:conf/focs/SaulpicCF19}, who gave a 
near-linear time approximation scheme for this setting. 

In this work we focus on transportation networks, for which it can be argued 
that  metric spaces with bounded doubling dimension are not a suitable model: 
for instance, hub-and-spoke networks seen in air traffic networks do not have 
low doubling dimension (cf.~\cref{ap:hwdef}). Therefore we study graphs with 
constant \emph{highway dimension}, which formalize structural properties of such 
networks. The following definition is taken from \citet{FeldmannFKP15}. Here the 
\emph{ball}~$\beta_v(r)$ of radius $r$ around $v\in V$ is the set of all 
vertices at distance at most $r$ from $v$.

\begin{definition}
\label{dfn:hd}
The \emph{highway dimension} of a graph $G$ is the smallest integer~$\hw$ such 
that, for some universal constant $c>4$, for every $r\in \mathbb{R}^+$ and $v\in 
V$ there are at most $\hw$ vertices in the ball~$\beta_v(cr)$ of radius $cr$ 
around $v$ hitting all shortest paths of length more than $r$ that lie in 
$\beta_v(cr)$.
\end{definition}

Before this work, for this class of graphs the only known approximation 
algorithms for clustering that compute $(1+\eps)$-approximations for any 
$\eps>0$ either run in quasi-polynomial $n^{(\log n)^{O(\log^2(\hw/\eps))}}$ 
time, i.e., QPTASs~\cite{FeldmannFKP15}, or in $2^{\tilde 
O(\hw^{O(1/\eps)}+k)}\cdot n^{O(1)}$ time, i.e., parameterized approximation 
schemes~\cite{becker2017polynomial,braverman2020coresets}. Thus an open problem 
was to identify polynomial-time approximation schemes (PTASs) for clustering in 
graphs of constant highway dimension.

\subsection{Our results}

Our main result is a PTAS for clustering problems on graphs of 
constant highway dimension. 
For convenience, we define slightly more general problems than those stated 
above.
The \cluster problem is defined as follows. An instance $\mc{I}$ consists of a 
metric~$(V,\dist)$, a set of \emph{facilities} (or \emph{centers}) $F\subseteq 
V$, and a \emph{demand function} $\chi:V\to\mathbb{N}_0$. The goal is to find a 
set $S\subseteq F$ with $|S|\leq k$ minimizing  $\sum_{v \in V} \chi(v) \cdot 
\min_{f \in S} \dist(v,f)^q$ where $q$ is a positive integer. We call all 
vertices $v\in V$ with $\chi(v)>0$ the \emph{clients} of~$\mc{I}$. 
\pname{$k$-Median} and \pname{$k$-Means} are special cases of \cluster, where 
$q=1$ and $q=2$, respectively, and also $\chi(v)=1$ for all vertices $v\in V$.

The input to the \fl problem  is the same as for \cluster, but additionally each 
facility $f\in F$ has an \emph{opening cost} $w_f\in\mathbb{R}^+$. The goal is 
to find a set $S\subseteq F$ minimizing $\sum_{f\in S}w_f + \sum_{v \in V} 
\chi(v) \cdot \min_{f \in S} \dist(v,f)^q$ where $q$ is a positive integer. 
\pname{Facility Location} is a special case of \fl, where $q=1$ and $\chi(v)=1$ 
for all vertices $v\in V$.

Our main theorem\footnote{We remark that the success probability of the 
algorithm can be boosted to $1-\eps^\delta$ for any $\delta\in\mathbb{N}$ by 
running the algorithm $\delta$ times and outputting the best solution.} is the 
following, where $X=\max_{v\in V}\chi(v)$ is the largest demand (note that for 
\pname{$k$-Median}, \pname{$k$-Means}, or \pname{Facility Location} we 
have~$X=1$).

\begin{restatable}{theorem}{thmmainalg}
\label{thm:main-alg}
For any $\eps>0$, with probability $1-\eps$ a $(1+\eps)$-approximation for 
\cluster and \fl can be computed in $(nX)^{(\hw q/\eps )^{O(q)}}$ time on graphs 
of highway dimension~$\hw$ with $n$ vertices.
\end{restatable}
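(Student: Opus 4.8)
The plan is to combine the hierarchical structure of low highway dimension graphs with a dynamic program tailored to clustering. As a preliminary step I would run any known polynomial-time $O(1)$-approximation for \cluster or \fl (the demand function and the exponent~$q$ only affect the constant) to obtain an estimate $U$ with $\opt\le U\le O(1)\cdot\opt$. Using $U$ we may treat any connection of length above $R=U^{1/q}$ as impossible (a single client served at a larger distance would already cost more than $\opt$), and any connection of length below $\delta=(\eps U/(nX))^{1/q}$ as free, since the total additive error from the latter is at most $nX\cdot\delta^q\le\eps U$. Hence $R/\delta$ is $\text{poly}(nX/\eps)$, and rounding the remaining distances to powers of $1+\eps/q$ leaves only $O(\log(nX/\eps)/\eps)$ possible rounded values. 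I would then invoke the structural results underlying the embedding of \citet{FeldmannFKP15}: a graph of highway dimension~$\hw$ admits shortest-path covers that are locally sparse at every scale, and hence a laminar family of well-separated \emph{towns} --- for each scale $i$, regions of diameter $O(2^i)$ lying at distance $\Omega(2^i)$ from the rest of the graph, such that every shortest path leaving a scale-$i$ town passes through a hub of the scale-$i$ cover near its boundary. The generic $(1+\eps)$-embedding of \citet{FeldmannFKP15} only yields polylogarithmic treewidth and hence a QPTAS; the improvement to a PTAS comes from showing that clustering interacts with this hierarchy through a number of boundary portals bounded by a function of $\hw,q,\eps$ alone.

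The second ingredient is a near-optimal solution respecting this hierarchy. Fix an optimal solution $S^\star$ and a scale-$i$ town $T$. A client $v\in T$ served by $f\in S^\star\setminus T$ satisfies $\dist(v,f)\ge\Omega(2^i)$. If $\dist(v,f)\ge 2^i\cdot q/\eps$ then $\diam(T)$ is negligible relative to $\dist(v,f)$, so up to a $(1+\eps)$ factor all of $T$ may be treated as a single point for the purpose of connecting to~$f$; otherwise $f$ lies within distance $O(2^i q/\eps)$ of~$T$, and a shortest $v$--$f$ path crosses a hub of the cover at some scale at most $i+O(\log(q/\eps))$ inside this bounded neighbourhood of~$T$. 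By local sparseness only $(\hw q/\eps)^{O(q)}$ such hubs are relevant; declaring these the \emph{portals} of~$T$ and rerouting each crossing connection through its nearest portal loses only a further $(1+\eps)$ factor. A client whose connection distance falls near a scale boundary $2^i$ can be mishandled by this snapping, so to control the aggregate effect I would randomly shift the scale hierarchy by a uniform offset, ensuring that any fixed connection distance lands near a boundary with probability $O(\eps)$. Choosing the internal accuracy parameters small enough, the expected overhead is then at most $\eps^2\cdot\opt$, so by Markov's inequality it exceeds $\eps\cdot\opt$ with probability at most $\eps$; re-running with fresh offsets and returning the cheapest output boosts the success probability as in the footnote.

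With such a structured near-optimal solution available, I would compute the best solution of this restricted form by a bottom-up dynamic program over the laminar family of towns --- essentially a DP over a tree decomposition whose bag at a town $T$ is its set of $(\hw q/\eps)^{O(q)}$ portals. The state at~$T$ records: the number $j\le k$ of centers opened at facilities of~$T$ (omitted for \fl); which portals of $T$ are active; for each active portal a rounded distance --- one of $O(\log(nX/\eps)/\eps)$ values --- bounding how far the nearest solution center reachable through that portal lies; and the rounded total cost already paid by clients of~$T$. This yields $(nX)^{(\hw q/\eps)^{O(q)}}$ states, and the table of $T$ is assembled from those of its children by incorporating them one at a time and, for each compatible combination, choosing the cheapest way to open centers at the bare facilities of~$T$, to route the children's demands and the bare clients through the portals of~$T$, or to serve them from a center inside~$T$ --- all in polynomial time per state. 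The optimum is read off the root town.

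The main obstacle is precisely the structural claim of the second paragraph: bounding the number of portals of every town --- equivalently, the width of this tree decomposition --- by a function of $\hw$, $q$, and $\eps$ only, independent of $n$ and of the aspect ratio, which is exactly where the generic embedding of \citet{FeldmannFKP15} fails. It requires three facts used in concert: the $q$-th power in the objective makes interactions at distance $\gg\diam(T)/\eps$ indistinguishable up to $(1+\eps)$, so only a bounded annulus around $T$ matters; $(1+\eps)$-approximate connection distances suffice, so portals within $\eps 2^i$ of one another may be merged into one; and the local sparseness of the shortest-path covers of \citet{FeldmannFKP15}, which caps the number of merged portals inside that annulus. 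Pinning down the exact dependence on~$q$, and ensuring that the various snappings lose only a $(1+\eps)$ factor overall rather than one per level of the hierarchy, are the remaining technical points.
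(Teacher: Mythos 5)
Your blueprint matches the paper's at a high level (town decomposition of \citet{FeldmannFKP15}, a bounded ``portal''/interface set per town obtained by taking a net of the nearby hubs, a DP over the laminar family, a preliminary constant-factor approximation and distance rounding), and your identification of the three facts needed for the width bound is accurate. But there is a genuine gap in your second paragraph, and it sits exactly at the point that separates a PTAS from the already-known QPTAS. The town decomposition is \emph{deterministic}: a client $c$ and its optimal facility $f$ at distance $d$ can lie in different child towns of a town of diameter vastly larger than $d$ (e.g.\ two nearby points in different child towns of $V$ itself), so they are separated at a level $i$ with $2^i\gg d$. Rerouting their connection through a merged portal at that level costs an additive $\Theta(\eps 2^i)$, which is unbounded relative to $d$. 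Your proposed fix --- ``randomly shift the scale hierarchy by a uniform offset'' --- only randomizes the numerical scale labels; it does not change which towns separate which pairs of points, so it cannot make this event unlikely. To control it one either (i) sums the expected snapping error over all levels, which forces the merging radius (and hence the portal count) to scale like $1/\log(\text{aspect ratio})$ and lands you back at a quasi-polynomial algorithm, or (ii) proves a genuine \emph{scaling-probability} bound for a randomized decomposition and then truncates the bad tail. The paper does (ii): it randomizes the grouping of ``small'' child towns via a uniform threshold $\mu$ and applies a random split-tree to each (possibly unbounded) hub set $X_T$ to get $\Pr[\beta_v(r)\text{ cut at level }i]\le \sigma r/2^i$ (\cref{lem:decomp}), and then imports the badly-cut-client machinery of \citet{DBLP:conf/focs/SaulpicCF19}: clients that are still separated from their approximate facility at too high a level are \emph{reassigned} to that facility, yielding a modified instance of small distortion in which every client is cut from its facility within $O(\log(1/\eps))+\tau$ levels of $\log\dist$. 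Your Markov-inequality accounting presupposes precisely the per-connection expected-distortion bound that this machinery is needed to establish; without it the argument does not close.

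A secondary, more repairable issue: your DP state at a town $T$ records distances only for $T$'s own portals. Because the hubs of $T$ are only guaranteed to approximate shortest paths between \emph{different child towns of $T$}, a connection from $v\in T$ to $f\notin T$ is certified by a hub of the lowest common ancestor town of $v$ and $f$, not by a portal of $T$; there is no analogue of portal-respecting paths here (the paper gives an explicit counterexample in \cref{ap:hwdef}). The paper's DP therefore carries internal and external distance functions for the interface points of \emph{all} ancestor parts, which is why it additionally needs the reduction to coarse instances to keep the number of relevant levels logarithmic and the table polynomial-size. Your single-level state would need to be augmented in the same way, or its correctness argued separately.
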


In particular, this algorithm is much faster than the quasi-polynomial time 
approximation scheme of \citet{FeldmannFKP15}
for \pname{$k$-Median} or \pname{Facility Location}. The runtime of our 
algorithm also significantly improves over the exponential dependence on $k$ in 
the approximation schemes of \citet{becker2017polynomial, braverman2020coresets} for \pname{$k$-Median}.

It has so far been open whether these clustering problems are NP-hard on graphs 
of constant highway dimension. We complement our main theorem by showing that 
they are NP-hard even for the smallest possible highway dimension. This answers 
an open problem given in~\cite{FeldmannFKP15}. Here the \emph{uniform} \fl 
problem has unit opening costs for all facilities.

\begin{restatable}{theorem}{thmhardness}
\label{thm:hardness}
The \cluster and uniform \fl problems are NP\hy{}hard on graphs of highway 
dimension $1$.
\end{restatable}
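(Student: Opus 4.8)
The plan is to reduce from a known NP-hard problem, and the natural candidates are \pname{Vertex Cover} or \pname{Dominating Set} on restricted graph classes. The key constraint is that the output instance must have highway dimension exactly $1$: by \cref{dfn:hd}, this means that for every radius $r$ and every vertex $v$, at most one vertex in $\beta_v(cr)$ suffices to hit all shortest paths of length more than $r$ inside that ball. Graphs with highway dimension $1$ are extremely restrictive — intuitively they look ``path-like'' or ``caterpillar-like'' at every scale — so the first step is to pin down a concrete family of metrics that provably has highway dimension $1$ and is still rich enough to encode an NP-hard problem. A safe choice is a star metric, or more generally a metric built from a constant number of distance scales arranged so that long shortest paths all pass through a single hub; one should verify the highway dimension bound directly from the definition for whatever gadget is used.

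The main work is the reduction itself. I would start from \pname{Dominating Set} (or \pname{Vertex Cover}) on a hard instance and build a clustering instance whose optimum encodes a solution. A clean approach: take the classical reduction showing \pname{$k$-Median}/\pname{Facility Location} is NP-hard via set cover type arguments, but perform it so the resulting metric is a \emph{weighted star} (all clients hang off a central hub, with carefully chosen leaf-to-hub distances, or a two-level star), and argue such metrics have highway dimension $1$. For \fl with uniform opening costs, set the opening cost so that opening $\ell$ facilities ``costs'' exactly enough to make the optimum value a threshold that distinguishes YES/NO instances of the source problem; for \cluster one uses the budget $k$ directly. One then shows: (i) a dominating set of size $k$ yields a clustering solution of cost at most some value $B$, and (ii) any solution of cost at most $B$ can be converted (by moving each open center/chosen facility to a canonical nearby vertex without increasing cost) into one corresponding to a dominating set of size $\le k$. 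Because $q$ appears in the objective only through raising distances to the power $q$, and the gadget uses only a bounded number of distinct distances, the same construction works for all fixed $q$ after adjusting the numerical thresholds.

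The main obstacle I expect is the tension between two requirements: the metric must be simple enough to have highway dimension $1$ (which forbids, e.g., embedding a general graph with many independent short shortest paths at a common scale), yet expressive enough for the reduction to be correct. In particular, the standard set-cover gadgets naturally create many parallel short paths and thus higher highway dimension, so I would need to ``serialize'' the gadget — e.g.\ place elements and sets along structures that are locally path-like at every scale, possibly using a hierarchy of hubs with geometrically separated distance scales so that at each scale only one hub is relevant. A second, more routine obstacle is ensuring the optimal solution can always be assumed to open facilities only at the ``intended'' vertices; this is handled by a standard local-moving argument exploiting that non-intended vertices are strictly dominated, but the inequalities must be checked to survive the exponent $q$. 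Finally, one must confirm that the reduction is polynomial and that $X=\max_v\chi(v)$ (or the opening costs) stay polynomially bounded, which is immediate for unit demands and unit opening costs.
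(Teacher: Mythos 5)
Your high-level strategy---reduce from an NP-hard covering/satisfiability problem, keep the highway dimension at $1$ by geometrically separating the distance scales so that at each scale only a single hub is relevant, and finish with a threshold argument on the objective value---is exactly the strategy of the paper, which reduces from SAT: each variable $x$ gets a path $(t_x,u_x,f_x)$ with unit edges ($t_x,f_x$ facilities, $u_x$ a client forcing ``open exactly one of the two''), and each clause $C_i$ gets a client vertex $v_i$ joined to the vertices of its literals by edges of length $(11c)^i$, so that on any scale $r$ only one clause vertex carries long shortest paths and serves as the unique hub. So you have correctly identified the one non-obvious idea.

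However, there is a genuine gap between the plan and a proof, and one of your concrete suggestions would fail outright. You propose as a ``safe choice'' a weighted star (or two-level star) metric; but \cluster and \fl are solvable in polynomial time on trees (and a fortiori on stars) by standard dynamic programming, so no NP-hardness reduction can produce a tree metric. The instance must contain cycles---in the paper these arise because clause vertices are adjacent to literal vertices of several different variable gadgets---and reconciling cycles with highway dimension $1$ is precisely where the work lies; your proposal never resolves this. Relatedly, ``a constant number of distance scales'' is not enough: if two clause vertices shared a scale, a single ball could contain two long shortest paths needing two distinct hubs, so one needs a separate scale per clause (hence $\Theta(\ell)$ scales), together with the explicit calculation (as in \cref{lem:hd1}) that all paths avoiding the unique scale-$i$ hub are shorter than $r$. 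The remaining ingredients you list (threshold $B$, handling the exponent $q$, polynomial size) are routine and match the paper, but without the concrete non-tree gadget the argument is not yet a proof.
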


\subsection{Related work}

\subparagraph*{On clustering problems.} The problems we focus on in this article 
are known to be APX\hy{}hard in general metric spaces (see 
e.g.~\cite{guha1999greedy,jain2002new,AwasthiCKS15}). The current best 
polynomial-time algorithm for \pname{Facility Location} achieves a 
$1.488$-approximation~\cite{Li13}, while  the best approximation factor is 
$2.67$ for \pname{$k$-Median}~\cite{ByrkaPRST15} and $6.357$ for 
\pname{$k$-Means}~\cite{ahmadian2019better}. 

When restricting the inputs, a near-linear time approximation scheme for 
doubling metrics was developed in \cite{DBLP:conf/focs/SaulpicCF19}; we will 
discuss the close relations between our work and this one in~\cref{sec:ourtec}. 
Local search techniques also yield a PTAS in metrics arising from classes of 
minor-free graphs and metrics with bounded doubling dimension 
\cite{cohen2019local, FriggstadRS19}, and a $\Theta(q)$-approximation for the 
\cluster problem in general metric spaces \cite{abs-0809-2554}.

Another technique for dealing with clustering problems is to compute 
\emph{coresets}, which are compressed representations of the input. An 
$\eps$-coreset is a weighted set of points such that for every set of centers, 
the cost for the original set of points is within a $(1+\eps)$-factor of the 
cost for the coreset. \citet{braverman2020coresets} recently proved that graphs 
with highway dimension~$\hw$ admit coreset of size $\widetilde 
O((k+\hw)^{O(1/\eps)})$. This enables to compute a $(1+\eps)$-approximation by 
enumerating all possible solutions of the coreset. However, this coreset does 
not have small highway dimension,\footnote{Indeed, a subset of a metric with 
small highway dimension does not necessarily have small highway dimension as 
well: think of a star metric on which the center is removed.} and thus cannot be 
used to boost our algorithms. 

\subparagraph*{On highway dimension.}
The highway dimension was originally defined by \citet{abraham2010highway}, who 
specifically chose balls of radius $4r$ in the \cref{dfn:hd}. Since the original 
definition in~\cite{abraham2010highway}, several other definitions have been 
proposed. In particular, \citet{FeldmannFKP15} proved that when choosing 
a radius $cr$ in \cref{dfn:hd} for any constant $c$ strictly larger than~$4$, it 
is possible to exploit the structure of graphs with constant highway dimension 
in order to obtain a QPTAS for problems such as TSP, {\sc Facility Location}, 
and {\sc Steiner Tree}. As \citet{abraham2010highway} point out, the choice 
of the constant is somewhat arbitrary, and we use the above definition so 
that we may exploit the structural insights of~\cite{FeldmannFKP15} for our 
algorithm. These structural properties were also leveraged by 
\citet{becker2017polynomial} who gave a PTAS for the 
\pname{Bounded-Capacity Vehicle Routing} problem, and a parameterized 
approximation scheme for the \pname{$k$-Center} problem and \pname{$k$-Median}. 
In previous work, \citet{DBLP:conf/icalp/Feldmann15} gave a parameterized 
$3/2$-approximation algorithm with runtime~$2^{O(k\hw\log\hw)}n^{O(1)}$ for 
\pname{$k$-Center}. \citet{DBLP:conf/wg/DisserFKK19} showed that \pname{Steiner 
Tree} and \pname{TSP} are weakly NP-hard when the highway dimension is $1$, 
i.e., each of them is NP-hard but an FPTAS exists.

It is worth mentioning that further definitions of the highway dimension exist 
(for a detailed discussion see \cref{ap:hwdef} 
and~\cite{blum2019hierarchy,FeldmannFKP15}). 

\subsection{Our techniques}\label{sec:ourtec}

To obtain \cref{thm:main-alg}, we rely on the framework recently developed by
\citet{DBLP:conf/focs/SaulpicCF19} for doubling metrics. They show that the 
\emph{split-tree decomposition} of \citet{talwar2004bypassing} has some 
interesting properties, and exploit them to design their algorithm. Our main 
contribution is to provide a decomposition with similar properties in graphs 
with constant highway dimension. This is done relying on some structural 
properties of such graphs presented by \citet{FeldmannFKP15}.
We start by giving an outline of the algorithm 
from~\cite{DBLP:conf/focs/SaulpicCF19}, and then explain how to carry the 
results over to the highway dimension setting.

\subparagraph*{On doubling metrics.} 
The starting point of many approximation algorithms for doubling metrics is a 
decomposition of the metric, as presented in the following lemma taken
from~\cite{DBLP:conf/focs/SaulpicCF19}. A \emph{hierarchical decomposition} 
$\mc{D}$ of a metric $(V,\dist)$ is a sequence of partitions $\mc{B}_0, 
\mc{B}_1, \ldots, \mc{B}_\lambda$ of $V$, where $\mc{B}_{i}$ 
refines~$\mc{B}_{i+1}$, i.e., every part $B\in\mc{B}_i$ is contained in some 
part of $\mc{B}_{i+1}$. Moreover, in $\mc{B}_0$ every part contains a singleton 
vertex of $V$, while $\mc{B}_\lambda$ contains only one part, namely $V$. For a 
point $v\in V$ and a radius $r >0$, we say that the ball $\beta_v(r)$ is 
\emph{cut at level $i$} if $i$ is the largest integer for which the ball 
$\beta_v(r)$ is not contained in a single part of $\mc{B}_i$. For any subset 
$W\subseteq V$ of vertices we define $\lambda(W)=\lceil\log_2\diam(W)\rceil$, 
where $\diam(W)=\max_{u,v\in W}\dist(u,v)$ is the diameter of $W$.

\begin{lemma}[Reformulation of \cite{talwar2004bypassing, 
BartalG13} as found in \cite{DBLP:conf/focs/SaulpicCF19}\footnote{We remark that 
in~\cite{DBLP:conf/focs/SaulpicCF19} the preciseness of \cref{lem:talwar-decomp} 
was expressed akin to the weaker property found in \cref{lem:decomp}, which 
however would not lead to a near-linear time approximation scheme as claimed 
in~\cite{DBLP:conf/focs/SaulpicCF19}, but rather a PTAS as shown in this work. 
This can however easily be alleviated for~\cite{DBLP:conf/focs/SaulpicCF19} by 
using the stronger preciseness as stated here in \cref{lem:talwar-decomp}.}] 
\label{lem:talwar-decomp}
For any metric $(V,\dist)$ of doubling dimension~$d$ and any $\rho > 0$, there 
exists a polynomial-time computable randomized hierarchical decomposition 
$\mc{D}=\{\mc{B}_0,\ldots,\mc{B}_{\lambda(V)}\}$ such that the diameter of each 
part $B \in \mc{B}_i$ is at most $2^{i+1}$, and:
\begin{enumerate}
  \item\textbf{Scaling probability}: for any 
$v\in V$, radius $r$, and level $i$, we have \\
$\Pr[\mc{D}\text{ cuts } \beta_v(r)\text{ at level } i]\leq 2^{O(d)}\cdot 
r/2^i$. 

  \item\textbf{Portal set:}
    every part $B\in \mc{B}_i$ where $\mc{B}_i\in\mc{D}$ comes with a set of
    \emph{portals}~$P_B\subseteq B$ that is
    \begin{enumerate}
    \item \textbf{concise:} the size of the portal set is bounded by 
$|P_{B}| \le 1/\rho^d$, and

    \item \textbf{precise:} for every node $u\in B$ there is a portal 
$p\in P_B$ with $\dist(u,p)\leq \rho 2^{i+1}$. %

    \end{enumerate}
  \end{enumerate}
\end{lemma}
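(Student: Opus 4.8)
The plan is to reconstruct Talwar's ball-carving (split-tree) decomposition and then attach a net inside each part to play the role of its portal set. After normalising the minimum interpoint distance to $1$, I would build the partitions $\mc{B}_{\lambda(V)},\dots,\mc{B}_0$ top down. Draw a uniformly random permutation $\pi$ of $V$ once and for all, and for each level $i$ an independent carving radius $r_i=\alpha_i\cdot2^{i-1}$ with $\alpha_i$ uniform in $[1,2)$. Refine each part $B'\in\mc{B}_{i+1}$ separately: scan the vertices of $B'$ in the order of $\pi$, and whenever a still-unassigned vertex $v$ is reached, open a new level-$i$ part consisting of all still-unassigned vertices of $B'$ within distance $r_i$ of $v$. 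By construction $\mc{B}_i$ refines $\mc{B}_{i+1}$; since $r_i>\diam(V)$ at level $\lambda(V)$ we get $\mc{B}_{\lambda(V)}=\{V\}$, and since $r_i<1$ at level $0$ the parts of $\mc{B}_0$ are singletons; and every level-$i$ part lies in a ball of radius $r_i\le 2^i$, hence has diameter at most $2^{i+1}$.

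For the scaling probability I would run the standard padding analysis, conditioning on the level-$i$ carving radius. Work inside the part $B'\in\mc{B}_{i+1}$ that contains $\beta_v(r)$ (if no such part exists, $\beta_v(r)$ was already cut at a higher level and there is nothing to prove). Then $\beta_v(r)$ is cut at level $i$ precisely when the $\pi$-first vertex $w$ among those of $B'$ whose carving ball $\beta_w(r_i)$ meets $\beta_v(r)$ fails to satisfy $\beta_w(r_i)\supseteq\beta_v(r)$; since every vertex within distance $r_i-r$ of $v$ does contain $\beta_v(r)$ in its carving ball and hence is a ``better'' candidate, this event has conditional probability at most $\bigl(|\beta_v(r_i+r)|/|\beta_v(r_i-r)|\bigr)-1$ (cardinalities taken inside $B'$). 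Taking the expectation over $r_i$ and invoking the doubling property — the window $[r_i-r,r_i+r]$ has length $2r$ whereas $r_i$ ranges over an interval of length $2^{i-1}$, and doubling the radius multiplies a ball's cardinality by at most $2^{O(d)}$ — gives $\Pr[\mc{D}\text{ cuts }\beta_v(r)\text{ at level }i]\le 2^{O(d)}\cdot r/2^i$, the regime $r=\Omega(2^i)$ being trivial.

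For the portal sets I would take $P_B$ to be any greedily chosen maximal $\rho2^{i+1}$-separated subset of each $B\in\mc{B}_i$. Maximality immediately gives preciseness: every $u\in B$ is within $\rho2^{i+1}$ of some $p\in P_B$, or else $u$ could be added. Conciseness is a packing argument: $B$ has diameter at most $2^{i+1}$, so by the doubling property it is covered by $(1/\rho)^{O(d)}$ balls of radius $\tfrac13\rho2^{i+1}$, each containing at most one point of a $\rho2^{i+1}$-separated set, so $|P_B|\le(1/\rho)^{O(d)}$, which matches the stated bound $1/\rho^d$ up to the normalisation of the doubling dimension. Sampling $\pi$ and the radii, carving, and computing greedy nets are all polynomial-time operations, so the decomposition is polynomial-time computable.

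The step I expect to be the most delicate is pushing the scaling-probability bound through level by level in the presence of refinement: because each level carves \emph{within} the coarser parts, one must check that restricting the metric to the part $B'$ does not spoil the ratio-of-cardinalities estimate (intuitively, $B'$ is itself contained in a ball of radius $r_{i+1}$ around its own centre, which leaves enough room), and that a ball already cut at a coarser level is not counted again. This bookkeeping is the crux of Talwar's analysis; the diameter bound, the net construction for the portals, and the running time are all routine.
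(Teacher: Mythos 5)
The paper never proves \cref{lem:talwar-decomp} --- it imports it as a black box from the cited works --- so your reconstruction can only be measured against the standard proofs there, and it is indeed the standard split-tree construction (random permutation plus a random radius per level, carving within each parent part) with greedy nets as portals. The diameter bound, the portal conciseness and preciseness (maximal separated set plus packing), and the running time are all fine, up to harmless normalisations and an off-by-one at the top: $r_{\lambda(V)}\geq 2^{\lambda(V)-1}$ need not exceed $\diam(V)$, so one should simply declare $\mc{B}_{\lambda(V)}=\{V\}$ and begin carving one level down.

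The genuine gap is in the scaling-probability step: your carving rule and your analysis do not describe the same process. You carve \emph{greedily} --- only a still-unassigned vertex opens a part, and it absorbs only still-unassigned vertices --- but you then analyse the cut event as if every vertex of $B'$ were a candidate centre, asserting that $\beta_v(r)$ is cut at level $i$ exactly when the $\pi$-first $w\in B'$ with $\dist(w,v)\leq r_i+r$ satisfies $\dist(w,v)>r_i-r$. For the greedy rule the ``only if'' direction fails: that first $w$ may already have been swallowed by an earlier, far-away centre $w_0$ with $\dist(w_0,w)\leq r_i$ but $\dist(w_0,v)>r_i+r$, in which case $w$ never opens a part, $\beta_v(r)$ is left untouched by $w_0$, and the ball's fate is decided by some later vertex of the annulus. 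The protective vertices of $\beta_v(r_i-r)$ can thus be neutralised one by one, and the conditional bound $|\beta_v(r_i+r)|/|\beta_v(r_i-r)|-1$ does not follow for the process you defined. The repair is standard and short: either switch to the CKR assignment rule (assign each $u\in B'$ to the part of the $\pi$-first $w\in B'$ with $\dist(u,w)\leq r_i$, whether or not $w$ ends up in its own part), for which your characterisation is exact and the expectation over $r_i$ combined with $\ln\bigl(|\beta_v(2^{i+1})|/|\beta_v(2^{i-2})|\bigr)=O(d)$ yields the claimed $2^{O(d)}\cdot r/2^i$; or carve from a fixed $2^{i-2}$-net of $B'$, where a packing argument bounds the number of candidate centres near $v$ by $2^{O(d)}$ and a union bound over the random radius suffices. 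With either fix the rest of your argument goes through.
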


We briefly sketch the standard use of this decomposition. For clustering 
problems, one can show that there exists a \emph{portal-respecting solution} 
with near-optimal cost (see~\citet{talwar2004bypassing}). In this structured 
solution, each client connects to a facility via a \emph{portal-respecting path} 
that enters and leaves any part $B$ of $\mc{D}$ only through a node of the 
portal set $P_B$. Those portals therefore act as separators of the metric. A 
standard dynamic program approach can then compute the best portal respecting 
solution.

To ensure that there is a portal-respecting solution with near-optimal cost, one 
uses the preciseness property of the portal set: the distortion (i.e., the 
overhead) of connecting a client $c$ with a facility~$f$ through portals instead 
of directly, is bounded as follows. Let $i$ be the level at which \emph{$\mc{D}$ 
cuts} $c$ and~$f$, meaning that $i$ is the maximum integer for which $c$ and $f$ 
lie in different parts of~$\mc{B}_i$. At every level $j \leq i$ the 
portal-respecting path uses an edge to the closest portal on this level, and 
thus incurs a distortion of $\rho 2^{j+1}$. Hence the total distortion is 
$\sum_{j \leq i} O(\rho 2^j) = O(\rho 2^i)$. Now, the scaling probability of the 
decomposition ensures that $c$ and~$f$ are cut at level $i$ with probability at 
most $2^{O(d)}\dist(c, f) / 2^i$. Hence combining those two bounds over all 
levels ensures that, in expectation, the distortion between $c$ and $f$ is 
$2^{O(d)}\dist(c,f) \cdot \rho \lambda(V)$. Using a standard pre-processing 
technique (see~e.g.~\cite{FeldmannFKP15}) we may reduce $\lambda(V)$ to 
$O(\log(n/\eps))$ when aiming for a $(1+\eps)$-approximation. Hence choosing 
$\rho = \frac{\eps}{2^{O(d)} \log n}$ gives a distortion of 
$\eps\cdot\dist(c,f)$. Summing over all clients proves that there exists a 
near-optimal portal-respecting solution.

The issue with this approach is that by the conciseness property, the number of 
needed portals is $2^{O(d^2)}\log^d n/\eps^d$, and the dynamic program has a 
runtime that is exponential in this number. Thus the time complexity is 
quasipolynomial. The novelty of \citep{DBLP:conf/focs/SaulpicCF19} is to show 
how to reduce the number of portals to a constant. The idea is to reduce the 
number of levels on which a client can be cut from its facility.

For this, \citet{DBLP:conf/focs/SaulpicCF19} present a processing step of the 
instance that helps to deal with clients cut from their facility at a high level 
(see \cref{sec:prelim} for formal definitions and lemmas). Roughly speaking, 
their algorithm computes a constant factor approximation $L$ of \cluster or \fl, 
and a client $c$ is called \emph{badly-cut} if $\mc{D}$ cuts it from its closest 
facility of~$L$ at a level larger than $\log(\dist(c, L)/\eps)+\tau(\eps,q,d)$ 
for some function $\tau$. Every badly-cut client is moved to its closest 
facility of~$L$. It is then shown that this new instance $\mc{I}_\mc{D}$ has 
\emph{small distortion}, which essentially means that any solution to 
$\mc{I}_\mc{D}$ can be converted to a solution of the original instance~$\mc{I}$ 
while only losing a $(1+\eps)$-factor in quality. In this 
instance~$\mc{I}_\mc{D}$ all clients are cut from their closest facility of $L$ 
at some level between $\log (\dist(c, L)/2)$ and $\log (\dist(c, L) 
/\eps)+\tau(\eps,q,d)$, where the lower bound holds because two vertices at 
distance $d$ cannot be in the same part of any level smaller than $\log(d/2)$ 
due to the diameter bound of each part given by \cref{lem:talwar-decomp}. Using 
this property, it can be shown that $c$ and its closest center in the optimum
solution are also cut at a level in that range. As there are only $O(\log 
(1/\eps))+\tau(\eps,q,d)$ levels in this range, by the previous argument, the 
number of portals is now independent of $n$.

\subparagraph*{On highway dimension.} \label{par:tech-hd}
The above arguments for doubling metrics hold thanks 
to~\cref{lem:talwar-decomp}. In this work, we show how to construct a similar 
decomposition for low highway dimension: 

\begin{restatable}{lemma}{lemdecomp}
\label{lem:decomp}
Given a shortest-path metric $(V,\dist)$ of a graph with highway 
dimension~$\hw$, a subset $W\subseteq V$, and $\rho > 0$, there exists a 
polynomial-time computable randomized hierarchical 
decomposition~$\mc{D}=\{\mc{B}_0,\ldots,\mc{B}_{\lambda(W)}\}$ of $W$ such 
that the diameter of each part $B\in\mc{B}_i$ is at most $2^{i+5}$, and:
  \begin{enumerate}
  \item \textbf{Scaling probability}: for any 
$v\in V$, radius $r$, and level $i$, we have\\
$\Pr[\mc{D}\text{ cuts } \beta_v(r)\text{ at level } i]\leq \sigma \cdot r/2^i$, 
where $\sigma=(\hw\log(1/\rho))^{O(1)}$.
  \item \textbf{Interface}: for any $B\in\mc{B}_i$ %
on level $i\geq 1$ there exists an interface $I_B\subseteq V$, which is
    \begin{enumerate}
    \item \textbf{concise}: $|I_B|\leq (\hw/\rho)^{O(1)}$, 
and
    \item \textbf{precise}: for any $u,v\in B$ such that $u$ and $v$ 
are cut by $\mc{D}$ at level~$i-1$, there exists $p\in I_B$ with 
$\dist(u,p) + \dist(p,v) \le \dist(u,v) + 68\cdot\rho2^i$.
    \end{enumerate}
  \end{enumerate}
\end{restatable}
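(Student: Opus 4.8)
The plan is to transport the construction behind \cref{lem:talwar-decomp} to graphs of low highway dimension, with the nets used there replaced by the \emph{shortest-path covers} (SPCs) of \citet{FeldmannFKP15}. Recall that for every scale $r$ a graph of highway dimension $\hw$ admits a set $\spc(r)\subseteq V$ of \emph{hubs} hitting every shortest path of length in $(r,cr/2]$ anywhere in the graph, and that this set is \emph{locally sparse}: any ball of radius $\ell$ meets $\spc(r)$ in at most $(\hw\cdot\ell/r)^{O(1)}$ points, so in particular $|\spc(r)\cap\beta_v(O(r))|\leq\hw^{O(1)}$; moreover the hubs and the associated \emph{town} decomposition of $V$ are computable in polynomial time. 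One builds $\mc D$ top down: set $\mc B_{\lambda(W)}=\{W\}$, and obtain $\mc B_i$ from $\mc B_{i+1}$ by refining each $B\in\mc B_{i+1}$ with a standard randomized low-diameter ball-carving at scale $2^i$ whose cluster centers are the hubs of $\spc(2^i)$, augmented by a few extra centers to absorb vertices far from every hub --- which, by \cite{FeldmannFKP15}, lie inside towns of diameter $O(2^i)$ and can be made atomic at this level. Taking the random carving radii in, say, $[2^{i+2},2^{i+3}]$ and accounting for the constant slack baked into the SPC gives parts of diameter at most $2^{i+5}$. Because the carving of $W$ uses centers drawn from all of $V$ and a shortest path between two points of a part need not stay inside that part, the interface $I_B$ is allowed to be a subset of $V$ rather than of $B$.

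\textbf{Scaling probability.} Fix $v$, a radius $r$, and a level $i$. In the level-$i$ carving, $\beta_v(r)$ is split only when the random radius of some center $h$ lands in the width-$O(r)$ window of distances at which the cluster of $h$ grazes $\beta_v(r)$, and the only centers that can play this role lie within $O(2^i)$ of $v$. By the local sparsity of $\spc(2^i)$ --- summed over the $O(\log(1/\rho))$ consecutive sub-scales that also have to be carved so that short shortest paths, down to length $\approx\rho2^i$, are respected --- there are at most $\sigma=(\hw\log(1/\rho))^{O(1)}$ such centers, and each contributes $O(r/2^i)$ by the usual random-radius estimate. Summing the per-center contributions yields $\Pr[\mc D\text{ cuts }\beta_v(r)\text{ at level }i]\leq\sigma\cdot r/2^i$.

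\textbf{Interface.} For $B\in\mc B_i$ fix a reference point $b\in B$ and let $I_B$ be the union, over scales $2^j$ with $\rho2^i\lesssim 2^j\lesssim 2^{i+5}$, of $\spc(2^j)\cap\beta_b(O(2^i))$. Conciseness is then immediate: each $\spc(2^j)$ contributes $(\hw\cdot 2^i/2^j)^{O(1)}\leq(\hw/\rho)^{O(1)}$ hubs, over $O(\log(1/\rho))$ scales, so $|I_B|\leq(\hw/\rho)^{O(1)}$. For preciseness take $u,v\in B$ cut by $\mc D$ at level $i-1$. If $\dist(u,v)\geq\rho2^i$, a shortest path from $u$ to $v$ has length in $(\rho2^i,2^{i+5}]$ and is therefore hit by some $p\in\spc(2^j)$ at the scale $2^j$ matching that length; since $p$ lies on the path, $\dist(u,p)+\dist(p,v)=\dist(u,v)$, and $p\in I_B$ because it lies within $\diam(B)\leq2^{i+5}$ of $b$ while $2^j=\Theta(\dist(u,v))$ is in the admissible range --- so the target bound $\dist(u,v)+68\rho2^i$ holds with room to spare. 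If instead $\dist(u,v)<\rho2^i$, then the fact that $u$ and $v$ were nevertheless separated by the level-$(i-1)$ carving, together with the town structure of \cite{FeldmannFKP15} --- which forbids splitting two points lying in a common town of diameter $\geq\rho2^i$ --- forces a hub of $I_B$ within $O(\rho2^i)$ of $\{u,v\}$; the triangle inequality then gives $\dist(u,p)+\dist(p,v)\leq\dist(u,v)+O(\rho2^i)$, and tracing constants through the carving radii and the SPC slack pins the additive error down to $68\rho2^i$.

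\textbf{Main obstacle.} The random-radius computation and the polynomial-time claim are routine given the toolkit of \cite{FeldmannFKP15}. The delicate part is the small-distance regime of preciseness: one must set up the level-$i$ carving so that every cluster boundary is provably controlled by a nearby hub, which is exactly where the town decomposition of \cite{FeldmannFKP15} is indispensable and where the concrete constants (the diameter bound $2^{i+5}$, the factor $68$, and the exponents hidden in $\sigma$ and $|I_B|$) get fixed. A secondary tension is that the $O(\log(1/\rho))$ sub-scales feeding $I_B$ must be fine enough to witness every short shortest path yet coarse enough to keep $|I_B|$ within $(\hw/\rho)^{O(1)}$.
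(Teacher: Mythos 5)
Your proposal takes a genuinely different route from the paper -- direct randomized ball-carving around SPC hubs -- but it has a gap at exactly the point you flag as delicate, and I do not see how to close it within your framework. The paper instead uses the town decomposition of \citet{FeldmannFKP15} as a \emph{deterministic} skeleton: a part at level $i+1$ contained in a minimal town $T$ is refined only by regrouping whole child towns of $T$ (small ones, with $\dist(T',V\setminus T')\le\mu 2^i$ for a random threshold $\mu$, are grouped according to which cluster of a doubling-metric hierarchical decomposition of the core-hub set $X_T$ contains their nearest hub; the rest become singleton parts). Consequently, two points separated at level $i-1$ always lie in \emph{different child towns} of $T$, and the preciseness of $X_T$ (\cref{thm:apx-core-hubs}) hands you a hub $x$ with $\dist(u,x)+\dist(x,v)\le(1+2\rho)\dist(u,v)$ for \emph{every} such pair, at every distance scale simultaneously; a $\rho2^i$-net of the nearby hubs then gives $I_B$, with conciseness coming from the doubling dimension $O(\log(\hw\log(1/\rho)))$ of $X_T$. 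Your carving does not respect child-town boundaries: two points $u,v$ with $\dist(u,v)\ll\rho 2^i$, both at distance $\Theta(2^i)$ from every center of the level-$(i-1)$ carving, can straddle a cluster boundary (one inside the carving radius of some center $h$, the other just outside), and then no hub of $\spc(2^j)$ with $2^j\gtrsim\rho2^i$ need lie within $O(\rho2^i)$ of the $u$--$v$ shortest path -- that path is only guaranteed to be hit at scales $2^j<\dist(u,v)$, which your $I_B$ excludes. The claim that ``the town structure forbids splitting two points lying in a common town'' is a property you would have to \emph{build into} the carving, and doing so while keeping the diameter bound and the cut probability is essentially the content of the paper's construction, not a citation to \cite{FeldmannFKP15}.

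The natural repair -- carving at the $O(\log(1/\rho))$ sub-scales down to $\rho2^i$, as you suggest -- breaks the scaling probability instead. The centers of a sub-scale-$s$ carving that can cut $\beta_v(r)$ number $\left(\hw\cdot 2^i/s\right)^{O(1)}$ by your own local-sparsity bound (or, per center, the bad radius window has relative width $r/s$ rather than $r/2^i$); either way, summing down to $s=\rho2^i$ yields $\sigma=(\hw/\rho)^{O(1)}$, not $(\hw\log(1/\rho))^{O(1)}$. This is not a cosmetic loss: in the proof of \cref{thm:main-alg} one must choose $\rho\le\eps^{O(q)}/\sigma$ while $\sigma$ depends on $\rho$, and this circularity is only resolvable because $\sigma$ depends on $\rho$ merely logarithmically -- with $\sigma=(\hw/\rho)^{O(1)}$ the constraint $\rho\lesssim\eps\rho^{O(1)}/\hw^{O(1)}$ has no admissible solution. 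The paper sidesteps both problems at once by pushing all the randomness into (i) the single uniform threshold $\mu$ (which handles cuts involving a ``big'' singleton child town, since then $r>\mu2^i$) and (ii) a \cref{lem:talwar-decomp}-decomposition of the hub set $X_T$, whose scaling factor is $2^{O(d)}$ with $d=O(\log(\hw\log(1/\rho)))$. Your large-distance preciseness argument and the conciseness count are fine in spirit (they lean on the same doubling-dimension-of-hubs facts the paper routes through \cref{thm:apx-core-hubs} and \cref{prop:doub:net}), but as written the construction does not establish the small-distance preciseness and the claimed value of $\sigma$ together.
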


Our construction relies on the \emph{town decomposition} from 
\citep{FeldmannFKP15}, which is a laminar family with the following properties 
(see \cref{sec:prelim} for formal definitions and lemmas). If $\mc{T}$ is a town 
decomposition of a metric $(V,\dist)$, then every $T\in\mc{T}$ is a subset of 
$V$ and is called a \emph{town}, every vertex $v\in V$ is contained in at least 
one town, and also the whole set $V$ is a town of $\mc{T}$. Similar to 
hierarchical decompositions, the laminar family $\mc{T}$ thus decomposes $V$. If 
the metric is given by a graph of highway dimension~$\hw$, for a given $\rho > 
0$ every town $T\in\mc{T}$ has a set $X_T$ of \emph{hubs} with doubling 
dimension $O(\log(\hw \log(1/\rho)))$, such that for any two vertices $u$ and 
$v$ in different child towns of $T$, there is a hub $x\in X_T$ such that 
$\dist(u,x)+\dist(x,v)\leq (1+2\rho)\cdot \dist(u,v)$.

This hub set $X_T$ is similar to the portal set of \cref{lem:talwar-decomp}, but 
has some fundamental differences: the first one is that the town decomposition 
is deterministic, and so it may happen that a client and its facility are cut at 
a very high level --- something that happens only with tiny probability in the 
doubling setting thanks to the scaling probability. Another main difference is 
that the size of $X_T$ might be unbounded. As a consequence, it cannot be 
directly used as a portal set in a dynamic program. To deal with this, we 
combine the town decomposition with a hierarchical decomposition of each set 
$X_T$ according to \cref{lem:talwar-decomp}, to build an \emph{interface} as 
stated in \cref{lem:decomp}. 

A further notable difference to portals is that the preciseness property of the 
resulting interface is weaker. In particular, while there is a portal close 
to each vertex of a part, the hubs (and consequently the interface points) can 
be far from some vertices as long as they lie close to the shortest path to 
other vertices. This means that no analogue to near-optimal portal-respecting 
paths exist (see \cref{ap:hwdef}). Instead, when connecting a client $c$ with a 
facility~$f$ we need to use the interface point of~$I_B$, provided by the 
preciseness property of \cref{lem:decomp}, that lies close to the shortest path 
between $c$ and~$f$ for the lowest level part $B$ containing both $c$ and $f$. 
This shifts the perspective from externally connecting vertices of a part to 
vertices outside a part, as done for portals, to internally connecting vertices 
of parts, as done here.

As a consequence, we develop a dynamic program, which follows more or less 
standard techniques as for instance given in 
\cite{AroraRagRao98,kolliopoulos2007nearly}, but needs to handle the weaker 
preciseness property of the interface. The main idea is to guess the distances 
from interface points to facilities while recursing on the decomposition 
$\mc{D}$ of \cref{lem:decomp}. The runtime of this algorithm is thus 
exponential in the number of interface points. Thanks to the techniques 
developed by \citet{DBLP:conf/focs/SaulpicCF19} as described above, we can 
assume that this number is constant for clustering problems. However, due to the 
shifted perspective towards internally connecting vertices of parts, the runtime 
of the dynamic program also is exponential in the \emph{total} number of levels. 
It can be shown though that it suffices to compute a solution on a carefully 
chosen subset $W$ of the metric for which only a logarithmic number of levels of 
the decomposition need to be considered. Thus the overall runtime is bounded by 
some constant raised to a logarithm, which is polynomial.

\subsection{Outline}

After defining the concepts we use and stating various structural lemmas in 
\cref{sec:prelim}, we show how to incorporate our decomposition into the 
framework of \citep{DBLP:conf/focs/SaulpicCF19}. The proof of \cref{lem:decomp} 
is then presented in \cref{sec:decomp}. The formal algorithm can be found in 
\cref{sec:alg}. We conclude with the hardness proof of \cref{thm:hardness} in 
\cref{sec:hard}.

\section{Preliminaries}\label{sec:prelim}

\subparagraph*{On doubling metrics.}
The \emph{doubling dimension} of a metric is the smallest integer~$d$ such that 
for any $r>0$ and $v\in V$, the ball $\beta_v(2r)$ of radius $2r$ around $v$ can 
be covered by at most $2^d$  balls of half the radius $r$. A doubling metric is 
a metric space where the doubling dimension is constant. In those spaces one 
can show the existence of small \emph{nets}:

\begin{definition}\label{def:net}
A \emph{$\delta$-net} of a metric $(V,\dist)$ is a subset of nodes $N\subseteq 
V$ with the property that every node in $V$ is at distance at most $\delta$ from 
a net point of $N$, and each pair of net points of $N$ are at distance more 
than~$\delta$.
\end{definition}

Note that a simple greedy algorithm can compute a $\delta$-net for any given 
metric in polynomial time. In low doubling metrics these nets have the 
following useful properties, as shown by \citet{gupta2003bounded}.

\begin{lemma}[\cite{gupta2003bounded}]\label{prop:doub:net}
Let $(V,\dist)$ be a metric space with doubling dimension $d$. If $N\subseteq 
V$ is a $\delta$-net of diameter at most $D$, then $|N| \leq 2^{d \cdot \lceil 
\log_2 (D/\delta)\rceil}$. Moreover, any subset $W\subseteq V$ has doubling 
dimension at most $2d$.
\end{lemma}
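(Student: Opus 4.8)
The plan is to prove the two assertions in turn, each by iterating the defining covering property of a doubling metric.

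\emph{The size bound on $N$.} A $\delta$-net is in particular a set whose points are pairwise at distance strictly more than $\delta$, so it suffices to bound the size of any such set $Y$ of diameter at most $D$. Fix an arbitrary $v\in Y$; since $\diam(Y)\le D$ we have $Y\subseteq\beta_v(D)$. Now I would iterate the doubling property: a single application replaces a ball of radius $2r$ by $2^d$ balls of radius $r$, so repeatedly halving the radius starting from $\beta_v(D)$ yields, after about $\log_2(D/\delta)$ rounds, a cover of $\beta_v(D)$ by $2^{O(d\log_2(D/\delta))}$ balls of radius at most $\delta/2$; with the careful accounting of \cite{gupta2003bounded} the number of balls is at most $2^{d\lceil\log_2(D/\delta)\rceil}$. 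Finally the separation of $Y$ enters: a ball of radius at most $\delta/2$ has diameter at most $\delta$ and hence contains at most one point of $Y$, so $|Y|$ is bounded by the number of covering balls, i.e.\ $2^{d\lceil\log_2(D/\delta)\rceil}$. The degenerate case $D\le\delta$ is trivial since then $|Y|\le 1$.

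\emph{The doubling dimension of $W$.} Fix $v\in W$ and $r>0$; I need to cover the subspace ball $\beta_v(2r)\cap W$ by few subspace balls of radius $r$. Inside $V$, the ball $\beta_v(2r)$ is covered by $2^d$ balls of radius $r$, and each of these in turn by $2^d$ balls of radius $r/2$, so $\beta_v(2r)$ is covered by at most $2^{2d}$ balls $\beta_{x_1}(r/2),\dots,\beta_{x_m}(r/2)$ with $x_j\in V$ and $m\le 2^{2d}$. For every index $j$ with $\beta_{x_j}(r/2)\cap W\neq\emptyset$ I pick a witness $w_j$ in that intersection; by the triangle inequality each $w\in\beta_{x_j}(r/2)\cap W$ satisfies $\dist(w_j,w)\le r$, so $\beta_{x_j}(r/2)\cap W$ is contained in the radius-$r$ subspace ball around $w_j$. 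Hence $\beta_v(2r)\cap W$ is covered by at most $2^{2d}$ subspace balls of radius $r$, and since $v$ and $r$ were arbitrary this is exactly the statement that $W$ has doubling dimension at most $2d$.

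Neither step is deep; the only delicate point is the constant in the first part, namely matching the number of doubling rounds with the $\delta$-separation so that no covering ball captures two net points and any boundary rounding is absorbed into $\lceil\log_2(D/\delta)\rceil$. This is the accounting carried out in \cite{gupta2003bounded}; for our later uses even the cruder bound $2^{O(d\log_2(D/\delta))}$ coming directly from the argument above would suffice, so I would not dwell on optimising the constant.
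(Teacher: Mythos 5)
The paper does not prove this lemma at all---it is imported verbatim from \cite{gupta2003bounded}---so there is no internal proof to compare against; judged on its own, your argument is the standard one and is sound. The packing bound (iterate the doubling property from $\beta_v(D)$ down to balls small enough to isolate net points, then count) and the subset bound (cover $\beta_v(2r)$ by $2^{2d}$ balls of radius $r/2$ in $V$, then recenter each nonempty intersection with $W$ at a witness point of $W$, enlarging the radius to $r$ via the triangle inequality) are exactly the textbook proofs, and the second part is completely rigorous as written. The only soft spot is the one you flag yourself: with the ball-based definition of doubling used in this paper, $\lceil\log_2(D/\delta)\rceil$ halvings only get you to balls of radius $\delta$, which can still contain two net points at distance in $(\delta,2\delta]$, so the direct iteration gives $2^{d(\lceil\log_2(D/\delta)\rceil+1)}$; the cleaner constant $2^{d\lceil\log_2(D/\delta)\rceil}$ comes out if one works with the diameter-based formulation of doubling (sets of diameter $2r$ covered by $2^d$ sets of diameter $r$), which is the accounting in \cite{gupta2003bounded}. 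You are right that this discrepancy is immaterial here: the lemma is only invoked (in \cref{lem:interface}) to conclude $|I_B|\leq (\hw/\rho)^{O(1)}$, for which even the cruder $2^{O(d\log_2(D/\delta))}$ bound suffices.
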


\subparagraph*{On highway dimension.}
For simplicity we will set $c=8$ in \cref{dfn:hd} throughout this paper, even if 
all claimed results are also true for other values of $c$. When we refer to a 
metric as having highway dimension $\hw$, we mean that it is the shortest-path 
metric of a graph of highway dimension $\hw$. A \emph{laminar family} of $V$ is 
a set system with universe $V$ in which no two sets cross, i.e., any two sets 
are either disjoint or one set is contained in the other. This naturally gives 
rise to a rooted tree structure on the sets, and we thus refer to proper subsets 
of a set as its \emph{descendants} and to inclusion-wise maximal proper subsets 
as its \emph{children}.
The main result we will use about highway dimension is the existence of the 
following decomposition:

\begin{theorem}[\cite{FeldmannFKP15}]
\label{thm:apx-core-hubs}
Given a shortest-path metric $(V,\dist)$ of highway dimension~$\hw$, and $\rho > 
0$, there exists a polynomial-time computable deterministic laminar 
family~$\mc{T}$ of $V$, called the \emph{town decomposition}, where every set 
$T\in\mc{T}$ is called a \emph{town}. For every vertex $v\in V$ there is a 
singleton town~$\{v\}\in\mc{T}$, and also $V\in\mc{T}$. Every town $T$ has a set 
of \emph{hubs}\footnote{called \emph{approximate core hubs} in 
\cite{FeldmannFKP15}.} $X_T\subseteq T$ with the following properties:
\begin{enumerate}[a.]
\item \textbf{doubling}: the doubling dimension of $X_T$ is $d = 
O(\log(\hw\log(1/\rho)))$, and
\item \textbf{precise}: for any two vertices $u$ and $v$ in different child 
towns of $T$, there is a vertex $x\in X_T$ such that $\dist(u,x)+\dist(x,v)\leq 
(1+2\rho)\cdot \dist(u,v)$.
\end{enumerate}
\end{theorem}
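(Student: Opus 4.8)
The plan is to derive $\mc{T}$ from a multiscale family of \emph{shortest-path covers}, build the towns by a bottom-up sweep over distance scales, and then read off each hub set $X_T$ from the covers at the scales relevant to $T$; everything will be canonical and greedy, hence deterministic and polynomial-time computable. First I would fix, for every power of two $r>0$, a set $C_r\subseteq V$ that hits every shortest path of length more than $r$ and is \emph{locally sparse}, meaning $|C_r\cap\beta_v(8r)|\le\hw^{O(1)}$ for all $v\in V$. The existence of a hitting set of size $\hw$ for the long shortest paths inside each individual ball $\beta_v(8r)$ is exactly the hypothesis of \cref{dfn:hd} with $c=8$; to stitch these local certificates into one globally defined $C_r$ while keeping a uniform local-sparsity bound, I would union the local hitting sets taken around the points of an $8r$-net of $V$ and argue that only $\hw^{O(1)}$ net points can be relevant to any fixed ball. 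This stitching incurs a polynomial blow-up in $\hw$ but nothing worse, and computing the $O(\log n)$ relevant covers takes polynomial time.

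\textbf{Towns.}
I would then sweep the scales $i=0,1,2,\dots$ upward, maintaining a partition $\mc{P}_i$ of $V$: start with $\mc{P}_{-1}$ the singletons, and at scale $i$ merge (transitively) any two parts of $\mc{P}_{i-1}$ lying within distance $2^i$ of one another. Declare a part $C\in\mc{P}_i$ to be a \emph{town at scale $i$} whenever it is isolated, i.e.\ $\dist(C,V\setminus C)>2^i$; such a part then survives unchanged until, at some larger scale, its isolation gap is finally bridged and it merges. Taking $\mc{T}$ to be the set of all declared towns yields a laminar family whose tree structure encodes the merges, and the singletons and $V$ qualify trivially, so all required towns are present. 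The two structural properties I would establish by induction along the sweep are: \emph{(i)} every town $T$ at scale $i$ has $\diam(T)=O(2^i)$; and \emph{(ii)} if $T'$ is a child town of $T$ and $T$ has scale $i$, then $\dist(T',T\setminus T')=\Omega(2^i)$, so any two vertices lying in distinct children of $T$ are at mutual distance $\Theta(2^i)$. With small enough constants, property \emph{(i)} also forces any shortest path between two vertices of $T$ to remain inside $T$, since leaving $T$ would cost a detour larger than $\diam(T)$.

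\textbf{Hub sets, preciseness and doubling.}
For a town $T$ at scale $i$ I would set $X_T$ to be the union, over the $O(\log(1/\rho))$ scales $j$ with $2^j=\Theta(\rho\cdot 2^i)$ up to $j=i$, of a $\Theta(\rho 2^{i})$-net $N_j$ of $C_{2^j}$, intersected with $T$ and with the $O(2^i)$-neighbourhood of $T$; since $T$'s isolation gap exceeds $2^i\gg\rho 2^{i}$, any vertex within $O(\rho 2^{i})$ of $T$ already lies in $T$, so this intersection discards nothing we need. For preciseness, take $u,v$ in distinct children of $T$; by \emph{(ii)} we have $\ell:=\dist(u,v)=\Theta(2^i)$, and a shortest $u$--$v$ path $P$ lies inside $T$. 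Choosing a sub-path of $P$ of length $s=\Theta(\rho\ell)$ (possibly after a routine subdivision of long edges, which preserves the highway dimension; this sub-path is a shortest path fitting in a ball of radius $8\cdot(s/2)$ around its endpoint), the cover $C_{s/2}$ hits it at some vertex $y\in P$; the net point $x\in N_{s/2}\subseteq X_T$ near $y$ satisfies $\dist(x,P)\le\dist(x,y)=O(\rho\ell)$, and writing $q\in P$ for the point of $P$ closest to $x$ and using $\dist(u,q)+\dist(q,v)=\dist(u,v)$ we obtain $\dist(u,x)+\dist(x,v)\le\dist(u,v)+2\dist(x,q)\le(1+2\rho)\dist(u,v)$ after tuning the net-radius constants. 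For the doubling bound, $X_T$ lies in the union of $O(\log(1/\rho))$ locally sparse sets restricted to a ball of radius $O(2^i)$, so at the relevant resolution every ball meets it in $\hw^{O(1)}\cdot O(\log(1/\rho))$ points; the packing-to-doubling conversion used in \cref{prop:doub:net} then yields doubling dimension $O(\log(\hw\log(1/\rho)))$, as claimed in \cref{thm:apx-core-hubs}.

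\textbf{Main obstacle.}
The crux is the town construction itself: the merge threshold and the isolation threshold must be calibrated so that town diameters stay $O(2^i)$ — which amounts to bounding the length of any chain of merges — while simultaneously keeping inter-child gaps $\Omega(2^i)$ and keeping geodesics inside towns. Making these constants mutually consistent, together with the bookkeeping that keeps the local sparsity of the $C_r$'s at $\hw^{O(1)}$ and hence the final doubling dimension at $O(\log(\hw\log(1/\rho)))$, is where essentially all the effort lies; once the town structure with properties \emph{(i)}--\emph{(ii)} is in place, preciseness is the short triangle-inequality computation above.
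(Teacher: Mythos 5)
First, note that the paper does not prove \cref{thm:apx-core-hubs} at all: it is imported verbatim from \cite{FeldmannFKP15}, and the only thing proved here (in \cref{ap:proof}) is \cref{lem:townproperties}, by citing Lemmas~3.2 and~3.3 of that paper. So your proposal has to be judged against the known construction, in which towns at scale $r_i=(c/4)^i$ are defined \emph{relative to locally sparse shortest-path covers} (a set is a town at scale $r$ when all its vertices are far from every scale-$r$ hub); it is exactly this hub-based definition that simultaneously forces $\diam(T)\le r_i$ and $\dist(T,V\setminus T)>r_i$. Your purely metric, single-linkage-style sweep does not have these properties, and your property \emph{(i)} is false as stated: transitive merging at threshold $2^i$ can create parts of diameter $n\cdot 2^i$. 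For a unit-weight path, your sweep collapses everything at scale $0$, so $\mc{T}$ consists of the singletons and $V$, and $V$ is a ``scale-$0$ town'' of diameter $n-1$. You flag this calibration as ``the crux'', but it is not a matter of tuning constants --- no choice of merge/isolation thresholds makes an agglomerative construction yield both $\diam(T)=O(2^i)$ and isolation in general; the cover-based definition is what resolves it.

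This gap propagates into the hub sets. Your preciseness and doubling arguments rest on the claim that any two vertices in distinct child towns of $T$ are at distance $\Theta(2^i)$, so that $X_T$ only needs nets of covers from the $O(\log(1/\rho))$ scales in $[\Theta(\rho 2^i),2^i]$. That claim is false even for the correct town decomposition: a town (e.g.\ the root town $V$) has children sitting at many different distance scales, and \cref{lem:townproperties} only gives $\diam(T)<\dist(T,V\setminus T)$ and generation-wise diameter decay, not comparability of inter-child distances. Consequently $X_T$ must contain (netted) hubs from essentially \emph{all} scales up to $\diam(T)$, and then the doubling bound $O(\log(\hw\log(1/\rho)))$ is no longer a union bound over $O(\log(1/\rho))$ locally sparse sets --- controlling the doubling dimension of this multi-scale union is precisely the technical heart of the proof in \cite{FeldmannFKP15}, and it is absent from your argument. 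A smaller but real problem is your stitching step for the covers $C_r$: the assertion that ``only $\hw^{O(1)}$ net points can be relevant to any fixed ball'' is a packing bound that fails in general metrics (a star has arbitrarily many $8r$-separated leaves within radius $16r$ of the centre, and such graphs have highway dimension $1$); locally sparse shortest-path covers do follow from \cref{dfn:hd}, but by a different argument. Your triangle-inequality computation for preciseness is fine as a local step, but it only goes through once the correct town structure and the full multi-scale hub set are in place.
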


The town decomposition behaves differently from those in 
\cref{lem:talwar-decomp,lem:decomp} in several ways. The main properties we will 
need here are given by the following lemma. Given a sequence of towns 
$T_0,\ldots, T_g$ of the towns decomposition such that $T_\ell$ is a child town 
of $T_{\ell-1}$ for each $\ell\in\{1,\ldots,g\}$, we call $T_g$ a 
\emph{$g$\textsuperscript{th}-generation descendant} of $T_0$. In particular, a 
child town is a $1$\textsuperscript{st}-generation descendant. The given 
property on these descendants is implicit in \cite{FeldmannFKP15} and we give a 
proof outline in \cref{ap:proof}.\footnote{We note that in the conference 
version of this paper~\cite{DBLP:conf/esa/FeldmannS20} it was erroneously 
claimed that for any child town $T'$ of a town $T$ we have 
$\diam(T')<\diam(T)/2$, while \cref{lem:townproperties} only gives 
$\diam(T')<\diam(T)$ in this case.}

\begin{restatable}[\cite{FeldmannFKP15}]{lemma}{lemtownproperties}
\label{lem:townproperties}
For any $T\in\mc{T}$ we have $\diam(T)<\dist(T,V\setminus T)$. Furthermore, for 
any positive integer $g$, if~$T'$ is a $g$\textsuperscript{th}-generation 
descendant town of $T$, then $\diam(T')<\diam(T)/2^{g-1}$.
\end{restatable}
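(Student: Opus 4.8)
The first statement of \cref{lem:townproperties}, namely $\diam(T)<\dist(T,V\setminus T)$, is part of the guarantee that comes with the town decomposition of \cref{thm:apx-core-hubs} in \citet{FeldmannFKP15}: every town is carved out so as to be separated from the rest of the metric by strictly more than its own diameter (this is also what makes $\mc{T}$ laminar), so for this part I would simply quote their result. The substance is the bound on $g$\textsuperscript{th}-generation descendants, and the plan there is to track the \emph{scale} that \citet{FeldmannFKP15} implicitly attach to every town through the geometric sequence of scales that their construction sweeps over. Concretely, to a non-singleton town $T$ assign the integer $i(T):=\lceil\log_2\diam(T)\rceil$, so that $2^{i(T)-1}<\diam(T)\le 2^{i(T)}$; a singleton town is a leaf of the laminar tree, so I would dispose of that case separately, the claimed inequality being trivial there. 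The fact I would extract from the construction --- which recursively partitions each town, via a shortest-path cover at the appropriate scale, into towns of strictly smaller diameter --- is the quantitative monotonicity statement: a child town $T'$ of $T$ satisfies $\diam(T')\le 2^{i(T)-1}$, i.e.\ $i(T')\le i(T)-1$.

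Granting this monotonicity, the second statement is a one-line induction on $g$. Write $T=T_0\supsetneq T_1\supsetneq\cdots\supsetneq T_g=T'$ for the chain witnessing that $T'$ is a $g$\textsuperscript{th}-generation descendant of $T$, so $T_\ell$ is a child of $T_{\ell-1}$ for each $\ell\in\{1,\dots,g\}$. If $T'$ is a singleton then $\diam(T')=0<\diam(T)/2^{g-1}$ and we are done; otherwise all of $T_0,\dots,T_g$ are non-singletons, iterating monotonicity along the chain gives $i(T_g)\le i(T_0)-g$, and hence $\diam(T')=\diam(T_g)\le 2^{i(T_g)}\le 2^{i(T_0)-g}=2^{i(T_0)-1}/2^{g-1}<\diam(T)/2^{g-1}$, where the final strict inequality is just $2^{i(T_0)-1}<\diam(T_0)$.

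The only step that needs real work is the monotonicity statement $i(T')\le i(T)-1$ for a child $T'$, which is merely implicit in \citet{FeldmannFKP15}; I would establish it by following their scale-by-scale cutting procedure and pinning down, for each town, the scale at which it is subdivided into its children, which simultaneously bounds the children's diameters from above and the town's own diameter from below. I would also flag explicitly that one cannot upgrade this to a factor-two drop within a single generation: a child $T'$ may retain almost the entire diameter of its parent, with only its scale $\lceil\log_2\diam(\cdot)\rceil$ forced down by one, so $\diam(T')<\diam(T)/2$ is false in general (this was erroneously claimed in the conference version~\cite{DBLP:conf/esa/FeldmannS20}). It is precisely the interplay of the power-of-two upper bound $\diam(T')\le 2^{i(T')}$ with the power-of-two lower bound $2^{i(T)-1}<\diam(T)$, compounded over $g\ge 2$ generations, that yields the factor $2^{g-1}$, which is why the induction above seems to be the natural route.
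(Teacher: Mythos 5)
Your proposal is correct and follows essentially the same route as the paper's own proof sketch: the paper also attaches a geometric scale index to each town (the construction level $i$ of \citet{FeldmannFKP15}, with $r_i=(c/4)^i=2^i$ for $c=8$, so that $\diam(T_0)\in(r_{i-1},r_i]$), invokes the fact that a child town's level drops by at least one, compounds this over $g$ generations to get $\diam(T_g)\le r_{i-g}$, and converts back using the lower bound $\diam(T_0)>r_{i-1}$ --- exactly your interplay of the power-of-two upper and lower bounds that loses one factor of two and yields $2^{g-1}$. Your remark that the per-generation factor-two drop fails is likewise the point of the paper's footnote correcting the conference version.
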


\subparagraph*{On how to incorporate our decomposition into the framework of  
\citep{DBLP:conf/focs/SaulpicCF19}.}
Assume we are given an instance $\mc{I}$ of \cluster or \fl on some metric 
$(V,\dist)$, together with a hierarchical decomposition $\mc{D}$ of the metric 
with the properties listed in \cref{lem:decomp}. We start by defining the 
\emph{badly cut} clients. In the following, we fix an optimum solution $\opt$ 
and an approximate solution~$L$, and we define $\offset = 
\log_2(\sigma(q+1)^q/\eps^{q+1})$. Note that we will later use a constant 
approximation for $L$ (cf.~\cref{lem:coarse}), but for now the approximation 
ratio does not matter.

\begin{definition}[badly cut 
\cite{DBLP:conf/focs/SaulpicCF19}]\label{def:badlycut}
Let $(V,\dist)$ be a metric of an instance $\mc{I}$ of \cluster or \fl, $\mc{D}$ 
be a hierarchical decomposition of the metric with scaling probability 
factor~$\sigma$, and~$\eps > 0$. If $L_v$ is the distance from~$v$ to the 
closest facility of an approximate solution~$L$ to $\mc{I}$, then a client $c$ 
is \emph{badly cut w.r.t.~$\mc{D}$} if the ball $\beta_c(3L_c / \eps)$ is cut at 
some level $i$ greater than $\log_2(3L_c/ \eps) + \offset$.

Similarly, if $\opt_v$ is the distance from $v$ to the closest facility of the 
optimum solution $\opt$ of $\mc{I}$, then a facility $f\in L$ is \emph{badly 
cut w.r.t.~$\mc{D}$} if $\beta_f(3\opt_f)$ is cut at some level~$i$ greater 
than $\log_2 (3\opt_f) + \offset$.
\end{definition}

Given an instance $\mc{I}$ of \cluster or \fl and a decomposition $\mc{D}$ of 
the metric, a new instance $\mc{I}_\mc{D}$ is computed to get rid of badly cut 
clients. The instance $\mc{I}_\mc{D}$ is built from~$\mc{I}$ by moving clients 
that are badly cut w.r.t.~$\mc{D}$ to their closest facility in $L$. More 
concretely, let $\chi_\mc{I}$ and $\chi_{\mc{I}_\mc{D}}$ be the demand functions 
of $\mc{I}$ and~$\mc{I}_\mc{D}$, respectively. Initially we let $\mc{I}_\mc{D}$ 
be a copy of $\mc{I}$, so that in particular $\chi_{\mc{I}_\mc{D}}=\chi_\mc{I}$. 
Then, for each client $c$ of $\mc{I}$ that is badly cut in $L$ w.r.t.~$\mc{D}$, 
if $L(c)$ denotes the closest facility of $L$ to~$c$, in $\mc{I}_\mc{D}$ we set 
$\chi_{\mc{I}_\mc{D}}(c)=0$ and increase $\chi_{\mc{I}_\mc{D}}(L(c))$ by the 
value of $\chi_\mc{I}(c)$ in $\mc{I}$. For any client~$c$ of $\mc{I}_\mc{D}$ we 
denote by $\tilde{c}$ the original position of this client in $\mc{I}$, i.e., if 
$\tilde{c}$ is a badly cut client of $\mc{I}$ then $c=L(\tilde{c})$ and 
otherwise $c=\tilde{c}$.
The set $F$ of potential centers in unchanged, and thus any solution of $\mc{I}$ 
is a solution of $\mc{I}_\mc{D}$, and vice versa. Note that $\mc{I}_\mc{D}$ does 
not contain any badly cut client w.r.t.~$\mc{D}$, and that the definition of 
$\mc{I}_\mc{D}$ depends on the randomness of $\mc{D}$.

To describe the properties we obtain for the new instance, given a solution $S$ 
to any instance $\mc{I}_0$ of \cluster or \fl, we define 
$\cost_{\mc{I}_0}(S)=\sum_{v\in V}\chi_{\mc{I}_0}(v)\cdot\dist(v,S)^q$ to be the 
cost incurred by only the distances to the facilities. 
Note that for \cluster 
this coincides with the objective function, while for \fl we need to also add 
the facility opening costs to $\cost_{\mc{I}_0}(S)$ to obtain the objective function. 
Given some $\eps>0$ and the computed instance $\mc{I}_\mc{D}$ from 
$\mc{I}$, we 
define
\[
\nu_{\mc{I}_\mc{D}} = \max_{\text{solution }S} \big\{\cost_\mc{I}(S) -  
(1+2\eps) \cost_{\mc{I}_\mc{D}} (S)\ ,\ (1-2\eps) \cost_{\mc{I}_\mc{D}} (S) - 
\cost_\mc{I}(S)\big\}.
\]
If $B_\mc{D}$ denotes the set of badly cut facilities (w.r.t $\mc{D}$) of the 
solution $L$ to $\mc{I}$ from which instance~$\mc{I}_\mc{D}$ is constructed, we 
say that $\mc{I}_\mc{D}$ has \emph{small distortion w.r.t.~$\mc{I}$} if 
$\nu_{\mc{I}_\mc{D}} \leq \eps \cost_\mc{I}(L)$, and there exists a 
\emph{witness solution} $\hat S\subseteq F$ that contains $B_\mc{D}$ and for 
which
$
\cost_{\mc{I}_\mc{D}}(\hat S) \leq (1+O(\eps))\cost_{\mc{I}}(\opt) + O(\eps) 
\cost_{\mc{I}}(L).
$
Moreover, in the case of \fl, $\hat{S}=\opt\cup B_\mc{D}$ and $\sum_{f \in 
B_{\mc{D}}} w_f \leq \eps\cdot \sum_{f \in L} w_f$. 

Based on these definitions, we now state the main tool we use from 
\cite{DBLP:conf/focs/SaulpicCF19}, and which exploits the scaling probability 
of our decomposition in \cref{lem:decomp} to obtain the required 
structure.

\begin{lemma}[\cite{DBLP:conf/focs/SaulpicCF19}]\label{lem:struct}
Let $(V,\dist)$ be a metric, and $\mc{D}$ be a randomized hierarchical 
decomposition of $(V,\dist)$ with scaling probability factor $\sigma$. Let 
$\mc{I}$ be an instance of \cluster or \fl on $(V,\dist)$, with optimum solution 
$\opt$ and approximate solution $L$.
For any (sufficiently small) $\eps>0$, with probability at least $1-\eps$ 
(over~$\mc{D}$), the instance~$\mc{I}_\mc{D}$ constructed from $\mc{I}$ and $L$ 
as described above has small distortion with a witness solution~$\hat S$. 
Furthermore, for every client $\tilde{c}$ of $\mc{I}$ and corresponding client 
$c$ of $\mc{I}_\mc{D}$, client $c$ is cut by~$\mc{D}$ from its closest facility 
in $\hat S$ at level at most 
$\log_2(3L_{\tilde{c}}/\eps+4\opt_{\tilde{c}})+\offset$.
\end{lemma}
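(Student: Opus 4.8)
The plan is to run the argument of \citet{DBLP:conf/focs/SaulpicCF19} essentially verbatim, since it uses the decomposition only through its scaling-probability factor $\sigma$ and is otherwise oblivious to how $\mc{D}$ is built. Everything hinges on one elementary estimate: summing the scaling probability of \cref{lem:decomp} over all levels exceeding $\log_2 r+\offset$ gives a geometric series bounded by $2\sigma r/(r\,2^{\offset})=2\sigma/2^{\offset}=2\eps^{q+1}/(q+1)^q$, using $2^{\offset}=\sigma(q+1)^q/\eps^{q+1}$. Applying this with $(v,r)=(c,3L_c/\eps)$ bounds the probability that a client $c$ is badly cut, and with $(v,r)=(f,3\opt_f)$ the probability that a facility $f\in L$ is badly cut, by $2\eps^{q+1}/(q+1)^q$ in both cases.

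First I would bound $\nu_{\mc{I}_\mc{D}}$. For a fixed solution $S$, the instances $\mc{I}$ and $\mc{I}_\mc{D}$ differ only in that each badly-cut client $\tilde c$ has had its demand moved to $L(\tilde c)$, at distance $L_{\tilde c}$. Combining $\dist(\tilde c,S)\le\dist(L(\tilde c),S)+L_{\tilde c}$ and its symmetric counterpart with the relaxed triangle inequality $(a+b)^q\le(1+\delta)^{q-1}a^q+(1+1/\delta)^{q-1}b^q$ for $\delta=\Theta(\eps/q)$ (chosen so $(1+\delta)^{q-1}\le 1+2\eps$ once $\eps$ is small), a direct calculation yields both $\cost_\mc{I}(S)-(1+2\eps)\cost_{\mc{I}_\mc{D}}(S)\le\Phi$ and $(1-2\eps)\cost_{\mc{I}_\mc{D}}(S)-\cost_\mc{I}(S)\le\Phi$, where $\Phi=(1+1/\delta)^{q-1}\sum_{\tilde c\text{ badly cut}}\chi_\mc{I}(\tilde c)L_{\tilde c}^q$; hence $\nu_{\mc{I}_\mc{D}}\le\Phi$. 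Since $\sum_v\chi_\mc{I}(v)L_v^q=\cost_\mc{I}(L)$, the estimate above gives $\mathbb{E}[\Phi]\le(1+1/\delta)^{q-1}\cdot(2\eps^{q+1}/(q+1)^q)\cdot\cost_\mc{I}(L)$, and the constant defining $\offset$ is calibrated precisely so that this is at most $\eps^2\cost_\mc{I}(L)$; Markov's inequality then gives $\nu_{\mc{I}_\mc{D}}\le\eps\cost_\mc{I}(L)$ with probability at least $1-\eps/3$.

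Next I would exhibit the witness solution. For \fl, set $\hat S=\opt\cup B_\mc{D}$: adding facilities only decreases distances, so $\cost_{\mc{I}_\mc{D}}(\hat S)\le\cost_{\mc{I}_\mc{D}}(\opt)\le(\cost_\mc{I}(\opt)+\nu_{\mc{I}_\mc{D}})/(1-2\eps)\le(1+O(\eps))\cost_\mc{I}(\opt)+O(\eps)\cost_\mc{I}(L)$ by the distortion bound, and $\mathbb{E}[\sum_{f\in B_\mc{D}}w_f]\le(2\eps^{q+1}/(q+1)^q)\sum_{f\in L}w_f$, so $\sum_{f\in B_\mc{D}}w_f\le\eps\sum_{f\in L}w_f$ with probability at least $1-\eps/3$ by Markov. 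For \cluster, the cardinality constraint rules out simply adding $B_\mc{D}$; instead $\hat S$ is obtained from $\opt$ by swapping each badly-cut facility in place of a closest optimum facility, and a charging argument (as in \citet{DBLP:conf/focs/SaulpicCF19}, exploiting that $|B_\mc{D}|$ is small --- which holds with probability at least $1-\eps/3$ by Markov since $\mathbb{E}[|B_\mc{D}|]\le(2\eps^{q+1}/(q+1)^q)|L|$ and $|L|\le k$ --- and that each of its facilities is close to $\opt$) yields $|\hat S|\le k$ together with the same cost bound.

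Finally, the cut-level statement holds deterministically, for every $\mc{D}$, by a short case analysis resting on two facts: if $\beta_x(R)$ is cut at level $j$ then every pair of points inside it is cut at level $\le j$, and this cut level is monotone in $R$. Write $f=L(\tilde c)$; the triangle inequality gives $\opt_f\le L_{\tilde c}+\opt_{\tilde c}$, hence $3\opt_f\le 3L_{\tilde c}/\eps+4\opt_{\tilde c}$. If $\tilde c$ is badly cut then $c=f$: if $f\in B_\mc{D}\subseteq\hat S$ the cut level is trivial, and otherwise $f$ is not a badly-cut facility, so $f$ and its closest $\hat S$-facility, both lying in $\beta_f(3\opt_f)$ (as $\dist(f,\hat S)\le\opt_f$ for \fl, and analogously for \cluster), are cut at level $\le\log_2(3\opt_f)+\offset$. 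If $\tilde c$ is not badly cut then $c=\tilde c$ and $\beta_{\tilde c}(3L_{\tilde c}/\eps)$ is cut at level $\le\log_2(3L_{\tilde c}/\eps)+\offset$; placing $c$ and its closest $\hat S$-facility inside either $\beta_{\tilde c}(3L_{\tilde c}/\eps)$ or $\beta_f(3\opt_f)$ --- according to whether $f$ is a badly-cut facility (in which case $f\in\hat S$ with $\dist(c,f)=L_{\tilde c}$) or not --- again bounds the cut level by $\log_2(3L_{\tilde c}/\eps+4\opt_{\tilde c})+\offset$. A union bound over the three probabilistic events above completes the proof. The steps needing the most care are this case analysis, which must juggle the two distinct notions of being badly cut (a client with respect to $L$, a facility of $L$ with respect to $\opt$), and --- for \cluster --- checking that the swap-based witness solution fits within $k$ centers without spoiling its cost or its cut-level behaviour; calibrating $\offset$ so that $\mathbb{E}[\Phi]\le\eps^2\cost_\mc{I}(L)$ is comparatively routine.
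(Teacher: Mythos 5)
The paper does not prove \cref{lem:struct} at all --- it is imported verbatim from \cite{DBLP:conf/focs/SaulpicCF19}, so there is no in-paper proof to compare against. Your reconstruction faithfully follows the argument of that cited work: the geometric sum of the scaling probability beyond level $\log_2 r+\offset$ to bound the badly-cut probability by $2\eps^{q+1}/(q+1)^q$, Markov applied to the moving cost $\Phi$ (which correctly does not depend on $S$, so it uniformly bounds $\nu_{\mc{I}_\mc{D}}$) and to $\sum_{f\in B_\mc{D}}w_f$, the witness $\opt\cup B_\mc{D}$ for \fl versus the swap construction for \cluster, and the deterministic ball-containment case analysis for the cut level. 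The only place the sketch is thin is the last step, where you must place the \emph{closest} $\hat S$-facility of $c$ (at distance at most $L_{\tilde c}+\opt_f\le 2L_{\tilde c}+\opt_{\tilde c}$), not merely some facility, inside the union of the two cut balls; the slack between $3\opt_f\le 3L_{\tilde c}+3\opt_{\tilde c}$ and the $3L_{\tilde c}/\eps+4\opt_{\tilde c}$ appearing in the statement is exactly what absorbs this, so the approach goes through.
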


As a consequence of \cref{lem:struct}, a dynamic program can compute a solution 
recursively on the parts of $\mc{D}$ in polynomial time, as sketched in 
\cref{par:tech-hd} and detailed in \cref{sec:alg}.

\section{Decomposing the graph}
\label{sec:decomp}

This section is dedicated to the proof of \cref{lem:decomp}, which we restate 
here for convenience.

\lemdecomp*

To prove this, we first fix a town decomposition $\mc{T}$ of the input graph, as 
given by \cref{thm:apx-core-hubs} assuming (for technical reasons) that 
$\rho\leq 1/2$.
The general idea to construct a hierarchical decomposition~$\mc{D}$ is as 
follows. For doubling metrics, to decompose a part at level~$i$, it is enough to 
pick a random diameter $\delta \in [2^{i-2}, 2^{i-1})$ and divide the part into 
child parts of diameter~$\delta$. This is not doable in the highway dimension 
setting: if one wishes to decompose a town~$T$, it cannot divide any of the 
child towns, since it is not possible to use the hubs $X_T$ of $T$ to 
approximate paths inside one of the child towns. The high-level picture of our 
decomposition is therefore as follow. To decompose a town at level $i$, we group 
the ``small'' child towns randomly (as in the doubling decomposition), and put 
every other child town in its own subpart. As we will see, this turns out to be 
enough.

In order to decompose a town $T$, we need the following definitions.
For each child town~$T'$ of $T$ we identify the \emph{connecting hub} $x\in 
X_T$, which is some fixed closest hub of $X_T$ to $T'$, breaking ties 
arbitrarily. Moreover, given a hierarchical decomposition 
$\mc{X}_T=\{\mc{U}_0,\ldots,\mc{U}_{\lambda(X_T)}\}$ of~$X_T$, we define for 
every $i$ the \emph{connecting $i$-cluster} of a child town $T'$ of~$T$ to be 
the set~$U\in\mc{U}_\ell$ on level $\ell=\min\{i,\lambda(X_T)\}$ containing the 
connecting hub of~$T'$.
For a given subset $W\subseteq V$ we then follow the steps below, after choosing 
$\mu$ from the interval~$(0,1]$ uniformly at random (cf.~\cref{fig:decomp}):

\begin{enumerate}
 \item For each town $T\in\mc{T}$, we apply \cref{lem:talwar-decomp} to find a 
randomized hierarchical decomposition 
$\mc{X}_T=\{\mc{U}_0,\ldots,\mc{U}_{\lambda(X_T)}\}$ of the  hubs~$X_T$ of $T$.

\item In this step we fix a town $T\in\mc{T}$. Using $\mc{X}_T$, we define a 
randomized partial decomposition of~$T\cap W$ as follows.
For any~$i$ and $U\in\mc{U}_{\min\{i,\lambda(X_T)\}}$, let the set 
$A^U_i\subseteq T\cap W$ be the union of all sets $T'\cap W$ where $T'$ is a 
child town of $T$ with the following two properties:
\begin{enumerate}
 \item $U$ is the connecting $i$-cluster of $T'$, and
 \item $\dist(T',V\setminus T')\leq\mu 2^i$.
\end{enumerate}
In particular, $A^U_i$ contains towns somewhat close to $U$ due to~(a) and with 
small diameter due to~(b) and \cref{lem:townproperties}. We let $\mc{A}^T_i$ be 
the set containing every non-empty $A^U_i$.

\item Now, the hierarchical decomposition 
$\mc{D}=\{\mc{B}_0,\ldots,\mc{B}_{\lambda(W)}\}$ of $W$ can be constructed inductively as follows. 
At the highest level~$\lambda(W)$ of $\mc{D}$, $W$ is partitioned into a single 
set: $\mc{B}_{\lambda(W)}=\{W\}$. To decompose a part $B\in\mc{B}_{i+1}$ at 
level $i+1$, we do the following. Let $T\in\mc{T}$ be the inclusion-wise minimal 
town for which $B\subseteq T$. The ``small'' child towns of $T$ lying 
inside~$B$ are grouped according to step 2 (note that $\dist(T',V\setminus T')$ 
also bounds the diameter of $T'$ by \cref{lem:townproperties}), and the other 
ones form individual subparts (note though that these remaining child towns may 
also have small diameter and thus it would be misleading to call them ``big''). 
More formally, the set $\mc{B}_i$  contains every part $A\in\mc{A}^T_i$ for 
which $A\subseteq B$, and also every set $T'\cap W$, where $T'$ is a child town 
of $T$ for which $T'\cap W\subseteq B$ and $T'\cap W$ was not covered by the 
previously added parts of $\mc{A}^T_i$, i.e.,~$T'\cap W\cap A=\emptyset$ for 
every $A\in\mc{A}^T_i$. 

\end{enumerate}

To prove that the constructed decomposition $\mc{D}$ has the desired properties 
-- i.e. that it is indeed a hierarchical decomposition with parts of bounded 
diameter and small scaling probability --
we begin with some auxiliary lemmas, of which the first one 
bounds the distance of a town to its connecting hub.

\begin{figure}
\centering
\includegraphics[width=6cm]{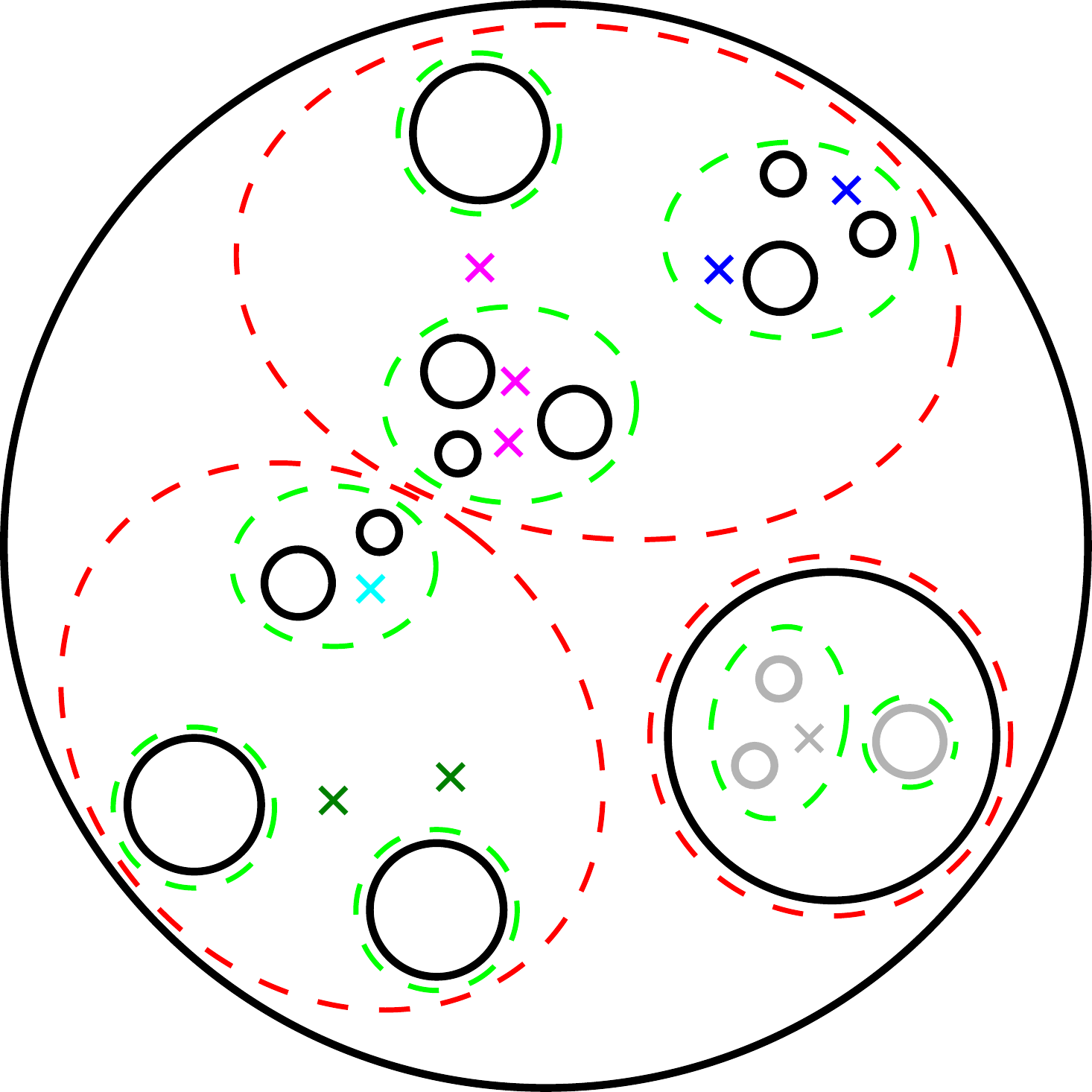}
\caption{A town and its child towns (black circles). The hubs (crosses) are 
decomposed by $\mc{X}_T$, indicated by different colours (note that in 
reality the hubs are contained in child towns, but are depicted separately here 
for clarity). Parts $B\in\mc{B}_{i+1}$ (red dashed) are decomposed into parts on 
level~$i$ (green dashed). Parts of~$\mc{B}_i$ can lie in different towns (e.g., 
the child town of $T$ containing grandchild towns in grey).}
\label{fig:decomp}
\end{figure}

\begin{lemma}%
\label{lem:town-dists}
If $T'$ is a child town of $T$ with connecting hub $x\in X_T$, then we have
$\dist(x,T')\leq (1+2\rho)\dist(T',V\setminus T')$.
\end{lemma}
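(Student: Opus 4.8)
The plan is to reduce the claim to the \emph{precise} property of the town decomposition in \cref{thm:apx-core-hubs}. Let $u\in T'$ and $v\in V\setminus T'$ be a pair of vertices realizing $\dist(T',V\setminus T')$. Suppose for a moment that $v$ lies in some child town of $T$ other than $T'$. Then \cref{thm:apx-core-hubs} yields a hub $x'\in X_T$ with $\dist(u,x')+\dist(x',v)\le(1+2\rho)\dist(u,v)=(1+2\rho)\dist(T',V\setminus T')$, hence $\dist(x',T')\le\dist(x',u)\le(1+2\rho)\dist(T',V\setminus T')$; and since the connecting hub $x$ is by definition a closest hub of $X_T$ to $T'$, we get $\dist(x,T')\le\dist(x',T')$, which is exactly the inequality we want. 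So the real content of the lemma is to argue that the minimizing pair can in fact be chosen with $v$ inside $T$, and then automatically inside a child town distinct from $T'$, because the children of $T$ partition $T$.

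To establish that, I would first note that a child town $T'$ is a proper subset of $T$, so $T\setminus T'\neq\emptyset$; pick any $w\in T\setminus T'$. Since $w$ and every vertex of $T'$ lie in $T$, we have $\dist(T',w)\le\diam(T)$, and therefore $\dist(T',T\setminus T')\le\diam(T)$. On the other hand, any vertex of $V\setminus T$ is at distance at least $\dist(T,V\setminus T)$ from $T'\subseteq T$, so $\dist(T',V\setminus T)\ge\dist(T,V\setminus T)$, and by \cref{lem:townproperties} applied to $T$ we have $\dist(T,V\setminus T)>\diam(T)$. Chaining these, $\dist(T',V\setminus T)>\diam(T)\ge\dist(T',T\setminus T')$. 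Since $V\setminus T'$ is the disjoint union of $T\setminus T'$ and $V\setminus T$, this forces $\dist(T',V\setminus T')=\dist(T',T\setminus T')$, i.e.\ the minimizing pair satisfies $u\in T'$ and $v\in T\setminus T'$. As the children of $T$ partition $T$, the vertex $v$ lies in a child town of $T$ different from $T'$, and the first paragraph concludes the argument. (The degenerate case $T=V$ is harmless: then $V\setminus T=\emptyset$, so $\dist(T',V\setminus T)=\infty$ and the same chain of inequalities holds.)

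The only slightly delicate point — and the step I expect to be the main obstacle — is this containment argument: showing that the nearest vertex to $T'$ lying outside $T'$ actually lies \emph{inside} $T$, not outside it. This is precisely where the separation guarantee $\diam(T)<\dist(T,V\setminus T)$ from \cref{lem:townproperties} is needed, combined with the observation that $T$ necessarily has a second child town supplying a witness at distance at most $\diam(T)$ from $T'$. Everything else is a one-line invocation of the precise property of \cref{thm:apx-core-hubs} together with the minimality built into the definition of the connecting hub.
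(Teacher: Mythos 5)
Your proof is correct and follows essentially the same route as the paper's: apply the precise hub property of \cref{thm:apx-core-hubs} to a pair realizing $\dist(T',V\setminus T')$ and invoke the minimality in the definition of the connecting hub. You additionally spell out the step the paper leaves implicit, namely that by \cref{lem:townproperties} the nearest vertex outside $T'$ must lie in a sibling child town of $T$, so nothing is missing.
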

\begin{proof}
Let $T''$ be the closest sibling town to $T'$, and let $u\in T'$ and $v\in T''$ 
be the vertices defining the distance from $T'$ to $T''$, i.e., 
$\dist(u,v)=\dist(T',T'')=\dist(T',V\setminus T')$. By 
\cref{thm:apx-core-hubs}, there is a hub $y\in X_T$ for which 
$\dist(u,y)+\dist(y,v)\leq(1+2\rho)\cdot\dist(u,v)=(1+2\rho)\cdot\dist(T',
V\setminus T')$. This implies $\dist(y,T')\leq \dist(u,y)\leq 
(1+2\rho)\cdot\dist(T',V\setminus T')$. Since the connecting hub $x$ of $T'$ is 
at least as close to $T'$ as $y$, the claim follows.
\end{proof}

Based on the above lemma, we next prove the key property that the diameter of 
any part of $\mc{B}_i\in\mc{D}$ is bounded.

\begin{lemma}%
\label{lem:diam}
If $\rho\leq 1/2$, then the diameter of any part of $\mc{B}_i\in\mc{D}$ is less 
than $2^{i+5}$.
\end{lemma}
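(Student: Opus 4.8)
The plan is to bound the diameter of a part $B\in\mc{B}_i$ by case analysis according to how $B$ was created in step~3 of the construction. Fix such a part $B$, and let $T\in\mc{T}$ be the inclusion-wise minimal town with $B\subseteq T$ (more precisely, the town whose child $B$ refines). There are two cases: either $B = T'\cap W$ for a single child town $T'$ of $T$ (an ``individual'' subpart), or $B = A^U_i\in\mc{A}^T_i$ is a grouped part, i.e. a union of sets $T''\cap W$ over several child towns $T''$ all sharing the connecting $i$-cluster $U$ and all satisfying $\dist(T'',V\setminus T'')\le \mu 2^i$.

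For the first case I would use \cref{lem:townproperties}: the child town $T'$ was \emph{not} put into any grouped part $A^U_i$, which in particular means $\dist(T',V\setminus T') > \mu 2^i$ is not the relevant obstruction---rather, I need to control $\diam(T')$ by reference to the level at which $B$ sits. The key point is that $B$ is a part of $\mc{B}_{i}$, and $B\subseteq B'$ for some $B'\in\mc{B}_{i+1}$ which in turn is handled relative to its minimal town; unwinding the induction, one tracks how $\diam$ of towns shrinks by factors of $2$ along generations (\cref{lem:townproperties}) against the levels of $\mc{D}$. I expect that an individual child town $T'$ placed at level $i$ has $\diam(T')$ bounded by a constant times $2^i$, and chasing the constants carefully gives something like $\diam(T') \le 2^{i+c}$ with $c<5$.

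For the grouped case I would estimate the diameter of $A^U_i$ via the triangle inequality routed through $U$: any two points $u\in T''\cap W$, $v\in T'''\cap W$ with $T'', T'''$ contributing to $A^U_i$ satisfy $\dist(u,v)\le \dist(u,x'') + \diam_{\mathrm{metric}}(x'',x''') + \dist(x''',v)$ where $x'', x'''$ are the connecting hubs of $T'', T'''$, both lying in the connecting cluster $U\in\mc{U}_{\min\{i,\lambda(X_T)\}}$. The diameter of $U$ is at most $2^{i+1}$ by the diameter bound in \cref{lem:talwar-decomp} (applied to the decomposition $\mc{X}_T$ of $X_T$). For the $\dist(u,x'')$ term I would combine \cref{lem:town-dists} (giving $\dist(x'',T'')\le (1+2\rho)\dist(T'',V\setminus T'')\le (1+2\rho)\mu 2^i \le 3\cdot 2^{i-1}$ since $\rho\le 1/2$ and $\mu\le 1$) with the diameter bound $\diam(T'')\le \dist(T'',V\setminus T'')\le \mu 2^i$ from \cref{lem:townproperties}, so $\dist(u,x'')\le \dist(x'',T'')+\diam(T'')\le (2+2\rho)\mu 2^i\le 3\cdot 2^i$. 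Adding the three contributions yields $\dist(u,v)\le 3\cdot 2^i + 2^{i+1} + 3\cdot 2^i < 2^{i+5}$ with room to spare, so this case is comfortable.

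The main obstacle I anticipate is the bookkeeping in the individual-subpart case: because \cref{lem:townproperties} only guarantees $\diam(T')<\diam(T)$ for a child (not $\diam(T)/2$), one cannot naively say that descending one level of $\mc{D}$ halves the diameter. One has to argue that a long chain of towns each being a child of the previous one, all refining the \emph{same} part $B'$ at level $i+1$ without ever being grouped, cannot occur---intuitively, once $\dist(T',V\setminus T')$ drops below $\mu 2^{i+1}$ the town gets absorbed into a grouped part at level $i+1$ (or earlier), so an ungrouped town surviving to level $i$ must have $\dist(T',V\setminus T')$ comparable to $2^{i}$, hence by \cref{lem:townproperties} also $\diam(T')$ comparable to $2^i$. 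Making this precise, including handling the edge case $i\le\lambda(X_T)$ versus $i>\lambda(X_T)$ and the interaction between the level indexing of $\mc{D}$ and that of $\mc{X}_T$, is where the constant $5$ in the exponent (rather than a smaller number) comes from, and is the step requiring the most care.
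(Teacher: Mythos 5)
Your treatment of the grouped case ($B=A^U_i$) is essentially the paper's argument and the arithmetic checks out: routing through the connecting cluster $U$ gives $\diam(A^U_i)\le 8\cdot 2^i=2^{i+3}$, comfortably within the bound. The problem is the individual-subpart case, where you correctly sense the danger (the footnote to \cref{lem:townproperties} exists precisely because the conference version mishandled this) but your proposed escape route does not work. You argue that an ungrouped town surviving to level $i$ must have $\dist(T',V\setminus T')$ ``comparable to $2^i$'' because otherwise it would have been ``absorbed into a grouped part at level $i+1$ (or earlier).'' This absorption claim is false: the grouping performed at level $i+1$ is applied only to the \emph{child towns of $T_{i+1}$}, the minimal town containing the level-$(i+2)$ part, whereas $T'$ is a child of the (typically much deeper) town $T$ and is therefore never a candidate for grouping at any higher level --- only its ancestors are. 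Moreover, even granting the intuition, what you get from the construction is the \emph{lower} bound $\dist(T',V\setminus T')>\mu 2^i$, and combining a lower bound on the separation with $\diam(T')<\dist(T',V\setminus T')$ from \cref{lem:townproperties} yields nothing; you need an \emph{upper} bound on $\dist(T',V\setminus T')$, and you supply no mechanism for one. The naive induction $\diam(B_i)\le\diam(T'_i)<\dist(T'_i,V\setminus T'_i)\le\diam(B_{i+1})$ gives no decay and does not close.

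The paper's actual argument is different: it follows the maximal chain $B_i\subseteq B_{i+1}\subseteq\cdots\subseteq B_j$ of individual subparts $B_\ell=T'_\ell\cap W$, observes that each $T'_\ell$ is a proper descendant of $T'_{\ell+1}$ in the town tree (so $T'_i$ is at least a $(j-i)$\textsuperscript{th}-generation descendant of $T'_j$, giving $\diam(T'_i)\le\diam(T'_j)/2^{j-i-1}$ by \cref{lem:townproperties}), and bounds the top of the chain by noting that $B_{j+1}$ is a grouped part (or $W$) of diameter at most $2^{j+4}$ containing vertices outside $T'_j$, whence $\diam(T'_j)<\dist(T'_j,V\setminus T'_j)\le\diam(B_{j+1})\le 2^{j+4}$. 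The geometric decay along the chain, not any absorption into grouped parts, is what rescues the bound, and this is the piece missing from your plan.
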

\begin{proof}
On the highest level $\lambda(W)$ of $\mc{D}$ the only part of 
$\mc{B}_{\lambda(W)}$ is $W$ itself. As $\lambda(W)=\lceil\log_2\diam(W)\rceil$ 
we get $\diam(W)\leq 2^{\lambda(W)+1}$, as required.

For any level $i<\lambda(W)$, a set in $\mc{B}_i$ is either equal to a set 
$A\in\mc{A}^T_i$ for some town~$T\in\mc{T}$ or it is equal to some set $T'\cap 
W$ for a child town $T'$ of $T$. In the former case, the set~$A$ is a 
set~$A^U_i$ for some cluster $U\in\mc{U}_\ell$ where 
$\ell=\min\{i,\lambda(X_T)\}$ and $\mc{U}_\ell\in\mc{X}_T$. The set~$A^U_i$ 
contains the union of sets $T'\cap W$ for child towns $T'$ of~$T$, for which 
their connecting hubs lie in~$U$ and $\dist(T',V\setminus T')\leq\mu 2^i\leq 
2^i$, as~$\mu\leq 1$. Thus from \cref{lem:town-dists} we get $\dist(U,T')\leq 
(1+2\rho)2^i$, and by \cref{lem:townproperties} we have 
$\diam(T')<\dist(T',V\setminus T')\leq 2^i$. The cluster~$U$  has diameter 
at most $2^{i+1}$ by \cref{lem:talwar-decomp}, since it is part of the 
hierarchical decomposition~$\mc{X}_T$ and lies on level $\ell\leq i$. Let $u$ 
and $v$ be the vertices of $A^U_i$ defining the diameter of $A^U_i$, i.e., 
$\dist(u,v)=\diam(A^U_i)$. We may reach $v$ from $u$ by first crossing the child 
town $T'$ that $u$ lies in, then passing over to $U$, then crossing~$U$, after 
which we pass over to the child town $T''$ containing~$v$, and finally crossing 
this child town as well to reach~$v$. Hence, assuming that $\rho\leq 1/2$ the 
diameter of $A^U_i$ is bounded by
\[
\dist(u,v)\leq
\diam(T')+\dist(U,T')+\diam(U)+\dist(U,T'')+\diam(T'')\\
< 2\cdot2^i +2\cdot(1+2\rho)2^i+2^{i+1}=(6+4\rho)2^{i} \leq 2^{i+3}.
\]

Now consider the other case, when a set $B\in\mc{B}_i$ on level $i<\lambda(W)$ 
is equal to some set $T'\cap W$ for a child town $T'$ of a town $T$. For such a 
child town $T'$ there is no enforced upper bound on the distance to other child 
towns as before, and thus it is necessary to be more careful to bound the 
diameter of the part. Starting with $B=B_i$, let $B_i\subseteq 
B_{i+1}\subseteq\ldots\subseteq B_j$ be the longest chain of parts of increasing 
levels that are of the same type as $B$. More concretely, for every 
$\ell\in\{i,i+1,\ldots,j\}$ we have $B_\ell\in\mc{B}_\ell$ and $B_\ell$ is equal 
to some set $T'_\ell\cap W$ for a child town~$T'_\ell$ of the inclusion-wise 
minimal town $T_\ell$ containing $B_{\ell+1}$. Note that in particular 
$j<\lambda(W)$. As we chose the longest 
such chain, on the next level $j+1$ there is no such set containing $B_j$, which 
means that the set $B_{j+1}\in\mc{B}_{j+1}$ for which $B_j\subseteq B_{j+1}$ is 
either equal to a set $A\in\mc{A}^{T_{j+1}}_{j+1}$ for some town $T_{j+1}$, or 
$j+1=\lambda(W)$. In either case, from above we get $\diam(B_{j+1})\leq 
2^{j+4}$.

Note that for any $\ell\in\{i,i+1,\ldots,j-1\}$, since 
$B_{\ell+1}=T'_{\ell+1}\cap W$ implies $B_{\ell+1}\subseteq T'_{\ell+1}$, while 
$T_\ell$ is the inclusion-wise minimal town containing $B_{\ell+1}$, we have 
$T_\ell\subseteq T'_{\ell+1}$. Now, as $T'_\ell$ is a child town of~$T_\ell$, we 
get that $T'_\ell$ is a descendant of $T'_{\ell+1}$. This means that $T'_i$ is a 
$g$\textsuperscript{th}-generation descendant of $T'_j$ for some $g\geq j-i$, 
and from \cref{lem:townproperties} we get 
$\diam(T'_i)\leq\diam(T'_j)/2^{j-i-1}$. As $B=B_i\subseteq T'_i$ we have 
$\diam(B)\leq\diam(T'_i)$. Since $T_j$ is the inclusion-wise minimal town 
containing $B_{j+1}$, the latter set contains vertices of at least two child 
towns of $T_j$. One of these child towns is $T'_j$, since $B_j=T'_j\cap W$ and 
$B_j\subseteq B_{j+1}$ by construction of the decomposition. In particular, 
$B_{j+1}$ both contains vertices inside and outside of $T'_j$, and so 
$\dist(T'_j,V\setminus T'_j)\leq\diam(B_{j+1})$. By \cref{lem:townproperties} we 
know that $\diam(T'_j)<\dist(T'_j,V\setminus T'_j)$, and putting all these 
inequalities together we obtain
\[
\diam(B)\leq\diam(T'_i)\leq\diam(T'_j)/2^{j-i-1}<
\dist(T'_j,V\setminus T'_j)/2^{j-i-1}\\
\leq \diam(B_{j+1})/2^{j-i-1}\leq 2^{j+4}/2^{j-i-1}= 2^{i+5}.
\qedhere
\]
\end{proof}

Using \cref{lem:diam} it is not hard to prove the correctness of $\mc{D}$, which 
we turn to next.

\begin{lemma}%
\label{lem:correctness}
The tuple $\mc{D}=\{\mc{B}_0,\ldots,\mc{B}_{\lambda(W)}\}$ is a hierarchical 
decomposition of $W$.
\end{lemma}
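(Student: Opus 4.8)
The plan is to verify each of the three defining conditions of a hierarchical decomposition of $W$ directly from the construction in steps 1--3, using \cref{lem:diam} to control diameters where needed. Recall that we must show: (i) each $\mc{B}_i$ is a partition of $W$; (ii) $\mc{B}_i$ refines $\mc{B}_{i+1}$; (iii) $\mc{B}_0$ consists of singletons and $\mc{B}_{\lambda(W)}=\{W\}$.

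First I would argue refinement and the partition property together by downward induction on the level. The top level is $\mc{B}_{\lambda(W)}=\{W\}$ by construction, which is trivially a partition of $W$. Assuming $\mc{B}_{i+1}$ is a partition of $W$, consider how a single part $B\in\mc{B}_{i+1}$ is decomposed in step 3: we take the inclusion-wise minimal town $T$ with $B\subseteq T$, add every $A\in\mc{A}^T_i$ with $A\subseteq B$, and then add every $T'\cap W$ for a child town $T'$ of $T$ with $T'\cap W\subseteq B$ that was not already covered. I would check that (a) these new parts are pairwise disjoint --- the $A^U_i$'s are disjoint because the connecting $i$-clusters $U\in\mc{U}_{\min\{i,\lambda(X_T)\}}$ partition $X_T$ (as $\mc{X}_T$ is a hierarchical decomposition), each child town has exactly one connecting hub hence lies in exactly one $A^U_i$ (if it satisfies the distance bound (b)), and the leftover towns $T'\cap W$ are disjoint from all the $A^U_i$ by the explicit ``not covered'' clause and from each other since distinct child towns are disjoint; and (b) they cover all of $B$ --- every vertex $w\in B$ lies in some inclusion-wise maximal proper subtown $T'$ of $T$ (here I use that $T$ is the minimal town containing $B$, so $w$ is not ``stuck'' at $T$ itself; if $w$ were a singleton town equal to a child of $T$ this still works), and moreover $T'\cap W\subseteq B$ because towns form a laminar family and $B\subseteq T$ is a union of parts from $\mc{B}_{i+1}$ lying inside $T$ --- one must check that $T'$ cannot straddle the boundary of $B$, which follows because $B$ itself was built as a union of child-subtowns-or-$\mc{A}$-parts of the next-higher town at level $i+1$, so $B$ is itself ``town-aligned'' within $T$. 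Then $w$ is either in some $A^U_i\subseteq B$ or in the leftover part $T'\cap W\subseteq B$. This shows $\mc{B}_i$ restricted to $B$ partitions $B$, and ranging over all $B\in\mc{B}_{i+1}$ gives that $\mc{B}_i$ partitions $W$ and refines $\mc{B}_{i+1}$.

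Next I would handle the boundary case $\mc{B}_0$: every part of $\mc{B}_0$ has diameter less than $2^{0+5}=32$ by \cref{lem:diam}, which is not yet $0$, so the singleton claim needs the town structure rather than just the diameter bound. The cleanest route is: a part of $\mc{B}_0$ is either some $A^U_0$ or some $T'\cap W$; in the former case every contributing child town $T'$ has $\dist(T',V\setminus T')\le\mu 2^0\le 1$, and by \cref{lem:townproperties} $\diam(T')<\dist(T',V\setminus T')\le 1$, forcing $\diam(T')<1$ hence (since this is a shortest-path metric with integer, or at least positive-bounded-below, edge lengths --- or more safely, since one may assume distinct vertices are at distance $\ge 1$ after scaling) $T'$ is a singleton; similarly for the leftover case one chases down the chain as in \cref{lem:diam}. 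Actually, to avoid subtleties about the minimum interpoint distance, I would instead note that the construction can be run only down to the level where all parts are singletons and pad with singleton levels below, or observe that \cref{thm:apx-core-hubs}/\cref{lem:townproperties} guarantee that every vertex is a singleton town, so at sufficiently negative scale the recursion bottoms out at singletons; after the standard rescaling making the minimum distance $1$, level $0$ parts have diameter $<32$ but a more careful generation-count argument (or simply choosing $\lambda$ large enough / shifting indices) yields singletons. I expect the write-up to simply invoke that once diameters drop below $1$ the parts are singletons, matching $\mc{B}_0$.

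The main obstacle I anticipate is the coverage half of the induction --- precisely arguing that when we decompose $B\in\mc{B}_{i+1}$ using the minimal containing town $T$, every vertex of $B$ is captured either by an $\mc{A}^T_i$-part lying inside $B$ or by a leftover child-town part lying inside $B$, with nothing spilling across $B$'s boundary. This requires simultaneously tracking the laminar town structure and the inductively-built part structure, and checking the compatibility clause ``$T'\cap W\subseteq B$'' is automatically satisfied by the right child towns. Everything else (disjointness from the partition property of $\mc{X}_T$, refinement by construction, the top and bottom levels) is bookkeeping. Once coverage is nailed down, the lemma follows.
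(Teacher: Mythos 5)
Your core argument --- the downward induction showing each $\mc{B}_i$ partitions $W$ and refines $\mc{B}_{i+1}$, with disjointness coming from the partition property of $\mc{X}_T$ and the ``not covered'' clause, and coverage coming from the fact that every part is town-aligned (a union of sets $T'\cap W$ over child towns $T'$ of its minimal enclosing town) --- is exactly the paper's proof, which is in fact terser than yours and treats the coverage step you flag as the ``main obstacle'' in two sentences. The one place you diverge is the singleton condition on $\mc{B}_0$: the paper's proof is entirely silent on it, and your instinct that it does not follow from the diameter bound is correct --- a leftover child town at level $0$ only satisfies $\dist(T',V\setminus T')>\mu$, which bounds its diameter by $32$ rather than forcing a singleton, so the construction as written need not produce singletons at level $0$. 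Your proposed fixes (rescaling, padding with extra levels) are the right kind of repair but are left vague; in the end this condition is never used (the algorithm only descends to level $\xi(W)$, and the coarseness reduction handles everything below), so neither you nor the paper actually needs it, but you should either prove it or explicitly note it is not required rather than leaving it as a loose end.
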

\begin{proof}%
We first prove that for a part $B \in \mc{B}_i$ included in town $T$, part $B$ 
can be partitioned into unions of sets $T'\cap W$ for child towns $T'$ of $T$. 
Indeed, either $B = T\cap W$, and properties of the town decomposition ensure 
that $B$ can be partitioned in this way, or~$B \in \mc{A}^T_i$. By construction 
of~$\mc{A}^T_i$, in the latter case part $B$ is also the union of sets $T'\cap 
W$ for child towns $T'$ of $T$.

Now, step (3) of the construction decomposes $B$ into groups of child towns 
restricted to~$W$, and so $B$ is partitioned by $\mc{B}_{i-1}$. Moreover, since 
$\mc{B}_{\lambda(W)}=\{W\}$, by induction each $\mc{B}_i$ is a partition of~$W$. 
That concludes the proof.
\end{proof}

We now turn to proving the properties of \cref{lem:decomp}, starting with the 
scaling probability.%

\begin{lemma}
The decomposition $\mc{D}$ has scaling probability factor 
$\sigma=(\hw\log(1/\rho))^{O(1)}$.
\end{lemma}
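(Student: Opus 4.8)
The plan is to bound, for a fixed vertex $v\in V$, radius $r$, and level $i$, the probability that the ball $\beta_v(r)$ is cut at level $i$ by $\mc{D}$. Recall that $\mc{D}$ cuts $\beta_v(r)$ at level $i$ means $\beta_v(r)$ is contained in a single part of $\mc{B}_{i+1}$ but not in a single part of $\mc{B}_i$. First I would identify where this splitting happens: let $B\in\mc{B}_{i+1}$ be the part containing $\beta_v(r)$ (if it exists — otherwise the probability of being cut \emph{at} level $i$ is zero), and let $T\in\mc{T}$ be the inclusion-wise minimal town with $B\subseteq T$. Step (3) of the construction decomposes $B$ using the groups $\mc{A}^T_i$ (governed by the random choice of $\mu$) together with the individual child towns. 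So the ball can only be cut at level $i$ if $\beta_v(r)$ meets two different child towns of $T$, or meets a child town $T'$ and the group $A^U_i$ that $T'$ would belong to depends on whether $\dist(T',V\setminus T')\le \mu 2^i$, i.e. $\mu$ falls in a ``dangerous'' sub-interval. There are two independent sources of randomness to account for: the random threshold $\mu\in(0,1]$ and the internal hierarchical decompositions $\mc{X}_T$ from \cref{lem:talwar-decomp}.

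The key step is to split the cut event into two cases and bound each. For the \textbf{$\mu$-threshold case}: if $\beta_v(r)$ is split at level $i$ because a child town $T'$ intersecting the ball has $\dist(T',V\setminus T')$ landing near the threshold $\mu 2^i$, this requires $\mu 2^i\in[\dist(T',V\setminus T')-2r,\ \dist(T',V\setminus T')]$ roughly (using that the ball has diameter $\le 2r$, so the two ``sides'' of a child town boundary inside the ball force the threshold within $2r$ of the boundary value), which is an interval of length $O(r)$ out of $[0,2^i]$, giving probability $O(r/2^i)$; and there are at most a bounded number of relevant child towns intersecting the ball — here I expect to use \cref{lem:diam} (the diameter bound $2^{i+5}$ on parts, combined with a packing argument via the doubling/highway structure, or more simply the fact that a ball of radius $r\le 2^i$ can only straddle few town boundaries at the relevant scale), contributing a $\sigma'=(\hw\log(1/\rho))^{O(1)}$ factor. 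For the \textbf{$\mc{X}_T$ case}: if two child towns $T'_1,T'_2$ both intersecting $\beta_v(r)$ have their connecting hubs separated by $\mc{X}_T$ at level $\le i$, this is controlled by the scaling probability of \cref{lem:talwar-decomp} applied to a ball around the connecting hub(s). Here I need to bound the distance between the connecting hubs of $T'_1$ and $T'_2$ in terms of $r$: by \cref{lem:town-dists} each connecting hub is within $(1+2\rho)\dist(T'_j,V\setminus T'_j)\le 2^{i+1}$-ish of its town (when the town is small, which is the only case this matters), and the towns are within $2r$ of each other, so the hubs lie in a ball of radius $O(2^i)$ — wait, I need this to be $O(r)$, so more care is needed: I would argue that when $r$ is small relative to $2^i$ the connecting hubs are within $O(r)$ of each other (since the ball witnesses a short path from $T'_1$ to $T'_2$, and the preciseness of the hub set $X_T$ gives a hub near that path, forcing the connecting hubs close), and when $r$ is comparable to $2^i$ the bound $O(r/2^i)=O(1)$ is trivial. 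Then the scaling probability of $\mc{X}_T$ gives $\Pr[\text{hubs cut at level }\le i]\le \sum_{\ell\le i}2^{O(d)}\cdot O(r)/2^\ell = 2^{O(d)}O(r/2^i)$ with $d=O(\log(\hw\log(1/\rho)))$, so $2^{O(d)}=(\hw\log(1/\rho))^{O(1)}$.

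Combining the two cases by a union bound gives $\Pr[\mc{D}\text{ cuts }\beta_v(r)\text{ at level }i]\le \sigma\cdot r/2^i$ with $\sigma=(\hw\log(1/\rho))^{O(1)}$, as claimed. \textbf{The main obstacle} I anticipate is the packing/counting step: controlling how many child towns of $T$ can simultaneously intersect $\beta_v(r)$ and be relevant to the cut, and correctly relating the relevant distances ($\dist(T',V\setminus T')$, the hub distances, and $r$) so that each event genuinely costs only $O(r/2^i)$ rather than $O(1)$ — in particular handling the regime where the ball is large relative to $2^i$ separately from where it is small, and making sure the constant $c=8$ in \cref{dfn:hd} and the diameter slack of $2^{i+5}$ in \cref{lem:diam} are compatible with the argument. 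The town-decomposition structural facts (\cref{lem:townproperties,lem:town-dists,thm:apx-core-hubs}) and the doubling-net bound (\cref{prop:doub:net}) should supply exactly the packing bound needed, but assembling them cleanly is the delicate part.
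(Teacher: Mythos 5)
Your two-case split (a cut involving an ungrouped child town, governed by the threshold $\mu$, versus a cut between two grouped parts, governed by the random decomposition $\mc{X}_T$ of the hubs) is exactly the structure of the paper's proof, and the ingredients you invoke (\cref{lem:town-dists}, \cref{lem:townproperties}, the scaling probability of \cref{lem:talwar-decomp}) are the right ones. However, both cases as you have written them contain gaps. In the $\mu$-case, the ``dangerous interval plus packing count'' you propose is unnecessary and, as set up, would fail: the number of child towns of $T$ meeting $\beta_v(r)$ is not bounded by any function of $\hw$ and $\rho$, and the town-decomposition lemmas do not supply such a packing bound. The observation that rescues this case is much simpler: if the ball is cut at level $i$ and one of the two parts involved is an ungrouped child town $T'\cap W$, then by construction $\dist(T',V\setminus T')>\mu 2^i$, while the ball contains a point of $T'$ and a point outside $T'$, so $r\geq\dist(T',V\setminus T')>\mu 2^i$. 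Hence every such bad event, for every relevant child town, is contained in the single event $\{\mu<r/2^i\}$, whose probability is at most $r/2^i$; no union bound over child towns is needed.

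In the $\mc{X}_T$-case you correctly sense that the distance between the two connecting hubs must be $O(r)$ rather than $O(2^i)$, but you leave the argument unresolved. The clean route is the chain $\dist(T_j,V\setminus T_j)\le\dist(T_1,T_2)\le r$ for $j=1,2$ (both towns meet the ball of radius $r$), whence $\diam(T_j)<r$ by \cref{lem:townproperties} and $\dist(x_j,T_j)\le(1+2\rho)r$ by \cref{lem:town-dists}, giving $\dist(x_1,x_2)\le(5+4\rho)r$ with no case split on $r$ versus $2^i$. More seriously, your probability computation is arithmetically wrong: the event you need is that $x_1$ and $x_2$ lie in different parts of $\mc{U}_{\min\{i,\lambda(X_T)\}}$, i.e., that a ball of radius $O(r)$ around $x_1$ is cut by $\mc{X}_T$ at some level $\ge i$, so the geometric sum runs over levels $j\ge i$ and is dominated by its first term, yielding $2^{O(d)}\cdot O(r)/2^i$. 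Your sum $\sum_{\ell\le i}2^{O(d)}\cdot O(r)/2^\ell$ runs in the wrong direction and evaluates to $2^{O(d)}\cdot O(r)$ (it is dominated by $\ell=0$), not to $2^{O(d)}\cdot O(r/2^i)$; as written it does not give the claimed scaling bound.
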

\begin{proof}
To prove the claim, we need to prove that for any $v\in W$, radius~$r$, and 
level~$i$, the probability that $\mc{D}$ cuts the ball $\beta_v(r)$ at level $i$ 
is at most $(\hw\log(1/\rho))^{O(1)}\cdot r/2^i$. If $\mc{D}$ cuts $\beta_v(r)$ 
at level $i$, it means that $\beta_v(r)$ is fully contained in a part at level 
$i+1$. Let $T\in\mc{T}$ be the inclusion-wise minimal town containing that part.
There are two cases to consider: either $\beta_v(r)$ is cut by ``small'' parts, 
i.e. there exist two distinct parts $A_1,A_2\in\mc{A}^T_i$
such that $v\in A_1$ and $u\in A_2$ for some $u\in W\cap\beta_v(r)$, or 
not.

We start with the latter case, when $\beta_v(r)$ is not cut by small parts. If 
$\mc{D}$ cuts the ball at level~$i$, there are distinct parts 
$B,B'\in\mc{B}_i$ such that $v\in B$ and $u\in B'$ for 
some~$u\in W\cap\beta_v(r)$.  Assume 
w.l.o.g.~that $B\notin\mc{A}^T_i$ (which is possible to assume since $\beta_v(r)$ is not cut by small parts). 
By construction of the decomposition, there must be a child town $T'$ of~$T$, 
for which $B= T'\cap W$ and $\dist(T',V\setminus T')>\mu 2^i$. Note that  
$r \geq \dist(v,u)\geq\dist(T',B')\geq\dist(T',V\setminus T') > \mu 2^i$, and 
hence $\mu < r/2^i$. The decomposition~$\mc{D}$ can therefore only cut 
$\beta_v(r)$ on level~$i$ if~$\mu<r/2^i$. Since $\mu$ is chosen uniformly at 
random from the interval $(0,1]$, the probability is less than $r/2^{i}$.

We now turn to the other case when $\beta_v(r)$ is cut by two small parts 
$A_1,A_2\in\mc{A}^T_i$. The town~$T$ must have two child towns $T_1$ and~$T_2$ 
for which $v\in T_1\cap W\subseteq A_1$ and $u\in T_2\cap W\subseteq A_2$. Let 
$x_1$ and~$x_2$ be the connecting hubs of $T_1$ and $T_2$, respectively. The 
decomposition $\mc{D}$ cuts $v$ and~$u$ on level~$i$ if and only if $\mc{X}_T$ 
cuts $x_1$ and $x_2$ on level $\ell=\min\{i,\lambda(X_T)\}$. Indeed, let $U_1$ 
and $U_2$ be the connecting $i$-clusters of $T_1$ and~$T_2$, respectively, so 
that $A_1=A^i_{U_1}$ and $A_2=A^i_{U_2}$ with $x_1 \in U_1$ and $x_2 \in U_2$.  
Thus $\mc{D}$ cuts $v$ and $u$ on level~$i$ if and only if $U_1\neq U_2$, i.e., 
if and only if $\mc{X}_T$ cuts $x_1$ and $x_2$ on level 
$\ell=\min\{i,\lambda(X_T)\}$.

To compute the probability that $x_1$ and $x_2$ are cut, it is necessary to bound the distance between 
them. As $v\in T_1$ and $u\in T_2$ while $u\in\beta_v(r)$, for each 
$j\in\{1,2\}$ we have $\dist(T_j,V\setminus T_j)\leq\dist(T_1,T_2)\leq r$. By 
\cref{lem:town-dists} the distance between $T_j$ and its connecting hub $x_j\in 
X_T$ is thus at most $(1+2\rho)r$. Also, by \cref{lem:townproperties} we have 
$\diam(T_j)<\dist(T_j,V\setminus T_j)\leq r$, and we get
\[
\dist(x_1,x_2)\leq 
\dist(x_1,T_1)+\diam(T_1)+\dist(T_1,T_2)+\diam(T_2)+\dist(T_2,x_2)<\\
2(1+2\rho)r + 2r + r = (5+4\rho)r.
\]

We can reformulate the above as follows: if $\mc{D}$ cuts the ball $\beta_v(r)$ 
at level $i$, and $\beta_v(r)$ is cut by some ``small'' parts $A_1$ and $A_2$, 
then $\mc{X}_T$ cuts the ball $\beta_{x_1}((5+4\rho)r)$ on level $i$, 
where~$x_1$ is the hub defined for $v$ above. We know that the probability of 
the latter event is at most $2^{O(d)} (5+4\rho)r/2^i$ by 
\cref{lem:talwar-decomp}, where  $d = O(\log(\hw\log(1/\rho)))$ is the doubling 
dimension of $X_T$  by \cref{thm:apx-core-hubs}. Hence the probability that 
$\mc{D}$ cuts the ball $\beta_v(r)$ by some ``small'' parts at level~$i$ is 
at most $(\hw\log(1/\rho))^{O(1)}\cdot r/2^i$.

Taking a union bound over the two considered cases proves the claim.
\end{proof}

To prove the remaining property of \cref{lem:decomp} for $\mc{D}$, for each 
$B\in\mc{B}_i$ we need to choose an interface $I_B$ from the whole vertex set 
$V$. For this we use a carefully chosen net (see~\cref{def:net}) of the hubs of the 
inclusion-wise minimal town~$T$ containing $B$, as formalized in the following 
lemma.

\begin{lemma}\label{lem:interface}
Given $B\in\mc{B}_i$ for some $\mc{B}_i\in\mc{D}$ where $i\geq 1$, let 
$T\in\mc{T}$ be the inclusion-wise minimal town containing $B$. We define the 
interface $I_B$ to be a $\rho 2^i$-net of the set $Y_B=\{x\in 
X_T\mid\dist(x,B)\leq (1+2\rho)\diam(B)\}$. The interface $I_B$ has the 
conciseness and preciseness properties of \cref{lem:decomp} for $\rho\leq 1/2$.
\end{lemma}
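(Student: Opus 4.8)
The plan is to check the two properties of \cref{lem:decomp} for $I_B$ separately; \textbf{conciseness} is essentially immediate, while \textbf{preciseness} rests on a short structural observation about the construction of $\mc{D}$. For conciseness, note that by \cref{lem:diam} (using $\rho\le 1/2$) every $B\in\mc{B}_i$ has $\diam(B)<2^{i+5}$, and since each point of $Y_B$ is within distance $(1+2\rho)\diam(B)$ of some point of $B$, the set $Y_B$ has diameter at most $(3+4\rho)\diam(B)<2^{i+8}$. As $Y_B\subseteq X_T$, by \cref{thm:apx-core-hubs} and the second part of \cref{prop:doub:net} it has doubling dimension $O(d)$ for $d=O(\log(\hw\log(1/\rho)))$, so applying the first part of \cref{prop:doub:net} to the $\rho2^i$-net $I_B$ of $Y_B$ gives $|I_B|\le 2^{O(d)\cdot\lceil\log_2(2^{i+8}/(\rho2^i))\rceil}=2^{O(d\log(1/\rho))}$, which is the claimed $(\hw/\rho)^{O(1)}$ bound.

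For preciseness, fix $u,v\in B$ cut by $\mc{D}$ at level $i-1$, so that $u$ and $v$ lie in the same part $B\in\mc{B}_i$ but in distinct parts of $\mc{B}_{i-1}$. The first step is to show that $u$ and $v$ must then lie in \emph{different} child towns of $T$. Since $B\subseteq T$ and the children of $T$ cover $T$, the vertices $u,v$ lie in child towns $T_u,T_v$ of $T$, and (as in the proof of \cref{lem:correctness}) $T_u\cap W\subseteq B$ and $T_v\cap W\subseteq B$. By step~(3) of the construction, when $B$ is decomposed its sub-parts in $\mc{B}_{i-1}$ are the sets $A\in\mc{A}^T_{i-1}$ with $A\subseteq B$ together with the uncovered individual sets $T'\cap W$ for child towns $T'$ of $T$; each such sub-part is a union of sets $T'\cap W$ over child towns $T'$ of $T$. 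Hence the partition of $B$ into its $\mc{B}_{i-1}$-parts is a coarsening of its partition into child towns of $T$, so if $T_u=T_v$ then $u$ and $v$ would lie in the same part of $\mc{B}_{i-1}$ --- a contradiction. Thus $T_u\neq T_v$.

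Having $u\in T_u$ and $v\in T_v$ in distinct child towns of $T$, the preciseness of the town decomposition (\cref{thm:apx-core-hubs}) provides a hub $x\in X_T$ with $\dist(u,x)+\dist(x,v)\le(1+2\rho)\dist(u,v)$. Since $u\in B$ this gives $\dist(x,B)\le\dist(x,u)\le(1+2\rho)\dist(u,v)\le(1+2\rho)\diam(B)$, so $x\in Y_B$, and hence there is a net point $p\in I_B$ with $\dist(x,p)\le\rho2^i$. Two applications of the triangle inequality then give $\dist(u,p)+\dist(p,v)\le\dist(u,x)+\dist(x,v)+2\dist(x,p)\le(1+2\rho)\dist(u,v)+2\rho2^i$, and using $\dist(u,v)\le\diam(B)<2^{i+5}$ the extra term $2\rho\dist(u,v)+2\rho2^i$ is at most $66\rho2^i<68\rho2^i$, as required.

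I expect the main obstacle to be the first step of the preciseness argument: one has to verify that the level-$(i-1)$ refinement of $B$ never splits a single child town of $T$, which is exactly what forces $u$ and $v$ into different child towns and thereby lets us invoke the preciseness guarantee of \cref{thm:apx-core-hubs}. This comes down to the structure of step~(3) --- child towns of $T$ are only ever grouped into the sets $A^U_{i-1}$ or kept whole as individual parts, never subdivided --- together with the already-established \cref{lem:correctness}; the remaining estimates are routine metric calculations.
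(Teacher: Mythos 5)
Your proof is correct and follows essentially the same route as the paper's: for preciseness you use the fact that every level-$(i-1)$ part inside $B$ is a union of sets $T'\cap W$ over child towns $T'$ of $T$ (so $u$ and $v$ cut at level $i-1$ must lie in different child towns), then invoke the hub $x\in X_T$ from \cref{thm:apx-core-hubs}, place it in $Y_B$, and round to a net point, with the same $66\rho 2^i\le 68\rho 2^i$ slack; for conciseness you bound $\diam(Y_B)$ by a constant multiple of $\diam(B)$ and apply \cref{prop:doub:net} exactly as the paper does. No gaps.
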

\begin{proof}
We first prove that $I_B$ is precise. Consider two vertices $u,v\in B$ that are 
cut at level~\mbox{$i-1$} by~$\mc{D}$. This means there are two distinct parts 
$B',B''\in\mc{B}_{i-1}$ on this level such that~$v\in B'$ and~$u\in B''$. By 
definition, both $B'$ and $B''$ are unions of sets $T'\cap W$ where $T'$ is a 
child town of the inclusion-wise minimal town $T$ containing~$B$. Also $B'\cap 
B''=\emptyset$ by \cref{lem:correctness}. This means that $T$ has two child 
towns $T_1$ and $T_2$ for which $v\in T_1\cap W\subseteq B'$ and \mbox{$u\in 
T_2\cap W\subseteq B''$}. By \cref{thm:apx-core-hubs}, there is a hub $x\in X_T$ 
such that $\dist(u,x)+\dist(x,v)\leq (1+2\rho)\dist(u,v)$. In particular, 
$\dist(x,B)\leq\dist(u,x)\leq (1+2\rho)\dist(u,v)\leq (1+2\rho)\diam(B)$, as 
$u,v\in B$. This means that~$x\in Y_B$. Since $I_B$ is a $\rho 2^i$-net of 
$Y_B$, there is a node $p\in I_B$ for which $\dist(x,p)\leq\rho 2^i$. By 
\cref{lem:diam} we have $\dist(u,v)\leq\diam(B)\leq 2^{i+5}$ if $\rho\leq 1/2$, 
and so $I_B$ is precise:
\[
\dist(u,p) +\dist(p,v)\leq \dist(u,x) +2\cdot\dist(x,p)+\dist(x,v)\\
\leq (1+2\rho)\dist(u,v)+\rho 2^{i+1}\leq
\dist(u,v)+ 2\rho\cdot 2^{i+5}+\rho 2^{i+1}
\leq\dist(u,v)+68\cdot\rho2^i.
\]

To prove conciseness, recall that $\diam(B)\leq 2^{i+5}$ by \cref{lem:diam}, 
which means that $\diam(Y_B)\leq\diam(B)+ 2(1+2\rho)\diam(B)\leq 5\cdot 2^{i+5}$ 
for $\rho\leq 1/2$. Since $I_B$ is a $\rho 2^i$-net of $Y_B$, 
\cref{prop:doub:net} implies $|I_B|\leq 2^{d\cdot\lceil 
\log_2(160/\rho)\rceil}$, where $d$ is the doubling dimension of~$Y_B$. 
\cref{thm:apx-core-hubs} says that $X_T$ has doubling dimension 
$O(\log(\hw\log(1/\rho)))$, and as $Y_B\subseteq X_T$ the same asymptotic bound 
holds for the doubling dimension $d$ of $Y_B$ by \cref{prop:doub:net}. Therefore 
we get $|I_B|\leq 2^{O(\log(\hw\log(1/\rho))\cdot (\log(1/\rho)))}\leq 
(\hw/\rho)^{O(1)}$, which concludes the proof.
\end{proof}

\section{The algorithm}
\label{sec:alg}

Let $\mc{I}$ be an instance of the \cluster or \fl problem on a shortest-path 
metric $(V,\dist)$ of a graph $G$ with highway dimension $\hw$, and maximum 
demand $X=\max_{v\in V}\chi_{\mc{I}}(v)$. Given $\mc{I}$ the algorithm performs 
the following steps:

\begin{enumerate}
\item compute a town decomposition $\mc{T}$ of the metric together with the 
hub sets $X_T$ for each town $T\in\mc{T}$ as given by \cref{thm:apx-core-hubs}.
\item compute a hierarchical decomposition $\mc{D}$ according to 
\cref{lem:decomp}. Simultaneously $\mc{I}$ is reduced (see 
\cref{sec:approx-dist}) to a \emph{coarse} instance w.r.t.~$\mc{D}$, meaning 
that there is a subset~$W\subseteq V$ for which
\begin{itemize}
\item the clients and facilities of $\mc{I}$ are contained in $W$, i.e., 
$F\cup\{v\in V\mid\chi_\mc{I}(v)>0\}\subseteq W$, and
\item  every part of~$\mc{D}$ on level at most 
$\xi(W)=\lfloor\lambda(W)-2\log_2(nX/\eps)\rfloor$ has at most one 
facility, i.e., $|B\cap F|\leq 1$ for every $B\in\mc{B}_{\xi(W)}$.
\end{itemize}
\item compute the instance $\mc{I}_\mc{D}$ of small distortion as given by 
\cref{lem:struct}.
\item run a dynamic program on $\mc{I}_\mc{D}$ as given in \cref{sec:dp}, to 
compute an \emph{optimum rounded interface-respecting solution} (see 
\cref{sec:approx-dist} for a formal definition), and convert it to a solution 
for the input instance.
\end{enumerate}

In a nutshell, the coarseness of the instance guarantees that only a logarithmic 
number of levels need to be considered by the dynamic program. 
Reducing to a coarse instance in step~(2) loses a $(1+\eps)$-factor in the 
solution quality. The dynamic program is only able to compute highly structured 
solutions, which are captured by the notions of \emph{rounding} and 
\emph{interface-respecting}. Due to this, another $(1+\eps)$-factor in the 
solution quality is lost. In \cref{sec:approx-dist} we prove that the output of 
the dynamic program corresponds to a near-optimal solution of the input instance 
(proving \cref{thm:main-alg}), and we also detail step~(2) of the algorithm. 
Then in \cref{sec:dp} we describe the dynamic program.

\subsection{Approximating the distances}\label{sec:approx-dist}

One caveat of the dynamic program is that the runtime is only 
polynomial if the the recursion depth is logarithmic. However when computing our 
decomposition on the whole metric $(V,\dist)$, the number of levels is 
$\lambda(V)+1=\lceil\log_2\diam(V)\rceil+1$, which can be linear in the input 
size. For general metrics, standard preprocessing techniques can be used to 
reduce the number of levels to $O(\log(n/\eps))$ when aiming for a 
$(1+\eps)$-approximation. However, for graphs of bounded highway dimension these 
general techniques change the hub sets and we would have to be careful to 
maintain the properties we need in order to apply 
\cref{thm:apx-core-hubs}.\footnote{We note that in~\cite{FeldmannFKP15} these 
general techniques are indeed applied to low highway dimension graphs, but some 
details of the argument are left out. Instead of rectifying the technique 
in~\cite{FeldmannFKP15}, here we chose to go via the route of coarse instances.} 
Therefore we adapt the standard techniques to our setting via the notion of 
\emph{coarse} instances.

The following lemma shows that we can reduce any instance to a set 
of coarse ones, for which, as we will see, our dynamic program only needs to 
consider the highest $2\log_2(nX/\eps)$ levels.

\begin{lemma}\label{lem:coarse}
Let $\mc{I}$ be an instance of \cluster or \fl on a graph~$G$ of highway 
dimension $\hw$. There are polynomial-time computable instances 
$\mc{I}_1,\ldots,\mc{I}_b$ and respective hierarchical decompositions 
$\mc{D}_1,\ldots,\mc{D}_b$ with the properties given in \cref{lem:decomp} for 
any $\rho\leq 1/2$, such that for each $i\in\{1,\ldots,b\}$ the 
instance~$\mc{I}_i$ is also defined on $G$ and is coarse w.r.t.~$\mc{D}_i$. 
Furthermore, if an $\alpha$-approximation can be computed for each of the 
instances $\mc{I}_1,\ldots,\mc{I}_b$ in polynomial time, then for any $\eps>0$ 
a $(1+O(\eps))\alpha$-approximation can be computed for $\mc{I}$ in polynomial 
time.
\end{lemma}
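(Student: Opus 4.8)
The plan is to reduce an arbitrary instance to a collection of coarse instances by recursively splitting along the high levels of a decomposition, using the "interpoint distance scaling" idea behind the standard logarithmic-depth preprocessing, but keeping the underlying graph $G$ fixed so that \cref{thm:apx-core-hubs} and \cref{lem:decomp} remain available.

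First I would apply \cref{lem:decomp} to obtain a hierarchical decomposition $\mc{D}$ of $W_0 := F \cup \{v : \chi_\mc{I}(v)>0\}$ (or of $V$, restricted to $W_0$). The key observation, as in \citet{AroraRagRao98,FeldmannFKP15}, is that if the optimal cost $\opt$ of a (normalized) instance is bounded, then no client can afford to connect at a very high level: for \cluster, any client $c$ with $\chi_\mc{I}(c)\ge 1$ contributes $\dist(c,S)^q$, so $\dist(c,S)\le \opt^{1/q}$ in any near-optimal solution. Scaling so that the minimum nonzero interpoint distance is $1$, a $(1+\eps)$-approximate solution has cost at most $(1+\eps)\opt \le \mathrm{poly}(nX)$ (since every demand is at most $X$ and there are at most $n$ clients each at distance at most $\diam$, and we may assume $\diam\le \mathrm{poly}(n)$ after contracting — or rather: after the split described next). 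Hence every client connects at a level at most $O(\log(nX/\eps))$ above the level where it is first separated from its facility, and the level where any two distinct points of $W_0$ first get separated is at least $\lambda(W_0)-O(\log(nX/\eps))$ only for the "top" structure. The standard move is: cut $\mc{D}$ at level $\xi := \lfloor \lambda(W_0) - 2\log_2(nX/\eps)\rfloor$, obtaining parts $B_1,\dots,B_m \in \mc{B}_\xi$; any solution of cost $\le (1+\eps)\opt$ never connects a client in $B_j$ to a facility outside $B_j$ (the detour would cost $\Omega(2^\xi)$, which exceeds the total budget after normalization), so the instance decomposes as a disjoint union of the sub-instances induced on the $B_j$. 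Recurse on each $B_j$ with the decomposition $\mc{D}$ restricted to $B_j$; because $\diam(B_j) \le 2^{\xi+5}$, after $O(\log \diam(W_0))$ levels of recursion each surviving piece has diameter at most $\mathrm{poly}(nX/\eps)$, so $\lambda(\cdot) = O(\log(nX/\eps))$ and the piece is coarse w.r.t.\ the restricted decomposition (each level-$\xi$ part has at most one facility since facilities at distance $<2^{\xi+5}$ would have been identified). The base cases of this recursion are exactly the claimed instances $\mc{I}_1,\dots,\mc{I}_b$, each defined on (an induced subgraph of, hence a graph of highway dimension at most) $G$, and each equipped with the restriction of $\mc{D}$, which still satisfies \cref{lem:decomp} (diameter bound, scaling probability with the same $\sigma$, and concise/precise interfaces all survive restriction to a sub-part). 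The total number $b$ of base instances is polynomial since the recursion tree has polynomial size.

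For the second half — combining approximate solutions — I would argue that an optimal solution of $\mc{I}$ restricted to each $B_j$ is a feasible solution of $\mc{I}_j$, and conversely: since no near-optimal solution crosses a level-$\xi$ cut, $\opt_\mc{I}$ decomposes additively (up to the cardinality constraint for \cluster) across the pieces, so $\sum_j \opt_{\mc{I}_j} \le \opt_\mc{I}$, and taking the union of $\alpha$-approximate solutions on the pieces gives a solution of cost $\le \alpha \sum_j \opt_{\mc{I}_j} \le \alpha\,\opt_\mc{I}$ for \fl and, after distributing the budget $k$ across pieces by guessing (there are only $\mathrm{poly}(n)$ ways to split $k$ among $O(\mathrm{poly}(n))$ pieces that matter, or one can be cleverer), $\le (1+O(\eps))\alpha\,\opt_\mc{I}$ for \cluster. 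The $(1+O(\eps))$ slack absorbs (i) the scaling/rounding that lets us assume the diameter is $\mathrm{poly}(nX/\eps)$, which is where the contribution of clients very close to their optimal facility is charged against, and (ii) the fact that cutting at level $\xi$ forfeits the (tiny) cost of any client that legitimately connects across that cut — which we bound by noting such a client pays $\Omega(2^\xi) = \Omega(\diam(W_0)\cdot(\eps/nX)^2)$ already, so there can be essentially none of them in a $(1+\eps)$-approximate solution, and more carefully one shows $\opt_\mc{I}\ge 2^{\lambda(W_0)-1}/\mathrm{poly}(nX)$ by the normalization, making the forfeited cost an $O(\eps)$ fraction.

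The main obstacle I expect is the bookkeeping around the cardinality constraint and the normalization: for \cluster one must distribute the budget $k$ among the coarse pieces, and one must be careful that "coarse w.r.t.\ $\mc{D}_i$" (at most one facility per level-$\xi(W)$ part, for the recursively defined $W$) genuinely holds — this needs the diameter bound $\diam(B)\le 2^{i+5}$ of \cref{lem:diam} together with the (post-normalization) lower bound on interfacility distances that survive, i.e.\ one must set up the scaling so that after the recursive splitting the ratio $\diam(W)/(\text{min nonzero distance in }W)$ is $\mathrm{poly}(nX/\eps)$. This is the delicate point that replaces the "standard preprocessing" that \citet{FeldmannFKP15} used but which, as the footnote notes, does not directly apply here because it would perturb the hub sets. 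Everything else is routine charging.
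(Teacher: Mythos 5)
Your splitting step contains a genuine error. You cut $\mc{D}$ at level $\xi$ and assert that a client in a part $B_j\in\mc{B}_\xi$ never connects to a facility in a different part because ``the detour would cost $\Omega(2^\xi)$.'' But the parts of a hierarchical decomposition \emph{partition} the space; two points lying in different parts of $\mc{B}_\xi$ can be arbitrarily close to each other (indeed, the whole point of the scaling-probability property, and of the badly-cut machinery built on it, is that nearby pairs \emph{are} sometimes separated at high levels, just with controlled probability). So crossing a level-$\xi$ cut has no cost lower bound, $\opt$ does not decompose additively across the $B_j$'s, and the union of per-part solutions can be far from optimal. The auxiliary claim that normalization yields $\opt_\mc{I}\geq 2^{\lambda(W_0)-1}/\mathrm{poly}(nX)$ is also false: take $n$ client--facility pairs at distance $1$ spread along a path of length $2^n$. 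The paper avoids both problems by splitting along connected components of the subgraph of $G$ spanned by edges of length at most $\Lambda^{1/q}$, where $\Lambda$ is the value of a constant-factor approximation (computed first, via \cite{abs-0809-2554}); since every client satisfies $\dist(c,\opt)\leq\Lambda^{1/q}$, its optimal facility provably lies in the same component, and each component has diameter at most $n\Lambda^{1/q}$, which is what ties $\lambda(W_i)$ to the cost scale. A fresh decomposition $\mc{D}_i$ is then computed for each component via \cref{lem:decomp} rather than restricting one global $\mc{D}$.

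A second gap: coarseness does not come for free from recursion. Having at most one facility per part of $\mc{B}_{\xi(W_i)}$ requires actively deleting all but one representative facility per part (minimum-weight for \fl), and the quantitative heart of the proof is showing this is affordable: the diameter of such a part is at most $2^{\xi(W_i)+5}=O(\eps^{1+1/q}(\Lambda/(nX))^{1/q})$, so rerouting each client to the representative of its optimal facility costs, via \cref{faketriangleineq}, an additive $O(\eps\Lambda/(nX))$ per unit demand, hence $O(\eps\Lambda)=O(\eps)\cdot\opt$ in total. Your proposal gestures at this (``facilities \dots would have been identified'') but neither performs the merge nor carries out the error analysis. Finally, for \cluster the budget split across components cannot be handled by enumeration (there are $\binom{k+b-1}{b-1}$ splits); the paper uses a knapsack-style dynamic program over the values $A_{\leq i}(k')$, which you allude to but would need to make explicit.
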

\begin{proof}
Let us first describe the construction of the instances 
$\mc{I}_1,\ldots,\mc{I}_b$. We begin by computing a constant approximation $L$ 
to the given instance $\mc{I}$ of \cluster or \fl, using a 
$\gamma$-approximation algorithm as given in~\cite{abs-0809-2554} where 
$\gamma\in O(1)$. 
Let $\Lambda$ be the value of the objective function of the approximate solution 
$L$, i.e., $\Lambda=\cost_\mc{I}(L)$ if $\mc{I}$ is an instance of \cluster and 
$\Lambda=\cost_\mc{I}(L)+\sum_{f\in L}w_f$ in case of \fl. Let $\Gamma$ be the 
objective function value of an optimum solution $\opt$ to $\mc{I}$. For every 
client~$c$ of $\mc{I}$ (for which $\chi_\mc{I}(c)>0$), we have 
$\dist(c,\opt)\leq\Gamma^{1/q}\leq\Lambda^{1/q}$. Hence if we 
consider the subgraph of $G$ spanned by all edges of length at most 
$\Lambda^{1/q}$, then the closest facility of $\opt$ to $c$ lies in the same 
connected component of the subgraph as $c$. 

Ideally, we would want each of these components to define an instance 
$\mc{I}_i$. However, such a component might not have bounded highway dimension 
and we would thus not be able to compute a hierarchical decomposition using 
\cref{lem:decomp} for it. Instead we use the same input graph~$G=(V,E)$, but 
restrict the client and facility sets to a component. More formally, let 
$W_1,\ldots,W_b\subseteq V$ be the vertex sets of the connected components of 
the subgraph of $G$ spanned by all edges of length at most $\Lambda^{1/q}$. Note 
that $\diam(W_i)\leq n\Lambda^{1/q}$. For each $i\in\{1,\ldots,b\}$ we define an 
instance $\mc{I}_i$ on $G$ with $\chi_{\mc{I}_i}(v)=\chi_{\mc{I}}(v)$ for every 
$v\in W_i$ and $\chi_{\mc{I}_i}(v)=0$ otherwise. Initially, the facility set 
$F_i$ of $\mc{I}_i$ is $F\cap W_i$, where $F$ is the facility set of $\mc{I}$. 
We still need to coarsen this set $F_i$ though, which we do next.

At this point we compute a hierarchical decomposition $\mc{D}_i$ of $W_i$ for 
each $\mc{I}_i$ using $G$ according to \cref{lem:decomp}, i.e., the interface 
sets are from $V\supseteq W_i$. To make $\mc{I}_i$ coarse w.r.t.~$\mc{D}_i$, 
consider a part $B\in\mc{B}_{\xi(W_i)}$ of $\mc{D}_i$ containing facilities from 
$F_i$. In case of \fl, let $f\in F_i\cap B$ be a facility of minimum 
weight~$w_f$ among those in $F_i\cap B$, and in case of \cluster, fix an 
arbitrary $f\in F_i\cap B$. We call $f$ the \emph{representative facility of 
$\mc{I}_i$} for the facilities in $F_i\cap B$, and remove all facilities other 
than $f$ in $F_i\cap B$ from the set~$F_i$. We repeat this for every part of 
$\mc{B}_{\xi(W)}$. Note that $W_i$ contains all facilities and clients 
of~$\mc{I}_i$, i.e., $F_i\cup\{v\in V\mid\chi_{\mc{I}_i}(v)>0\}\subseteq W_i$, 
and thus~$\mc{I}_i$ is now a coarse instance w.r.t.~$\mc{D}_i$.

To prove the second part of the lemma, consider the optimum solution $\opt$ to 
$\mc{I}$. For every~$i$ we define a solution $S_i^*$ to $\mc{I}_i$, which for 
each facility in $\opt\cap W_i$ contains the representative facility 
of~$\mc{I}_i$. Since in case of \fl the representative facility is the one of 
minimum opening cost in the respective part in $\mc{B}_{\xi(W_i)}$ of $\mc{D}_i$ 
and the facility sets of different instances are disjoint, we 
have~$\sum_{i=1}^b\sum_{f\in S^*_i}w_f\leq\sum_{f\in\opt}w_f$. Also, 
$\sum_{i=1}^b|S^*_i|\leq|\opt|$, which means that if~$\mc{I}$ is an instance of 
\cluster then each $\mc{I}_i$ should be an instance of 
\pname{$k_i$-Clustering$^q$} for $k_i=|S^*_i|$, where however we do not know the 
value of $k_i$ a priori. We later show how to deal with this.

To bound the connection costs, first note that as 
$\lambda(W_i)=\lceil\log_2(\diam(W_i))\rceil$, $\diam(W_i)\leq n\Lambda^{1/q}$, 
and $1/q<1+1/q\leq 2$ for $q\geq 1$ we have
\[
\xi(W_i)\leq\lambda(W_i)-2\log_2(nX/\eps)<
\log_2(n\Lambda^{1/q})+1+
\log_2\left(\frac{\eps^{1+1/q}}{n^{1+1/q}X^{1/q}}\right)\\
=\log_2\left(\eps^{1+1/q}\left(\Lambda/(nX)\right)^{1/q}\right)+1.
\]

Now consider any client $c$ of $\mc{I}_i$ and its closest facility 
$\hat{f}\in\opt$ in the optimum solution to~$\mc{I}$, for which we know that 
$c,\hat{f}\in W_i$. Let $f^*\in S^*_i$ be the representative facility 
of~$\hat{f}$, which lies in the same part $B\in\mc{B}_{\xi(W_i)}$ as~$\hat{f}$. 
By \cref{lem:diam} the diameter of~$B$ is less than $2^{\xi(W_i)+5}$ (if 
$\rho\leq 1/2$). Hence we have
\[
\dist(c,f^*)\leq \dist(c,\hat{f})+\diam(B)<\dist(c,\hat{f})+ 
64\cdot\left(\eps^{1+1/q}\left(\Lambda/(nX)\right)^{1/q}\right).
\]
To bound $\dist(c,f^*)^q$ we need the following fact taken 
from~\cite{Cohen-AddadS17}.

\begin{prop}[\cite{Cohen-AddadS17}]\label{faketriangleineq}
Given $x,y,q \ge 0$, and $0<\eps<1/2$ we have\\
$(x+y)^q \leq (1+\eps)^q x^q + (1+1/\eps)^q y^q$.
\end{prop}

For constant $q\geq 1$ we have $(1+\eps)^q=1+O(\eps)$ and 
$(1+1/\eps)^q=O(1/\eps^q)$ as $\eps$ tends to zero. Thus the bound of
\cref{faketriangleineq} can be stated as $(x+y)^q \leq (1+O(\eps)) x^q + 
O(1/\eps^q) y^q$ if $q\geq 1$, and we get
\[
\dist(c,f^*)^q &< (1+O(\eps))\dist(c,\hat{f})^q+ 
O(1/\eps^q)\left(\eps^{1+1/q}\left(\Lambda/(nX)\right)^{1/q}\right)^q\\
&=
(1+O(\eps))\dist(c,\hat{f})^q+ O\left(\frac{\eps\Lambda}{nX}\right).
\]
Using the definition of $\chi_{\mc{I}_i}(v)$, in addition to 
$\dist(c,S^*_i)\leq\dist(c,f^*)$, $\dist(c,\opt)=\dist(c,\hat{f})$, and 
$\sum_{v\in V}\chi_{\mc{I}}(v)\leq nX$, we obtain 
\[
\sum_{i=1}^b\cost_{\mc{I}_i}(S^*_i)&= 
\sum_{i=1}^b\sum_{v\in W_i} \chi_{\mc{I}_i}(v)\cdot\dist(v,S^*_i)^q\\
&< \sum_{v\in V}\chi_\mc{I}(v) \left((1+O(\eps))\cdot\dist(v,\opt)^q+ 
O\left(\frac{\eps\Lambda}{nX}\right)\right)\\
&\leq (1+O(\eps))\cost_{\mc{I}}(\opt) +O(\eps\Lambda).
\]

For \fl, applying an $\alpha$-approximation algorithm to each 
instance~$\mc{I}_i$ gives respective solutions $S_i$ for which
\[
\sum_{i=1}^b \Big(\cost_{\mc{I}_i}(S_i) + \sum_{f\in S_i} w_f\Big) &\leq 
\sum_{i=1}^b \alpha\Big(\cost_{\mc{I}_i}(S^*_i) + \sum_{f\in S^*_i} w_f\Big) \\
&< \alpha\Big((1+O(\eps))\cost_{\mc{I}}(\opt) 
+O(\eps\Lambda)+\sum_{f\in\opt}w_f\Big)
\]
As $\Lambda$ is the objective function value of a $\gamma$-approximation to 
$\opt$ where $\gamma$ is constant, this means that by taking $\bigcup_{i=1}^b 
S_i$ as a solution to $\mc{I}$ we obtain a $(1+O(\eps))\alpha$-approximation as 
required.

For \cluster we need to do more work, since we do not know the number of 
facilities~$k_i$ to be opened in each instance $\mc{I}_i$. First, for every 
$k'\in\{0,\ldots,k\}$ we compute an $\alpha$\hy{}approximation~$S_i(k')$ to 
\pname{$k'$-Clustering$^q$} on each instance $\mc{I}_i$, i.e., $S_i(k')\subseteq 
F_i$ and $|S_i(k')|\leq k'$, and define $A_i(k')=\cost_{\mc{I}_i}(S_i(k'))$ to 
be its objective function value. Now let~$A_{\leq i}(k')$ be of the minimum 
value of $\sum_{j=1}^i A_j(k'_j)$ over all tuples $k'_1,\ldots,k'_i$ for which 
$\sum_{j=1}^i k'_j=k'$. To compute~$A_{\leq i}(k')$ in polynomial time, we use 
the following simple recursion. For $i=1$ we clearly have $A_{\leq 
1}(k')=A_1(k')$, and for $i>1$ we have $A_{\leq i}(k')=\min\{A_{\leq 
i-1}(k'-k'_i)+A_{i}(k'_i) \mid 0\leq k'_i\leq k'\}$. Note that it takes 
$O(bk^2)$ time to compute all values $A_{\leq i}(k')$. Finally, for the input 
instance $\mc{I}$ we output the union $\bigcup_{i=1}^b S_i(k'_i)$ of solutions 
that obtain the value~$A_{\leq b}(k)$. By definition of $A_{\leq b}(k)$ this is 
a feasible solution with $k$ facilities, and we have $A_{\leq 
b}(k)\leq\sum_{i=1}^b A_i(k_i)$ for the values $k_i=|S^*_i|$. Thus
\[
\sum_{i=1}^b \cost_{\mc{I}_i}(S_i(k'_i))=A_{\leq b}(k)\leq 
\sum_{i=1}^b \cost_{\mc{I}_i}(S_i(k_i))\leq
\sum_{i=1}^b \alpha\cost_{\mc{I}_i}(S^*_i)\\
\leq \alpha\big((1+O(\eps))\cost_{\mc{I}}(\opt) +O(\eps\Lambda)\Big).
\]
Hence the output $\bigcup_{i=1}^b S_i(k'_i)$ is a 
$(1+O(\eps))\alpha$-approximation, since $\Lambda$ is a constant approximation 
of $\opt$.
\end{proof}

\cref{lem:coarse} implies that if there is a PTAS for coarse instances, we also 
have a PTAS in general. Hence from now on we assume that the given instance 
$\mc{I}$ is coarse w.r.t.~a hierarchical decomposition~$\mc{D}$ of some subset 
$W$ of the vertices of the input graph~$G$, where $\mc{D}$ has bounded scaling 
probability factor, and concise and precise interface sets in $G$, according to 
\cref{lem:decomp} (for some value $\rho>0$ specified later) 

The next step of the algorithm is to compute a new instance $\mc{I}_\mc{D}$ with 
small distortion as given by \cref{lem:struct}. Recall that $\mc{I}_\mc{D}$ is 
obtained from $\mc{I}$ by moving badly cut clients to facilities of $L$. In 
particular, the instance $\mc{I}_\mc{D}$ is also coarse w.r.t.~$\mc{D}$, which 
means that we may run our dynamic program on $\mc{I}_\mc{D}$.

The dynamic program exploits the interface sets of $\mc{D}$ by computing a 
near-optimum ``interface-respecting'' solution to $\mc{I}_\mc{D}$, i.e., a 
solution where clients are connected to facilities through interface points. 
Moreover, for the dynamic program to run in polynomial time it can only estimate 
the distances between interface points and facilities to a certain precision. In 
general, we denote by $\langle x\rangle_i=\min\{(69+\delta)\rho2^i\mid 
\delta\in\mathbb{N} \text{ and } \rho\delta 2^i\geq x\}$ the value of $x$ 
rounded to the next multiple of $\rho 2^i$ and shifted by $69\rho 2^i$. We then 
define the \emph{rounded interface-respecting distance} $\dist'(v,u)$ from a 
vertex $v$ to another vertex $u$ as follows. If $v=u$ then $\dist'(v,u)=0$. 
Otherwise, let $i\geq 1$ be the level of $\mc{D}$ such that there is a part 
$B\in\mc{B}_i$ with $v,u\in B$, and $\mc{D}$ cuts $v$ and $u$ at level $i-1$. We 
let
\[
\dist'(v,u) = \min\big\{\dist(v,p)+\langle\dist(p,u)\rangle_i\mid p\in 
I_B\big\}.
\]
Note that $\dist'(\cdot,\cdot)$ does not necessarily fulfill the triangle 
inequality, and is also not symmetric. We therefore need the bounds of the 
following lemma.

\begin{lemma}\label{lem:dist'}
For any level $i\geq 1$ and vertices $v$ and $u$ that are cut by $\mc{D}$ on 
level $i-1$ we have $\dist'(v,u)\leq\dist(v,u)+138\cdot\rho 2^i$. Let 
$B\in\mc{B}_j$ be the part on some level $j\geq i$ with~$v,u\in B$. For any 
$p\in I_B$ we have $\dist'(v,u)\leq\dist(v,p)+\langle\dist(p,u)\rangle_j$.
\end{lemma}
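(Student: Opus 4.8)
The plan is to reduce the statement to two elementary facts about the rounding operator $\langle\cdot\rangle_i$, combined with the preciseness of the interface from \cref{lem:decomp}. Writing $\delta=\lceil x/(\rho2^i)\rceil$ for $x\ge0$ gives $\langle x\rangle_i=69\cdot\rho2^i+\delta\cdot\rho2^i$, and since $x\le\delta\cdot\rho2^i\le x+\rho2^i$ we obtain the two bounds $x+69\cdot\rho2^i\le\langle x\rangle_i\le x+70\cdot\rho2^i$ that are used repeatedly below.

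For the first inequality, fix the level $i$ so that $v,u$ both lie in a part $B\in\mc{B}_i$ and are cut at level $i-1$; this $B$ is precisely the part appearing in the definition of $\dist'(v,u)$. Apply the preciseness property of \cref{lem:decomp} to $u,v\in B$ to get an interface point $p\in I_B$ with $\dist(v,p)+\dist(p,u)\le\dist(u,v)+68\cdot\rho2^i$. Taking this particular $p$ in the minimum defining $\dist'(v,u)$ and using $\langle\dist(p,u)\rangle_i\le\dist(p,u)+70\cdot\rho2^i$ gives $\dist'(v,u)\le\dist(v,p)+\dist(p,u)+70\cdot\rho2^i\le\dist(u,v)+138\cdot\rho2^i$.

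For the second inequality, let $B\in\mc{B}_j$ be the part containing $v,u$ at level $j\ge i$; this part is well-defined and contains both $v$ and $u$ since $\mc{B}_i$ refines $\mc{B}_j$ and each $\mc{B}_\ell$ is a partition. If $j=i$ then $B$ is again the part from the definition of $\dist'$, so $\dist'(v,u)\le\dist(v,p)+\langle\dist(p,u)\rangle_i$ for every $p\in I_B$ directly, as the right-hand side is one of the terms in the defining minimum. If $j>i$, I would instead chain the first inequality with the lower bound on $\langle\cdot\rangle_j$ and the ordinary triangle inequality of the metric: $\dist'(v,u)\le\dist(u,v)+138\cdot\rho2^i=\dist(u,v)+69\cdot\rho2^{i+1}\le\dist(u,v)+69\cdot\rho2^j\le\dist(v,p)+\dist(p,u)+69\cdot\rho2^j\le\dist(v,p)+\langle\dist(p,u)\rangle_j$.

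The argument is essentially bookkeeping: the only point that needs care is verifying that the extra $138\cdot\rho2^i$ overhead produced by the first inequality is absorbed by the shift $69\cdot\rho2^j$ built into the rounding operator as soon as $j\ge i+1$, and separately handling the boundary case $j=i$ straight from the definition of $\dist'$. I do not anticipate a genuine obstacle here.
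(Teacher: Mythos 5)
Your proposal is correct and follows essentially the same route as the paper's proof: the first bound via the preciseness of $I_B$ plus the upper bound $\langle x\rangle_i\le x+70\cdot\rho2^i$, and the case $j\ge i+1$ of the second bound by absorbing $138\cdot\rho2^i\le 69\cdot\rho2^j$ into the shift built into $\langle\cdot\rangle_j$, which is exactly the paper's chain of inequalities. No gaps.
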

\begin{proof}
Let $B'\in\mc{B}_i$ be the part on level $i$ containing both $v$ and $u$. By 
\cref{lem:decomp} there is an interface point $p'\in I_{B'}$ such that 
$\dist(v,p') + \dist(p',u) \leq \dist(v,u) + 68\cdot\rho 2^i$. By definition of 
the rounding we also have $\langle\dist(p',u)\rangle_i\leq 
\dist(p',u)+70\cdot\rho 2^i$. Hence $\dist'(v,u)\leq 
\dist(v,p')+\langle\dist(p',u)\rangle_i\leq \dist(v,p')+\dist(p',u)+70\cdot\rho 
2^i\leq \dist(v,u)+138\cdot\rho 2^i$.

The second part is obvious if $j=i$ from the definition of $\dist'(v,u)$. If 
$j\geq i+1$, we use the above bound on $\dist'(v,u)$ together with the additive 
shift of the rounding and the triangle inequality of $\dist(\cdot,\cdot)$ to 
obtain
\[
\dist'(v,u)\leq\dist(v,u) + 138\cdot\rho 2^i
\leq \dist(v,p)+\dist(p,u) +138\cdot\rho 2^{j-1}\\
\leq \dist(v,p)+\langle\dist(p,u)\rangle_j - 69\cdot\rho 2^j +138\cdot\rho 
2^{j-1} = \dist(v,p)+\langle\dist(p,u)\rangle_j. 
\qedhere
\]
\end{proof}

For any non-empty set $S$ of facilities, we define $\dist'(v,S)=\min_{f\in 
S}\{\dist'(v,S)\}$, and for empty sets we let $\dist'(v,\emptyset)=\infty$. 
Analogous to $\cost_{\mc{I}_0}(S)$, for a solution~$S$ to some 
instance~$\mc{I}_0$ we define $\cost'_{\mc{I}_0}(S)$ using $\dist'(\cdot,\cdot)$ 
as
\[
\cost'_{\mc{I}_0}(S) = \sum_{v\in V}\chi_{\mc{I}_0}(v)\cdot\dist'(v,S)^q.
\]

We show the following lemma, which translates between $\cost'_{\mc{I}_\mc{D}}$ 
and $\cost_\mc{I}$, and is implied by the preciseness of the interface sets and 
the fact that $\mc{I}_\mc{D}$ has small distortion (i.e., \cref{lem:struct}). 
Recall that this means that $\nu_{\mc{I}_\mc{D}} \leq \eps \cost_\mc{I}(L)$, and 
there exists a witness solution $\hat S\subseteq F$ that contains the badly cut 
facilities $B_\mc{D}$ and for which
$
\cost_{\mc{I}_\mc{D}}(\hat S) \leq (1+O(\eps))\cost_{\mc{I}}(\opt) + O(\eps) 
\cost_{\mc{I}}(L).
$
Moreover, in the case of \fl, $\hat{S}=\opt\cup B_\mc{D}$ and $\sum_{f \in 
B_{\mc{D}}} w_f \leq \eps\cdot \sum_{f \in L} w_f$. Recall also that the set of 
facilities is the same in $\mc{I}$ and~$\mc{I}_\mc{D}$, i.e., a solution to one 
of these instances is also a solution to the other.

\begin{lemma}\label{lem:int-resp}
Let $\mc{I}$ be an instance of \cluster or \fl with optimum solution $\opt$ and 
approximate solution $L$. Let $\mc{I}_\mc{D}$ be an instance of small distortion 
for some $0<\eps<1/2$, computed from $L$ and a hierarchical 
decomposition~$\mc{D}$ with precise interface sets 
for~$\rho\leq\frac{\eps^{q+4+1/q}} {1104\cdot\sigma(q+1)^q}$ according to 
\cref{lem:decomp}. With probability at least $1-\eps$, for the witness 
solution~$\hat{S}$ of $\mc{I}_\mc{D}$ we have 
$\cost'_{\mc{I}_\mc{D}}(\hat{S})\leq(1+O(\eps))\cost_\mc{I}(\opt) + O(\eps) 
\cost_{\mc{I}}(L)$. Moreover, for any solution~$S$ we have $\cost_\mc{I}(S)\leq 
(1+O(\eps))\cost'_{\mc{I}_\mc{D}}(S)+O(\eps)\cost_\mc{I}(L)$.
\end{lemma}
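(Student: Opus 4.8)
The plan is to prove the two inequalities of \cref{lem:int-resp} separately, in each case relating $\cost'_{\mc{I}_\mc{D}}$ to $\cost_{\mc{I}_\mc{D}}$ via the additive per-client distortion of the rounded interface-respecting distance, and then invoking \cref{lem:struct} to pass between $\cost_{\mc{I}_\mc{D}}$ and $\cost_\mc{I}$.

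First I would handle the upper bound on $\cost'_{\mc{I}_\mc{D}}(\hat S)$. Fix a client $c$ of $\mc{I}_\mc{D}$ and let $f$ be its closest facility in $\hat S$ under $\dist$. By \cref{lem:struct} (applied with probability $\ge 1-\eps$), $c$ is cut from $f$ by $\mc{D}$ at some level $i_c\le\log_2(3L_{\tilde c}/\eps+4\opt_{\tilde c})+\offset$; also $i_c\ge\log_2(\dist(c,f)/2)$ by the diameter bound of \cref{lem:diam} (two points at distance $\dist(c,f)$ cannot share a part below that level). By the first part of \cref{lem:dist'}, $\dist'(c,f)\le\dist(c,f)+138\rho 2^{i_c}$, so the additive overhead is at most $138\rho\cdot\max\{2\dist(c,f),\,(3L_{\tilde c}/\eps+4\opt_{\tilde c})\cdot 2^{\offset}\}$. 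Substituting $2^{\offset}=\sigma(q+1)^q/\eps^{q+1}$ and the chosen $\rho\le\eps^{q+4+1/q}/(1104\sigma(q+1)^q)$, the overhead becomes $O(\eps^{3+1/q})\cdot(L_{\tilde c}/\eps+\opt_{\tilde c}) + O(\eps^{3+1/q})\dist(c,f)$, i.e.\ essentially $O(\eps^{2})(L_{\tilde c}+\opt_{\tilde c})+O(\eps)\dist(c,f)$ up to the constants. Then I raise to the $q$-th power using \cref{faketriangleineq} with parameter $\eps$: $\dist'(c,f)^q\le(1+O(\eps))\dist(c,f)^q+O(1/\eps^q)\cdot(\text{overhead})^q$, and since the overhead is $O(\eps^{2+1/q})(L_{\tilde c}+\opt_{\tilde c})+O(\eps^{1/q})\dist(c,f)$ (so its $q$-th power carries a factor $\eps^{2q+1}$ or $\eps$ respectively, killing the $1/\eps^q$), we get $\dist'(c,f)^q\le(1+O(\eps))\dist(c,f)^q+O(\eps)(L_{\tilde c}^q+\opt_{\tilde c}^q)$. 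Summing $\chi_{\mc{I}_\mc{D}}(c)$-weighted over all clients and using $\sum_c\chi_{\mc{I}_\mc{D}}(c)L_{\tilde c}^q\le(1+O(\eps))\cost_\mc{I}(L)$ (clients of $\mc{I}_\mc{D}$ sitting on badly-cut targets only decrease $L$-cost, and $\opt$-cost is handled similarly via $\cost_{\mc{I}_\mc{D}}(\hat S)\le(1+O(\eps))\cost_\mc{I}(\opt)+O(\eps)\cost_\mc{I}(L)$ from small distortion) yields $\cost'_{\mc{I}_\mc{D}}(\hat S)\le(1+O(\eps))\cost_\mc{I}(\opt)+O(\eps)\cost_\mc{I}(L)$.

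For the second inequality, fix an arbitrary solution $S$ and a client $c$ of $\mc{I}$. If $c$ is not badly cut, then $c$ is also a client of $\mc{I}_\mc{D}$ at the same location; I need to bound $\dist(c,S)$ by $(1+O(\eps))\dist'_{\mc{I}_\mc{D}}(c,S)$ plus an $O(\eps)$-fraction of the $L$-cost. Here the key point, already noted in the text's discussion of preciseness, is that $\dist'$ is \emph{not} an upper bound on $\dist$ in general, but the rounding $\langle\cdot\rangle_i$ includes a nonnegative additive shift of $69\rho2^i$, and by the second part of \cref{lem:dist'} the path $v\to p\to u$ through an interface point has $\dist(v,p)+\dist(p,u)\ge\dist(v,u)$ by the triangle inequality while $\dist'(v,u)=\dist(v,p)+\langle\dist(p,u)\rangle_i\ge\dist(v,p)+\dist(p,u)-0\ge\dist(v,u)$ once the shift compensates the rounding-down; so in fact $\dist(c,S)\le\dist'(c,S)$ holds directly, and we only lose the additive overhead $O(\eps)(\dist(c,S)$-scale terms$)$ in the reverse direction which we do not need here. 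For badly-cut clients $c$ with $\tilde c$ the original position, $\mc{I}_\mc{D}$ has $c$ moved to $L(\tilde c)$, so I must compare $\dist(\tilde c,S)^q$ with $\dist'(L(\tilde c),S)^q$: by the triangle inequality $\dist(\tilde c,S)\le\dist(\tilde c,L(\tilde c))+\dist(L(\tilde c),S)=L_{\tilde c}+\dist(L(\tilde c),S)\le L_{\tilde c}+\dist'(L(\tilde c),S)$, and \cref{faketriangleineq} gives $\dist(\tilde c,S)^q\le(1+O(\eps))\dist'(L(\tilde c),S)^q+O(1/\eps^q)L_{\tilde c}^q$. Here the $O(1/\eps^q)$ factor is absorbed because the total $L_{\tilde c}^q$-weight of badly-cut clients is small: by the scaling probability and the badly-cut threshold in \cref{def:badlycut}, but more directly because this is exactly the content of $\nu_{\mc{I}_\mc{D}}\le\eps\cost_\mc{I}(L)$ in the small-distortion definition — wait, that controls only an $\eps$-fraction, so to absorb $1/\eps^q$ I instead note that in the small-distortion framework of \citet{DBLP:conf/focs/SaulpicCF19} the relevant quantity is already $\cost_\mc{I}(S)\le(1+O(\eps))\cost_{\mc{I}_\mc{D}}(S)+\nu_{\mc{I}_\mc{D}}$ with $\nu_{\mc{I}_\mc{D}}\le\eps\cost_\mc{I}(L)$, so I should route the whole argument through $\cost_{\mc{I}_\mc{D}}(S)$ first and only then through $\cost'$. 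Concretely: $\cost_\mc{I}(S)\le(1+2\eps)\cost_{\mc{I}_\mc{D}}(S)+\eps\cost_\mc{I}(L)$ by definition of $\nu_{\mc{I}_\mc{D}}$, and then $\cost_{\mc{I}_\mc{D}}(S)\le\cost'_{\mc{I}_\mc{D}}(S)$ client-by-client since $\dist(v,S)\le\dist'(v,S)$ as argued above; combining gives $\cost_\mc{I}(S)\le(1+O(\eps))\cost'_{\mc{I}_\mc{D}}(S)+O(\eps)\cost_\mc{I}(L)$.

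The main obstacle, and the step deserving the most care, is the bookkeeping in the first inequality: making sure the additive distortion $138\rho2^{i_c}$ at the \emph{top} admissible level $\log_2(3L_{\tilde c}/\eps+4\opt_{\tilde c})+\offset$ is small enough that after raising to the $q$-th power via \cref{faketriangleineq} the resulting $O(1/\eps^q)$ blow-up is absorbed — this is precisely why $\rho$ is chosen as $\eps^{q+4+1/q}/(1104\sigma(q+1)^q)$, with the $\sigma$ and $(q+1)^q$ cancelling $2^{\offset}$ and the surplus powers of $\eps$ providing the slack after the $q$-th power; I would track these exponents explicitly. A secondary subtlety is that $\dist'$ is neither symmetric nor a metric, so every triangle-inequality-style manipulation must go through the explicit two-part bound of \cref{lem:dist'} rather than generic metric reasoning, and one must be careful that the "witness solution" direction uses the \emph{first} part of \cref{lem:dist'} (additive bound, valid only at the cut level) while the "arbitrary solution" direction uses the observation that the shifted rounding makes $\dist'(v,S)\ge\dist(v,S)$.
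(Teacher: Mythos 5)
Your proposal is correct and follows essentially the same route as the paper's proof: the first inequality via the level bound from \cref{lem:struct}, the additive distortion of \cref{lem:dist'}, and \cref{faketriangleineq}, then the witness-solution guarantee; the second inequality via $\nu_{\mc{I}_\mc{D}}\leq\eps\cost_\mc{I}(L)$ combined with the pointwise bound $\dist(v,S)\leq\dist'(v,S)$. Your mid-proof detour on badly-cut clients is correctly abandoned in favour of routing through $\cost_{\mc{I}_\mc{D}}(S)$, which is exactly what the paper does.
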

\begin{proof}
To show the first inequality, we consider the rounded connection costs of 
clients to their closest facility in $\hat{S}$ via some interface point. That 
is, let $c$ be a client of $\mc{I}_\mc{D}$ and let~$f\in\hat{S}$ be its closest 
facility (according to $\dist(\cdot,\cdot)$). If $c\neq f$, there is a level 
$i\geq 1$ for which $\mc{D}$ cuts $c$ and $f$ at level $i-1$. By 
\cref{lem:dist'} we have $\dist'(c,f)\leq\dist(c,f)+138\cdot\rho 2^i$. Also, by 
\cref{lem:struct} we know that $i-1\leq 
\log_2(3L_{\tilde{c}}/\eps+4\opt_{\tilde{c}})+\offset$, where $L_{\tilde{c}}$ 
and $\opt_{\tilde{c}}$ are the respective minimum distances from the original 
position ${\tilde{c}}$ of $c$ to $L$ and $\opt$ in $\mc{I}$. Hence using the 
definitions of $\offset=\log_2(\sigma(q+1)^q/\eps^{q+1})$ and 
$\rho\leq\frac{\eps^{q+4+1/q}} {1104\cdot\sigma(q+1)^q}$ we get
\[
\dist'(c,f) &\leq \dist(c,f) +138\cdot\rho2^i\\
&\leq \dist(c,f) + 138\cdot\rho
2^{\log_2(3L_{\tilde{c}}/\eps+4\opt_{\tilde{c}})+\offset+1}\\
&\leq \dist(c,f)+138\cdot 4(L_{\tilde{c}}+\opt_{\tilde{c}})\cdot 
2^{\offset+1}/\eps \\
&\leq \dist(c,f)+(L_{\tilde{c}}+\opt_{\tilde{c}})\cdot 
1104\cdot\rho\sigma(q+1)^q/\eps^{q+2} \\
&\leq \dist(c,f)+\eps^{2+1/q}(L_{\tilde{c}}+\opt_{\tilde{c}}).
\]
In the other case when $c=f$ we have $\dist'(c,f)=0=\dist(c,f)$, and thus the 
above inequality again holds.

To bound $\dist'(c,f)^q$, we use the bound of \cref{faketriangleineq}, which 
can be stated as $(x+y)^q \leq (1+O(\eps)) x^q + O(1/\eps^q) y^q$ if $q\geq 1$. 
Applying this twice to the bound on $\dist'(c,f)$ above, we get
\[
\dist'(c,f)^q &\leq 
(1+O(\eps))\dist(c,f)^q+ O(\eps^{q+1})(L_{\tilde{c}}+\opt_{\tilde{c}})^q \\
&\leq (1+O(\eps))\dist(c,f)^q+ O(\eps^{q+1}(1+\eps))L_{\tilde{c}}^q 
+O(\eps)\opt_{\tilde{c}}^q \\
&\leq (1+O(\eps))\dist(c,f)^q+ O(\eps)(L_{\tilde{c}}^q+\opt_{\tilde{c}}^q).
\]
To bound $\cost'_{\mc{I}_\mc{D}}(\hat{S})$ using this inequality we define 
$L_{\tilde{v}}=\opt_{\tilde{v}}=0$ for any non-client $v$ of $\mc{I}_\mc{D}$, 
i.e., whenever $\chi_{\mc{I}_\mc{D}}(v)=0$, so that applying the definition of 
$\chi_{\mc{I}_\mc{D}}$ we obtain
\[
\cost'_{\mc{I}_\mc{D}}(\hat{S})&=
\sum_{v\in V}\chi_{\mc{I}_\mc{D}}(v)\cdot\dist'(v,\hat{S})^q\\
&\leq \sum_{v\in V}\chi_{\mc{I}_\mc{D}}(v) 
\left((1+O(\eps))\cdot\dist(v,\hat{S})^q+ 
O(\eps)(L_{\tilde{v}}^q+\opt_{\tilde{v}}^q)\right)\\
&= (1+O(\eps))\cost_{\mc{I}_\mc{D}}(\hat{S}) 
+O(\eps)(\cost_\mc{I}(L)+\cost_\mc{I}(\opt)).
\]
Since $\hat{S}$ is the witness solution of $\mc{I}_\mc{D}$, we know that 
$\cost_{\mc{I}_\mc{D}}(\hat S) \leq (1+O(\eps))\cost_{\mc{I}}(\opt) + O(\eps) 
\cost_{\mc{I}}(L)$ so that also $\cost'_{\mc{I}_\mc{D}}(\hat{S})\leq 
(1+O(\eps))\cost_{\mc{I}}(\opt) + O(\eps) \cost_{\mc{I}}(L)$, as claimed.

For the second inequality of the lemma for any solution $S$, since 
$\mc{I}_\mc{D}$ has small distortion with probability at least $1-\eps$ 
(according to \cref{lem:struct}) we have $\cost_\mc{I}(S) -  (1+2\eps) 
\cost_{\mc{I}_\mc{D}}(S)\leq \nu_{\mc{I}_\mc{D}} \leq \eps \cost_\mc{I}(L)$. 
This immediately implies $\cost_\mc{I}(S)\leq (1+2\eps) 
\cost'_{\mc{I}_\mc{D}}(S)+ \eps \cost_\mc{I}(L)$, since 
$\dist(c,f)\leq\dist'(c,f)$ by the triangle inequality of $\dist(\cdot,\cdot)$ 
and the fact that $\langle x\rangle_i\geq x$ for any $x$.
\end{proof}

The next lemma states the properties of the dynamic program that for any coarse 
instance~$\mc{I}_0$ computes an \emph{optimum rounded interface-respecting 
solution}, which formally is a subset $\opt'$ of facilities that minimizes 
$\cost'_{\mc{I}_0}(\opt')$ with $|\opt'|\leq k$ for \cluster, while for \fl it 
minimizes $\cost'_{\mc{I}_0}(\opt')+\sum_{f\in \opt'}w_f$. This step of the 
algorithm exploits the conciseness of the interface sets and the coarseness of 
the instance to bound the runtime. We prove the following lemma in 
\cref{sec:dp}.

\begin{restatable}{lemma}{lemdp}
\label{lem:dp}
Let $\mc{I}_0$ be an instance of \cluster or \fl that for some~$\eps>0$ is 
coarse w.r.t.~a hierarchical decomposition $\mc{D}$ with concise interface sets 
for some~$1/2\geq\rho>0$ according to \cref{lem:decomp}. An optimum rounded 
interface-respecting solution for~$\mc{I}_0$ can be computed in 
$(nX/\eps)^{(\hw/\rho)^{O(1)}}$ time.
\end{restatable}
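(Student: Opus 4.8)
The plan is a dynamic program over the hierarchical decomposition $\mc{D}$, processing its parts top-down starting from the root $W\in\mc{B}_{\lambda(W)}$. Coarseness is what makes this polynomial: any part $B\in\mc{B}_{\xi(W)}$ contains at most one facility, so the whole subtree of $\mc{D}$ below such a $B$ can be handled directly as a base case, and since $\lambda(W)-\xi(W)=2\log_2(nX/\eps)+O(1)$, every part the recursion actually descends into has only $O(\log(nX/\eps))$ ancestors in $\mc{D}$ (each sitting at a strictly higher level).

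For such a part $B$ I would index the DP table by $(g,h,k')$, where $k'\in\{0,\dots,k\}$ counts the facilities opened inside $B$ (omitted for \fl), and $g$ and $h$ assign, to every interface point $p\in I_{B'}$ of every ancestor $B'$ of $B$ in $\mc{D}$, the rounded distance $\langle\dist(p,\cdot)\rangle_{\mathrm{lev}(B')}$ from $p$ to the nearest open facility of the global solution that lies, respectively, outside $B$ and inside $B$. The cell stores the least possible value of $\sum_{c\in B}\chi(c)\dist'(c,\cdot)^q$ (plus opening costs inside $B$, for \fl) over all ways of opening $k'$ facilities inside $B$: a client $c\in B$ is either served through some facility inside $B$, which is handled by the recursion on $B$'s children, or served externally, in which case its contribution is $\min_{B'\supsetneq B}\min_{p\in I_{B'}}(\dist(c,p)+g(p))$ exactly by the definition of $\dist'$. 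Base cases are computed directly: with at most one internal facility, the interface-respecting cost of each client is determined once we decide whether to open it, and all values $\dist'(u,v)$ are precomputed from $\mc{D}$. In the recursive step the children $C_1,\dots,C_m$ of $B$ are combined one at a time; after committing to $(k'_\ell,h_\ell)$ for each child with $\sum_\ell k'_\ell=k'$, the guess handed down to $C_\ell$ is forced, since the distance from an ancestor interface point to the nearest facility outside $C_\ell$ is read off from $g$ and from the $h_{\ell'}$ of the siblings, so no additional guessing occurs; moreover, the costs of clients whose lowest-level part is exactly $B$ are already captured inside $\mathrm{DP}[C_\ell,g_\ell,h_\ell,k'_\ell]$ through the $I_B$-component of $g_\ell$. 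Starting the recursion at $W$ with empty $g,h$ (nothing lies outside $W$) makes the table self-consistent, and the optimum rounded interface-respecting solution is read off from $\mathrm{DP}[W,\cdot]$.

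The running time is governed by the number of table entries per part. A guess or profile has $O(\log(nX/\eps))\cdot(\hw/\rho)^{O(1)}$ coordinates by conciseness of the interface sets, and — this is exactly where the weak preciseness of \cref{lem:decomp} is absorbed — each coordinate has only $(\hw/\rho)^{O(1)}$ possible values: if $p\in I_{B'}$ with $\mathrm{lev}(B')=j$, then a facility $f$ that some client actually reaches through $p$ must have $B'$ as the lowest-level part containing both, so $f\in B'$; since $I_{B'}$ lies within $(1+2\rho)\diam(B')$ of $B'$ and $\diam(B')\le 2^{j+5}$ by \cref{lem:diam}, we get $\dist(p,f)=O(2^{j})$, hence the rounded value, a multiple of $\rho2^{j}$, is one of $O(1/\rho)$ possibilities (plus $\infty$ when no facility is reachable). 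Thus there are $\big((\hw/\rho)^{O(1)}\big)^{O(\log(nX/\eps))\cdot(\hw/\rho)^{O(1)}}\cdot(k+1)=(nX/\eps)^{(\hw/\rho)^{O(1)}}$ entries per part; combining children one at a time keeps the work per entry polynomial, and $\mc{D}$ has only $\mathrm{poly}(n)$ parts, so the total is $(nX/\eps)^{(\hw/\rho)^{O(1)}}$. I expect the real work to be not the counting but the careful design and the correctness proof of the table: unlike portal-respecting paths, an interface-respecting connection is pinned to a single level (the lowest common part), so each client's cost must be booked at precisely the ancestor owning that interface, and one has to verify that the internal profiles $h$ and external guesses $g$ propagate between consecutive levels in a way that reproduces $\dist'(c,S)^q$ exactly for the solution $S$ being assembled.
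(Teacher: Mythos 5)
Your proposal matches the paper's proof in all essentials: the same table indexed by a part of $\mc{D}$, a facility count, and external/internal rounded-distance profiles over the interface sets of all ancestors, the same base case supplied by coarseness at level $\xi(W)$, the same one-child-at-a-time combination with consistency between profiles, and the same counting argument (only $O(\log(nX/\eps))$ levels, $O(1/\rho)$ rounded values per interface point by \cref{lem:diam}, $(\hw/\rho)^{O(1)}$ interface points per level) yielding $(nX/\eps)^{(\hw/\rho)^{O(1)}}$. The remaining work you correctly flag --- verifying that each client's contribution is booked consistently as internal and external profiles propagate across siblings and levels --- is exactly the case analysis the paper carries out in its correctness lemmas.
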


We are now ready to put together the above lemmas to prove \cref{thm:main-alg}, 
which we restate here for convenience.

\thmmainalg*

\begin{proof}%
Given an instance of \cluster or \fl we first apply \cref{lem:coarse} to reduce 
to a coarse instance. \cref{lem:coarse} also supplies a hierarchical 
decomposition $\mc{D}$ with the properties given in \cref{lem:decomp}. We use 
this together with a constant approximation $L$ of the coarse instance $\mc{I}$ 
to compute a new instance $\mc{I}_\mc{D}$ with small distortion via 
\cref{lem:struct}. On this instance we apply \cref{lem:dp} to compute an optimum 
rounded interface-respecting solution~$\opt'$ in $(nX/\eps)^{(\hw/\rho)^{O(1)}}$ 
time. Since the facility sets of $\mc{I}$ and~$\mc{I}_\mc{D}$ are the same, we 
may output~$\opt'$ for~$\mc{I}$, which can then be converted into a solution of 
the original non-coarse input instance using \cref{lem:coarse}, while only 
losing a $(1+O(\eps))$-factor in the objective function. Hence it suffices to 
show that $\opt'$ is a $(1+O(\eps))$-approximation for $\mc{I}$ and to bound the 
runtime of the algorithm.

From \cref{lem:int-resp} we get $\cost_\mc{I}(\opt')\leq 
(1+O(\eps))\cost'_{\mc{I}_\mc{D}}(\opt')+O(\eps)\cost_\mc{I}(L)$ by setting 
$\rho\leq\frac{\eps^{q+4+1/q}} {1104\cdot\sigma(q+1)^q}$ with probability 
$1-\eps$. We know that $\cost'_{\mc{I}_\mc{D}}(\opt')\leq 
\cost'_{\mc{I}_\mc{D}}(\hat{S})$ for \cluster, where~$\hat{S}$ is the witness 
solution of $\mc{I}_\mc{D}$. Putting these inequalities together we have 
$\cost_\mc{I}(\opt')\leq \cost'_{\mc{I}_\mc{D}}(\hat{S})+ 
O(\eps)\cost_\mc{I}(\opt)$, since $L$ is a constant approximation to the optimum 
solution $\opt$ to $\mc{I}$. For the same reason, \cref{lem:int-resp} also 
implies that $\cost'_{\mc{I}_\mc{D}}(\hat{S})\leq(1+O(\eps))\cost_\mc{I}(\opt)$, 
which gives $\cost_\mc{I}(\opt')\leq (1+O(\eps))\cost_\mc{I}(\opt)$, i.e., for 
\cluster the solution $\opt'$ is a $(1+O(\eps))$-approximation to $\opt$ with 
probability $1-\eps$.

For \fl we have $\cost'_{\mc{I}_\mc{D}}(\opt')+\sum_{f\in\opt'}w_f\leq 
\cost'_{\mc{I}_\mc{D}}(\hat{S})+\sum_{f\in\hat{S}}w_f$, which by the above 
bounds gives $\cost_\mc{I}(\opt')+\sum_{f\in\opt'}w_f\leq 
(1+O(\eps))\cost_\mc{I}(\opt)+\sum_{f\in\hat{S}}w_f$. In case of \fl we have the 
additional property that $\hat{S}$ is the union of $\opt$ and the badly cut 
clients~$B_\mc{D}$, which implies 
$\sum_{f\in\hat{S}}w_f\leq\sum_{f\in\opt}w_f+\sum_{f\in B_\mc{D}}w_f$. 
Furthermore, we have $\sum_{f\in B_\mc{D}}w_f\leq \eps\cdot\sum_{f\in L}w_f$, 
and hence $\cost_\mc{I}(\opt')+\sum_{f\in\opt}w_f\leq 
(1+O(\eps))\cost_\mc{I}(\opt)+\sum_{f\in\opt}w_f+\eps\cdot\sum_{f\in L}w_f$. 
Again using that $L$ is a constant approximation of $\opt$ we obtain 
$\cost_\mc{I}(\opt')+\sum_{f\in\opt}w_f\leq 
(1+O(\eps))(\cost_\mc{I}(\opt)+\sum_{f\in\opt}w_f)$, i.e., also in this case 
$\opt'$ is a $(1+O(\eps))$-approximation to $\opt$ with probability $1-\eps$.

Next we bound the runtime. According to \cref{lem:int-resp} we need to set 
$\rho\leq\frac{\eps^{q+4+1/q}} {1104\cdot\sigma(q+1)^q}$, while the scaling 
probability factor we obtain from \cref{lem:decomp} is 
$\sigma\leq(\hw\log(1/\rho))^c$ for some constant~$c$. Note that the bound on 
$\rho$ depends on $\sigma$ and vice versa, which means that we need to be 
careful when determining a value for $\rho$ respecting the bound from 
\cref{lem:int-resp}. In particular, substituting the bound for~$\sigma$ in the 
bound for $\rho$ and rearranging, it suffices to set $\rho$ such that 
$\rho\log^c(1/\rho)\leq\frac{\eps^{q+4+1/q}} {1104\cdot(q+1)^q h^c}$. Observe 
that for any value $x>0$ such that $\log^c(1/x^2)\leq 1/x$, setting 
$\rho=\frac{x}{\log^c(1/x^2)}$ implies that 
$\rho\log^c(1/\rho)=x\cdot\frac{\log^c(\log^c(1/x^2)\cdot\frac{1}{x})} {\log^c 
(1/x^2)}\leq x$. Since there exists some constant $c'$ such that 
$\log^c(1/x^2)\leq 1/x$ for any $x\in(0,c']$, for sufficiently small $\eps$ we 
can set $x=\frac{\eps^{q+4+1/q}} {1104\cdot(q+1)^q h^c}$ so that the inequality 
of \cref{lem:int-resp} is fulfilled (note that $\eps$ can be chosen independent 
of $h$ and $q$). Setting $x$ this way also implies 
$\rho=\frac{x}{\log^c(1/x^2)}\geq x^2\geq (\frac{\eps}{hq})^{\Theta(q)}$, and 
thus according to \cref{lem:dp} the runtime of the dynamic program becomes \[ 
(nX/\eps)^{(\hw/\rho)^{O(1)}}\leq (nX)^{(\hw q/\eps )^{O(q)}}. \] All other 
steps of the algorithm run in polynomial time, and so the claimed runtime 
follows.
\end{proof}

\subsection{The dynamic program (proof of \cref{lem:dp})}
\label{sec:dp}

We describe the algorithm for \cluster, and only mention in the end how to 
modify the algorithm to compute a solution for \fl. We develop a dynamic 
program, which follows more or less standard techniques as for instance given in 
\cite{AroraRagRao98,kolliopoulos2007nearly}, but needs to handle the weaker 
preciseness property of the interface.

The solution is computed by a dynamic program recursing on the 
decomposition~$\mc{D}$. Let $W$ be the vertex set that $\mc{D}$ decomposes, and 
which contains all clients and facilities of the coarse instance~$\mc{I}$. 
Roughly speaking, the table of the dynamic program will have an entry for every
part $B\in\mc{B}_i$ of $\mc{D}$ on all levels $i\geq\xi(W)$, for which it will 
estimate the distance from each interface point on all higher levels $j\geq 
i+1$ to the closest facility of the optimum solution. That is, if 
$\tilde{B}\in\mc{B}_j$ is a higher-level part for which $B\subseteq\tilde{B}$, 
then the distances from all interface points $I_{\tilde{B}}$ to facilities of 
the solution in $\tilde{B}$ will be estimated for $B$.

Here the estimation happens in two ways. First off, the distances to facilities 
outside of $B$ have to be guessed. That is, there is an \emph{external} distance 
function $d^+_j$ that assigns a distance to each interface point 
of~$I_{\tilde{B}}$, anticipating the distance from such a point to the closest 
facility of $\tilde{B}$, if this facility lies outside of $B$. In order to 
verify whether the guess was correct, each entry for a part $B$ on level $i$ 
also provides an \emph{internal} distance function~$d^-_j$, which stores the 
distance from each interface point of~$I_{\tilde{B}}$ on level $j\geq i+1$ to 
the closest facility, if the facility is guessed to lie inside of $B$. 

The other way in which distances are estimated concerns the preciseness with 
which they are stored. The distance functions $d^+_j$ and $d^-_j$ will only take 
rounded values $\langle x\rangle_j$ where \mbox{$0<x\leq 2^{j+6}$}, or~$\infty$ 
if no facility at the appropriate distance exists. In particular, if the 
facility of the solution in $\tilde{B}$ that is closest to $p\in I_{\tilde{B}}$ 
lies outside of $B$ then $d^-_j(p)=\infty$, and if it lies inside of $B$ then 
$d^+_j(p)=\infty$. If there is no facility of the solution in $\tilde{B}$ then 
both distance functions $d^+_j$ and $d^-_j$ are set to $\infty$ for all~$p\in 
I_{\tilde{B}}$. Note that this means that at least one of $d^+_j(p)$ and 
$d^-_j(p)$ is always set to $\infty$. Note also that the finite values in the 
domains of the distance functions admit to store the rounded distance to any 
facility in $\tilde{B}$ on level $j$, since the diameter of $\tilde{B}$ is at 
most $2^{j+5}$ by \cref{lem:diam}, and the distance from any $p\in 
I_{\tilde{B}}$ to $\tilde{B}$ is at most $(1+2\rho)\diam(\tilde{B})$ by 
\cref{lem:interface}, i.e., for any $f\in\tilde{B}\cap F$ we have 
$\dist(p,f)\leq (1+2\rho)2^{j+5}\leq 2^{j+6}$ using $\rho\leq 1/2$.

\subparagraph*{Formal definition of the table.}
Let us denote by $I^j_B$ the interface set of the part $\tilde{B}\in\mc{B}_j$ 
on level $j\geq i+1$ containing $B\in\mc{B}_i$, i.e., $I^j_B=I_{\tilde{B}}$. 
Every entry of the dynamic programming table~$T$ is defined by a part 
$B\in\mc{B}_i$ of $\mc{D}$ on a level $i\in\{\xi(W),\ldots,\lambda(W)\}$, and 
two distance functions $d^+_j,d^-_j: I^j_B\to\{\langle x\rangle_j\mid 0<x\leq 
2^{j+6}\}\cup\{\infty\}$ for each $j\in\{i+1,\ldots,\lambda(W)\}$, such that 
$\max\{d^+_j(p),d^-_j(p)\}=\infty$ for all $p\in I^j_B$. Additionally, each 
entry comes with an integer \mbox{$k'\in\{0,\ldots,k\}$}, which is a guess on 
the number of facilities that the optimum solution contains in $B$.

In an entry $T[B,k',(d^+_j,d^-_j)_{j=i+1}^{\lambda(W)}]$ we store the rounded 
interface-respecting cost of connecting the clients of $B$ to facilities that 
adhere to the distance functions. More concretely, let $S\subseteq F\cap B$ be 
any subset of facilities in $B$. We say that $S$ is \emph{compatible} with an 
entry $T[B,k',(d^+_j,d^-_j)_{j=i+1}^{\lambda(W)}]$ if $|S|=k'$, and for any 
$j\geq i+1$ the values of the distance functions for every interface point~$p\in 
I^j_B$ are set to either
\begin{itemize}
\item $d^-_j(p)=\langle\dist(p,S)\rangle_j$ and $d^+_j(p)=\infty$, or
\item $d^+_j(p)\leq\langle\dist(p,S)\rangle_j$ and $d^-_j(p)=\infty$.
\end{itemize}
Note that this means that the distance to a facility outside of $B$ as guessed 
by $d^+$ should not exceed the distance to a facility of $S$ inside of $B$. 
Recall that $\dist(v,\emptyset)=\infty$, and so the empty set $S=\emptyset$ is 
compatible with $T[B,k',(d^+_j,d^-_j)_{j=i+1}^{\lambda(W)}]$ if $k'=0$, and the 
values of all internal distance functions are set to~$\infty$. An entry 
$T[B,k',(d^+_j,d^-_j)_{j=i+1}^{\lambda(W)}]$ for $B\in\mc{B}_i$ should store the 
minimum value $C_B(S)$ over all sets~$S\subseteq F\cap B$ compatible with the 
entry, where $C_B(S)$ is the cost of a solution that opens the facilities of $S$ 
in $B$ but also connects to open facilities outside of $B$ through interface 
points according to the guessed distances given by~$d^+$. Formally, $C_B(S)$ is 
defined as
\[
C_B(S)=\sum_{v\in B} \chi_{\mc{I}_0}(v)\cdot \min\Big\{\dist'(v,S)^q, 
\min_{\substack{j\geq i+1\\ p\in I^j_B}} \big\{(\dist(v,p) 
+d^+_j(p))^q\big\}\Big\}.
\]
If there is no compatible set $S\subseteq F\cap B$ for the entry, then 
$T[B,k',(d^+_j,d^-_j)_{j=i+1}^{\lambda(W)}]=\infty$.

On the highest level $i=\lambda(W)$, there are no distance functions to adhere 
to on levels \mbox{$j\geq i+1$}, and thus any set $S\subseteq W$ of facilities 
is compatible with the entry for $B=W$ and~$k'=|S|$. Furthermore, 
$\cost'_{\mc{I}_0}(S)$ is equal to $C_W(S)$, since $W$ contains all clients and 
facilities of the coarse instance~$\mc{I}_0$. In particular, the entry of $T$ 
for which $k'=k$ and $B=W$, will contain the objective function value of the 
optimum rounded interface-respecting solution to $\mc{I}_0$. Hence if we can 
compute the table $T$ we can also output the optimum rounded 
interface-respecting solution via this entry.

\subparagraph*{Computing the table.}
We begin with a part $B\in\mc{B}_{\xi(W)}$ on the lowest considered 
level~$\xi(W)$, for which we know that $B$ contains at most one facility, as 
$\mc{I}_0$ is coarse. If $B$ contains no facility, then only $S=\emptyset$ can 
be compatible with the entry $T[B,k',(d^+_j,d^-_j)_{j=\xi(W)+1}^{\lambda(W)}]$ 
and computing the value of the entry is straightforward given the definition of 
$C_B(S)$, where all incompatible entries are set to $\infty$. If $B$ contains 
one facility~$f$, then any compatible set $S$ is either empty or only contains 
$f$. We can thus check whether either of the two options is compatible with the 
entry $T[B,k',(d^+_j,d^-_j)_{j=\xi(W)+1}^{\lambda(W)}]$ by checking if $k'$ is 
set to $0$ or~$1$, respectively, and checking that all values of the internal 
distance function are set correctly. Thereafter we can again use the definition 
of $C_B(S)$ to compute the values for both possible sets $S$ and store them in 
the respective compatible entries. All incompatible entries are set to $\infty$.

Now fix a part $B\in\mc{B}_i$ that lies on a level $i>\xi(W)$. We show how to 
recursively compute all entries $T[B,k',(d^+_j,d^-_j)_{j=i+1}^{\lambda(W)}]$ for 
all values~$k'$ and distance functions. By induction we have already computed 
the correct values of all entries of $T$ for parts $B'\in\mc{B}_{i-1}$ where 
$B'\subseteq B$. We order these parts arbitrarily, so that $B'_1,\ldots, B'_b$ 
are the parts of $\mc{B}_{i-1}$ contained in~$B$. We then define an auxiliary 
table $\hat{T}$ that is similar to the table $T$, but should compute the best 
compatible facility set in the union $B'_{\leq\ell}=\bigcup_{h=1}^\ell B'_h$ of 
the first $\ell$ subparts of~$B$. Accordingly, $\hat{T}$ has an entry for each 
union of parts $B'_{\leq\ell}$, each $k'\in\{0,\ldots,k\}$, and distance 
functions $d^+_j,d^-_j: I^j_B\to\{\langle x\rangle_j\mid 0<x\leq 
2^{j+6}\}\cup\{\infty\}$ for each $j\in\{i,\ldots,\lambda(W)\}$, such that 
$\max\{d^+_j(p),d^-_j(p)\}=\infty$ for all $p\in I^j_B$. Here, 
naturally,~$I^i_B=I_B$, i.e., the entry also takes the interface set of $B$ into 
account.

Analogous to before, a set $S\subseteq F\cap B'_{\leq\ell}$ of facilities in the 
union is \emph{compatible} with an entry 
$\hat{T}[B'_{\leq\ell},k',(d^+_j,d^-_j)_{j=i}^{\lambda(W)}]$ if $|S|=k'$, and 
for any $j\geq i$ the values of the distance functions for every interface 
point~$p\in I^j_B$ are set to either
\begin{itemize}
\item $d^-_j(p)=\langle\dist(p,S)\rangle_j$ and $d^+_j(p)=\infty$, or
\item $d^+_j(p)\leq\langle\dist(p,S)\rangle_j$ and $d^-_j(p)=\infty$.
\end{itemize} 
The entry $\hat{T}[B'_{\leq\ell},k',(d^+_j,d^-_j)_{j=i}^{\lambda(W)}]$ should 
store the minimum value of $\hat{C}_{\leq\ell}(S)$ over all compatible 
sets~$S\subseteq F\cap B'_{\leq\ell}$, where $\hat{C}_{\leq\ell}(S)$ is defined 
as
\[
\hat{C}_{\leq\ell}(S)=
\sum_{v\in B'_{\leq\ell}} \chi_{\mc{I}_0}(v)\cdot \min\Big\{\dist'(v,S)^q, 
\min_{\substack{j\geq i\\ p\in I^j_B}} \big\{(\dist(v,p) 
+d^+_j(p))^q\big\}\Big\}.
\] 
If there is no compatible set $S\subseteq F\cap B'_{\leq\ell}$ for the entry, 
then $\hat{T}[B'_{\leq\ell},k',(d^+_j,d^-_j)_{j=i}^{\lambda(W)}]=\infty$.

To compute $T$ using the auxiliary table $\hat{T}$, note that since $B=B'_{\leq 
b}$, any set $S\subseteq F\cap B$ is compatible with the entry 
$T[B,k',(d^+_j,d^-_j)_{j=i+1}^{\lambda(W)}]$ if and only if it is compatible 
with a corresponding entry $\hat{T}[B'_{\leq b},k', 
(d^+_j,d^-_j)_{j=i}^{\lambda(W)}]$ for some internal distance function $d^-_i$ 
on level $i$. Furthermore, if $d^+_i(p)=\infty$ for all $p\in I^i$, then 
$C_B(S)=\hat{C}_{\leq b}(S)$ for such a set $S$. Therefore we can easily compute 
the entry $T[B,k',(d^+_j,d^-_j)_{j=i+1}^{\lambda(W)}]$ from $\hat{T}$ by setting
\[
T[B,k',(d^+_j,d^-_j)_{j=i+1}^{\lambda(W)}]= 
\min_{d^-_i}\Big\{\hat{T}[B'_{\leq b},k', (d^+_j,d^-_j)_{j=i}^{\lambda(W)}] \mid 
\forall p\in I^i_B : d^+_i(p)=\infty\Big\}.
\]

\subparagraph*{Computing the auxiliary table.}
Also computing an entry of $\hat{T}$ for $B'_{\leq 1}$ is easy using the 
entries of $T$ for $B'_1$, since $B'_1=B'_{\leq 1}$ and so (taking the index 
shift of $i$ into account) we have
\[
\hat{T}[B'_{\leq 1},k',(d^+_j,d^-_j)_{j=i}^{\lambda(W)}]= 
T[B'_1,k',(d^+_j,d^-_j)_{j=i}^{\lambda(W)}].
\]

To compute entries of $\hat{T}$ for some $B'_{\leq\ell}$ where $\ell\geq 2$, we 
combine entries of table $T$ for~$B'_\ell$ with entries of table $\hat{T}$ for 
$B'_{\leq\ell-1}$. However we will only combine entries with distance functions 
that imply compatible solutions. More concretely, we say that distance functions 
$(d^+_j,d^-_j)_{j=i}^{\lambda(W)}$ for $B'_{\leq\ell}$, 
$(\delta^+_j,\delta^-_j)_{j=i}^{\lambda(W)}$ for $B'_\ell$, and 
$(\beta^+_j,\beta^-_j)_{j=i}^{\lambda(W)}$ for $B'_{\leq\ell-1}$ are 
\emph{consistent} if for every level $j\geq i$ and $p\in I^j_B$ we have one of
\begin{enumerate}
\item $d^+_j(p)=\delta^+_j(p)=\beta^+_j(p)$ and 
$d^-_j(p)=\delta^-_j(p)=\beta^-_j(p)=\infty$, or
\item $d^-_j(p)=\delta^-_j(p)=\beta^+_j(p)$ and 
$d^+_j(p)=\delta^+_j(p)=\beta^-_j(p)=\infty$, or
\item $d^-_j(p)=\delta^+_j(p)=\beta^-_j(p)$ and 
$d^+_j(p)=\delta^-_j(p)=\beta^+_j(p)=\infty$.
\end{enumerate}

The algorithm now considers all sets of consistent distance functions to compute 
an entry $\hat{T}[B'_{\leq\ell},k',(d^+_j,d^-_j)_{j=i}^{\lambda(W)}]$ for 
$\ell\geq 2$ by setting it to
\[\label{eqn:recursion}
\min\big\{ T[B'_\ell,k'',(\delta^+_j,\delta^-_j)_{j=i}^{\lambda(W)}]+
\hat{T}[B'_{\leq\ell-1},k'-k'',(\beta^+_j,\beta^-_j)_{j=i}^{\lambda(W)}]
\;\mid\\
k''\in\{0,\ldots,k'\} \text{ and } (d^+_j,d^-_j)_{j=i}^{\lambda(W)}, 
(\delta^+_j,\delta^-_j)_{j=i}^{\lambda(W)}, 
(\beta^+_j,\beta^-_j)_{j=i}^{\lambda(W)}
\text{ are consistent} \big\}
\]

We now prove the correctness using two lemmas. The following lemma implies that 
if we only consider consistent distance functions to compute entries 
recursively, then the entries will store values for compatible solutions.

\begin{lemma}\label{lem:correct1}
Let $(d^+_j,d^-_j)_{j=i}^{\lambda(W)}$ for $B'_{\leq\ell}$, 
$(\delta^+_j,\delta^-_j)_{j=i}^{\lambda(W)}$ for $B'_\ell$, and 
$(\beta^+_j,\beta^-_j)_{j=i}^{\lambda(W)}$ for $B'_{\leq\ell-1}$ be consistent 
distance functions, and let $S_1\subseteq B'_\ell\cap F$ and $S_2\subseteq 
B'_{\leq\ell-1}\cap F$ be facility sets. If $S_1$ is compatible with entry 
$T[B'_\ell,|S_1|,(\delta^+_j,\delta^-_j)_{j=i}^{\lambda(W)}]$ and $S_2$ is 
compatible with entry 
$\hat{T}[B'_{\leq\ell-1},|S_2|,(\beta^+_j,\beta^-_j)_{j=i}^{\lambda(W)}]$, then 
the union~$S=S_1\cup S_2$ is compatible with entry 
$\hat{T}[B'_{\leq\ell},|S|,(d^+_j,d^-_j)_{j=i}^{\lambda(W)}]$.
Moreover, $\hat{C}_{\leq\ell}(S)=C_{B'_\ell}(S_1)+\hat{C}_{\leq\ell-1}(S_2)$.
\end{lemma}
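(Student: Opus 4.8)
The plan is to reduce everything to three ingredients: disjointness of the subparts of $B$, monotonicity of the rounding operator $\langle\cdot\rangle_j$, and the preciseness bound of \cref{lem:dist'}. First I would record the elementary consequences of disjointness. Since $\mc{B}_{i-1}$ is a partition of $W$ by \cref{lem:correctness}, the parts $B'_1,\dots,B'_b$ are pairwise disjoint, so $B'_\ell\cap B'_{\le\ell-1}=\emptyset$ and hence $|S|=|S_1|+|S_2|$, $\dist(p,S)=\min\{\dist(p,S_1),\dist(p,S_2)\}$ for every $p$, and --- since $\langle\cdot\rangle_j$ is non-decreasing and $\langle\infty\rangle_j=\infty$ --- $\langle\dist(p,S)\rangle_j=\min\{\langle\dist(p,S_1)\rangle_j,\langle\dist(p,S_2)\rangle_j\}$ for every level $j$. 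I would also note that $I^j_B=I^j_{B'_\ell}=I^j_{B'_{\le\ell-1}}$ for all $j\ge i$, so all three entries involve distance functions over the same ground sets, and that for $v\in B'_\ell$ and any $f\in S_2$ (or symmetrically) the vertices $v$ and $f$ are cut by $\mc{D}$ exactly at level $i-1$, so that $\dist'(v,S_2)=\min_{p\in I_B}\bigl(\dist(v,p)+\langle\dist(p,S_2)\rangle_i\bigr)$.

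For the compatibility claim I would fix a level $j\ge i$ and $p\in I^j_B$ and run through the three consistency cases. Compatibility of $S_1$ with the $\delta$-entry and of $S_2$ with the $\beta$-entry pins the finite values of the $\delta$'s and $\beta$'s to $\langle\dist(p,S_1)\rangle_j$ and $\langle\dist(p,S_2)\rangle_j$; combining this with the consistency equalities and the min-identity above forces $(d^+_j(p),d^-_j(p))$ into one of the two options defining compatibility of $S$ with the $d$-entry. For instance, if $d^+_j(p)=\delta^+_j(p)=\beta^+_j(p)$ then $d^+_j(p)\le\min\{\langle\dist(p,S_1)\rangle_j,\langle\dist(p,S_2)\rangle_j\}=\langle\dist(p,S)\rangle_j$ with $d^-_j(p)=\infty$; if $d^-_j(p)=\delta^-_j(p)=\beta^+_j(p)$ then $d^-_j(p)=\langle\dist(p,S_1)\rangle_j=\beta^+_j(p)\le\langle\dist(p,S_2)\rangle_j$, hence $d^-_j(p)=\langle\dist(p,S)\rangle_j$; and the third case is symmetric with $S_1$ and $S_2$ swapped. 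Together with $|S|=|S_1|+|S_2|$ this yields compatibility of $S$ with $\hat{T}[B'_{\le\ell},|S|,(d^+_j,d^-_j)_{j=i}^{\lambda(W)}]$.

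Finally, for the cost identity I would use disjointness to split $\hat{C}_{\le\ell}(S)=\sum_{v\in B'_\ell}(\cdots)+\sum_{v\in B'_{\le\ell-1}}(\cdots)$ and prove term by term that the $B'_\ell$-sum equals $C_{B'_\ell}(S_1)$ and the $B'_{\le\ell-1}$-sum equals $\hat{C}_{\le\ell-1}(S_2)$. Take $v\in B'_\ell$; using $\dist'(v,S)=\min\{\dist'(v,S_1),\dist'(v,S_2)\}$ it suffices to show that $\dist'(v,S_2)^q$ together with $\min_{j\ge i,\,p\in I^j_B}(\dist(v,p)+d^+_j(p))^q$ contributes to the overall minimum exactly what $\min_{j\ge i,\,p\in I^j_B}(\dist(v,p)+\delta^+_j(p))^q$ does (the common term $\dist'(v,S_1)^q$ being present on both sides). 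In one direction consistency gives $\delta^+_j(p)\le d^+_j(p)$ for every $j,p$ --- equality in the first two consistency cases, and $\delta^+_j(p)=d^-_j(p)<\infty=d^+_j(p)$ in the third --- and in that third case compatibility of $S_2$ forces $\delta^+_j(p)=\langle\dist(p,S_2)\rangle_j$, so \cref{lem:dist'} gives $\dist'(v,S_2)\le\dist(v,p)+\langle\dist(p,S_2)\rangle_j=\dist(v,p)+\delta^+_j(p)$; hence each $\delta^+$-term is dominated either by the matching $d^+$-term or by $\dist'(v,S_2)$. In the reverse direction the only subtle point is that $\dist'(v,S_2)$ is matched by the level-$i$ term $(\dist(v,p^*)+\delta^+_i(p^*))^q$, where $p^*\in I_B$ achieves $\dist'(v,S_2)=\dist(v,p^*)+\langle\dist(p^*,S_2)\rangle_i$: examining the three consistency cases on level $i$ for $p^*$, either $\delta^+_i(p^*)\le\langle\dist(p^*,S_2)\rangle_i$ directly, or one extracts $\langle\dist(p^*,S_1)\rangle_i\le\langle\dist(p^*,S_2)\rangle_i$ so that \cref{lem:dist'} applied inside the level-$i$ part $B$ (which contains $v$ and all of $S_1$) gives $\dist'(v,S_1)\le\dist'(v,S_2)$ and the $S_2$-term is redundant, or $S_2=\emptyset$ and the claim is vacuous. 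The case $v\in B'_{\le\ell-1}$ is handled symmetrically by swapping $(S_1,\delta)$ with $(S_2,\beta)$, giving $\hat{C}_{\le\ell}(S)=C_{B'_\ell}(S_1)+\hat{C}_{\le\ell-1}(S_2)$. The main obstacle is exactly this last matching of the ``guessed external distances'' $\delta^+$ and $\beta^+$ against the true interface-respecting connection lengths measured by $\dist'$, since one must track the two different rounding levels --- $\langle\cdot\rangle_i$ used by $\dist'$ at the level where $v$ and a cross-part facility are cut versus $\langle\cdot\rangle_j$ for $j>i$ --- which is precisely where the bound of \cref{lem:dist'} must be invoked with care.
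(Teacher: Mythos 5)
Your proof is correct and relies on exactly the same ingredients as the paper's argument: disjointness of the subparts (giving $|S|=|S_1|+|S_2|$ and the min-identity for $\langle\dist(p,\cdot)\rangle_j$), the consistency/compatibility case analysis that pins each finite distance-function value to $\langle\dist(p,S_1)\rangle_j$ or $\langle\dist(p,S_2)\rangle_j$, and \cref{lem:dist'} to convert an interface point carrying a guessed distance back into a bound on $\dist'(v,S_1)$ or $\dist'(v,S_2)$. The paper organizes the cost identity as a case distinction on whether each vertex's contribution is realized by a facility or by an interface point (via \cref{clm:contr1,clm:contr2}, proved by contradiction/contraposition), whereas you prove the two inequalities between the min-expressions directly; this is a reorganization of the same argument rather than a different route.
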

\begin{proof}
To prove compatibility of $S$ with the entry 
$\hat{T}[B'_{\leq\ell},|S|,(d^+_j,d^-_j)_{j=i}^{\lambda(W)}]$, it suffices to 
show that the distance functions are set correctly. Fix a level $j\geq i$ and an 
interface point~$p\in I^j_B$. There are three cases to consider, according to 
the definition of consistency of the distance functions. In the first case, all 
three internal distance functions are set to~$\infty$, and all external distance 
functions are set to the same value. In particular, since $S_1$ and $S_2$ are 
compatible with their respective entries, we have 
$d^+_j(p)=\delta^+_j(p)=\beta^+_j(p) 
\leq\min\{\langle\dist(p,S_1)\rangle_j,\langle\dist(p,S_2)\rangle_j\} 
=\langle\dist(p,S)\rangle_j$, as $S=S_1\cup S_2$. In the second case, 
$\beta^-_j(p)=\delta^+_j(p)=\infty$ and so 
$\beta^+_j(p)\leq\langle\dist(p,S_2)\rangle_j$ since $S_2$ is compatible with 
its entry, and $\delta^-_j(p)=\langle\dist(p,S_1)\rangle_j$ since $S_1$ is 
compatible with its entry. Since we also have $\beta^+_j(p)=\delta^-_j(p)$ we 
get $\langle\dist(p,S_1)\rangle_j\leq\langle\dist(p,S_2)\rangle_j$, and hence 
$\langle\dist(p,S)\rangle_j=\langle\dist(p,S_1)\rangle_j$. Consistency 
furthermore implies $d^-_j(p)=\delta^-_j(p)=\langle\dist(p,S)\rangle_j$ and 
$d^+_j(p)=\infty$. The third case is analogous to the second, and therefore $S$ 
is compatible with its entry.

For the second part, we consider the contributions of vertices to the terms 
$\hat{C}_{\leq\ell}(S)$, $C_{B'_\ell}(S_1)$, and $\hat{C}_{\leq\ell-1}(S_2)$, 
and show that they are the same for $\hat{C}_{\leq\ell}(S)$ and for 
$C_{B'_\ell}(S_1)+\hat{C}_{\leq\ell-1}(S_2)$. For this we first fix a vertex 
$v\in B'_{\leq\ell-1}$ for which we need to show that its contribution to 
$\hat{C}_{\leq\ell-1}(S_2)$ and~$\hat{C}_{\leq\ell}(S)$ is the same, as it does 
not contribute to $C_{B'_\ell}(S_1)$. In the following we distinguish the cases 
where its contribution to these two terms is due to a facility or an interface 
point.

The first case is that $\dist'(v,S_2)^q\leq \min_{j\geq i,\; p\in I^j_B} 
\{(\dist(v,p) +\beta^+_j(p))^q\}$, i.e., the contribution of~$v\in 
B'_{\leq\ell-1}$ to $\hat{C}_{\ell-1}(S_2)$ is given by a facility of $S_2$. 
Note that the consistency of the distance functions always implies that 
$\beta^+_j(p)=d^+_j(p)$ or $d^+_j(p)=\infty$ for any level $j\geq i$ and 
interface point~$p\in I^j$, and so $\min_{j\geq i,\; p\in I^j_B} \{(\dist(v,p) 
+\beta^+_j(p))^q\}\leq\min_{j\geq i,\; p\in I^j_B} \{(\dist(v,p) 
+d^+_j(p))^q\}$. At the same time $\dist'(v,S)^q\leq\dist'(v,S_2)^q$ as 
$S_2\subseteq S$. We hence get that $\dist'(v,S)^q\leq \min_{j\geq i,\; p\in 
I^j_B} \{(\dist(v,p) +d^+_j(p))^q\}$, i.e., the contribution of~$v$ to 
$\hat{C}_{\ell}(S)$ is also given by a facility of $S$ in this case. Thus to 
show that the contribution of $v$ to $\hat{C}_{\ell-1}(S_2)$ and 
$\hat{C}_{\ell}(S)$ is the same, we need to show that 
$\dist'(v,S)^q=\dist'(v,S_2)^q$. Note that this is implied if 
$\dist'(v,S)^q\geq \min_{j\geq i,\; p\in I^j_B} \{(\dist(v,p) 
+\beta^+_j(p))^q\}$, since we have $\dist'(v,S)^q\leq\dist'(v,S_2)^q\leq 
\min_{j\geq i,\; p\in I^j_B} \{(\dist(v,p) +\beta^+_j(p))^q\}$. Thus the 
following proves the claim, using that the contribution of~$v$ to 
$\hat{C}_{\ell}(S)$ is given by a facility of $S$.

\begin{claim}\label{clm:contr1}
For $v\in B'_{\leq\ell-1}$, if $\dist'(v,S)^q\leq \min_{j\geq i,\; p\in I^j_B} 
\{(\dist(v,p) +d^+_j(p))^q\}$ then we have\\
$\dist'(v,S)^q\geq \min_{j\geq i,\; p\in I^j_B} \{(\dist(v,p) 
+\beta^+_j(p))^q\}$ or $\dist'(v,S)^q=\dist'(v,S_2)^q$.
\end{claim}
\begin{proof}
Note that $\dist'(v,S)^q\neq\dist'(v,S_2)^q$, means 
$\dist'(v,S)^q<\dist'(v,S_2)^q$ as~$S=S_1\cup S_2$. Hence to prove the claim by 
contradiction, we assume
\begin{enumerate}
 \item $\dist'(v,S)^q\leq \min_{j\geq i,\; p\in I^j_B} \{(\dist(v,p) 
+d^+_j(p))^q\}$,
 \item $\dist'(v,S)^q < \min_{j\geq i,\; p\in I^j_B} \{(\dist(v,p) 
+\beta^+_j(p))^q\}$, and
\item $\dist'(v,S)^q<\dist'(v,S_2)^q$.
\end{enumerate}
The latter inequality implies that the value of $\dist'(v,S)^q$ is obtained for 
some facility $f\in S_1\subseteq B'_{\ell}$. In particular, $v\in 
B'_{\leq\ell-1}$ and $f\in B'_{\ell}$ are cut at level~$i-1$, and so there is an 
interface point $p\in I^i_B$ such that 
$\dist'(v,S)=\dist(v,p)+\langle\dist(p,f)\rangle_i$, and $f$ is the closest 
facility to $p$ in $S$, 
i.e,~$\langle\dist(p,S)\rangle_i=\langle\dist(p,f)\rangle_i$. Using the second 
of the assumed inequalities we get 
\[
(\dist(v,p)+\langle\dist(p,f)\rangle_i)^q=\dist'(v,S)^q<(\dist(v,p) 
+\beta^+_i(p))^q,
\]
and so we can conclude that $\langle\dist(p,f)\rangle_i<\beta^+_i(p)$. 

Using the first assumed inequality, we also get 
\[
(\dist(v,p)+\langle\dist(p,f)\rangle_i)^q=\dist'(v,S)^q\leq 
(\dist(v,p) +d^+_i(p))^q, 
\]
i.e., $\langle\dist(p,f)\rangle_i\leq d^+_i(p)$. Since $S$ is compatible with 
$\hat{T}[B'_{\leq\ell},|S|,(d^+_j,d^-_j)_{j=i}^{\lambda(W)}]$, we have 
$d^+_i(p)=\infty$ or $d^+_i(p)\leq\langle\dist(p,S)\rangle_i$. In the latter 
case we would have 
\[
d^+_i(p)\leq \langle\dist(p,S)\rangle_i =\langle\dist(p,f)\rangle_i 
<\beta^+_i(p), 
\]
which however cannot happen if the 
distance functions are consistent. Thus compatibility of $S$ implies 
$d^+_i(p)=\infty$ and $d^-_i(p)=\langle\dist(p,f)\rangle_i$. In particular, we 
can conclude that $d^-_i(p)$ has a finite value (as $f$ exists) and 
$\beta^+_i(p)$ differs from $d^-_i(p)$. This can only mean that the third of the 
consistency properties applies to $p$ at level~$i$, and so 
$\beta^-_i(p)=d^-_i(p)=\langle\dist(p,f)\rangle_i$.

In particular, also $\beta^-_i(p)$ has a finite value, and using the 
compatibility of $S_2$ with entry 
$\hat{T}[B'_{\leq\ell-1},|S_2|,(\beta^+_j,\beta^-_j)_{j=i}^{\lambda(W)}]$, we 
can conclude that there exists a facility $f'\in S_2\subseteq B'_{\leq\ell-1}$ 
with 
\[
\langle\dist(p,f')\rangle_i=\beta^-_i(p)=\langle\dist(p,f)\rangle_i. 
\]
As $v,f'\in B'_{\leq\ell-1}\subseteq B$ and $B\in\mc{B}_i$, the vertices $v$ and 
$f'$ are cut by $\mc{D}$ on a level below~$i$, and \cref{lem:dist'} implies 
$\dist'(v,f')\leq\dist(v,p)+\langle\dist(p,f')\rangle_i$. But then we have
\[
\dist'(v,S_2)\leq \dist'(v,f')\leq \dist(v,p)+\langle\dist(p,f')\rangle_i
=\dist(v,p)+\langle\dist(p,f)\rangle_i
=\dist'(v,S),
\]
which is a contradiction to the third assumed inequality.
\cqed
\end{proof}

The next case we consider is $\min_{j\geq i,\; p\in I^j_B} \{(\dist(v,p) 
+d^+_j(p))^q\}<\dist'(v,S)^q$, that is, the contribution of~$v$ to 
$\hat{C}_{\ell}(S)$ is given by an interface point. As observed before, we have 
\[
\min_{j\geq i,\; p\in I^j_B} \{(\dist(v,p) +\beta^+_j(p))^q\} &\leq \min_{j\geq 
i,\; p\in I^j_B} \{(\dist(v,p) +d^+_j(p))^q\} \quad \text{ and } \\
\dist'(v,S)^q &\leq \dist'(v,S_2)^q, 
\]
which in this case implies $\min_{j\geq i,\; p\in I^j_B} \{(\dist(v,p) 
+\beta^+_j(p))^q\}<\dist'(v,S_2)^q$, i.e., the contribution of $v$ to 
$\hat{C}_{\ell-1}(S_2)$ is also given by an interface point. Note that it also 
implies $\dist'(v,S)^q> \min_{j\geq i,\; p\in I^j_B} \{(\dist(v,p) 
+\beta^+_j(p))^q\}$, and thus the following claim shows that the contribution 
of $v$ to $\hat{C}_{\ell}(S)$ and $\hat{C}_{\ell-1}(S_2)$ is the same.

\begin{claim}\label{clm:contr2}
For $v\in B'_{\leq\ell-1}$, if $\dist'(v,S)^q > \min_{j\geq i,\; p\in I^j_B} 
\{(\dist(v,p) +\beta^+_j(p))^q\}$ then we have\\
$\min_{j\geq i,\; p\in I^j_B} \{(\dist(v,p) +\beta^+_j(p))^q\}=\min_{j\geq i,\; 
p\in I^j_B} \{(\dist(v,p) +d^+_j(p))^q\}$.
\end{claim}
\begin{proof}
As observed before, the consistency of the distance functions always implies 
$\beta^+_j(p)=d^+_j(p)$ or $d^+_j(p)=\infty$. Thus the contrapositive of the 
claim is
\[
\text{if }& \min_{j\geq i,\; p\in I^j_B} \{(\dist(v,p) +\beta^+_j(p))^q\} 
<\min_{j\geq 
i,\; p\in I^j_B} \{(\dist(v,p) +d^+_j(p))^q\} \\
\text{then we have }&
\dist'(v,S)^q \leq \min_{j\geq i,\; p\in I^j_B} \{(\dist(v,p) 
+\beta^+_j(p))^q\}.
\]
Let $j\geq i$ and $p\in I^j_B$ be the level and interface point for which 
$(\dist(v,p) +\beta^+_j(p))^q$ is minimized. The premise of the contrapositive 
implies $\beta^+_j(p)<d^+_j(p)$ for this particular point $p$ and level $j$. 
This can only be the case if $\beta^+_j(p)<\infty$ and $d^+_j(p)=\infty$. The 
values of $\beta^+_j(p)$ and $d^+_j(p)$ can only differ if the second of the 
consistency properties applies to~$p$ at level $j$, and so 
$\beta^+_j(p)=d^-_j(p)$. Since $\beta^+_j(p)<\infty$, the compatibility of $S$ 
with entry $\hat{T}[B'_{\leq\ell},|S|,(d^+_j,d^-_j)_{j=i}^{\lambda(W)}]$, 
implies $\beta^+_j(p)=d^-_j(p)=\langle\dist(p,S)\rangle_j$.

Now let $f\in S\subseteq B'_{\leq\ell}$ be the facility for which 
$\dist(p,S)=\dist(p,f)$ (which exists as $d^-_j(p)<\infty$). 
As $v,f\in B$ and $B\in\mc{B}_i$, the vertices $v$ and $f$ are cut by $\mc{D}$ 
on a level below $i\leq j$, and thus \cref{lem:dist'} implies $\dist'(v,f)\leq 
\dist(v,p)+\langle\dist(p,f)\rangle_j$.
Then, by definition of $\dist'(v,S)$,
\[
\dist'(v,S)^q\leq (\dist(v,p)+\langle\dist(p,f)\rangle_j)^q
=(\dist(v,p)+\langle\dist(p,S)\rangle_j)^q
=(\dist(v,p)+\beta^+_j(p))^q.
\]
The last term is equal to $\min_{j\geq i,\; p\in I^j_B} \{(\dist(v,p) 
+\beta^+_j(p))^q\}$, which gives the required conclusion.
\cqed
\end{proof}

So far we considered the case when the contribution of $v$ to 
$\hat{C}_{\ell-1}(S_2)$ is given by a facility, or when the contribution of $v$ 
to $\hat{C}_{\ell}(S)$ is given by an interface point. Thus the last case we 
consider is when the contribution of $v$ to $\hat{C}_{\ell-1}(S_2)$ is given by 
an interface point and the contribution of $v$ to $\hat{C}_{\ell}(S)$ is given 
by a facility, i.e., 
\[
\min_{j\geq i,\; p\in I^j_B} \{(\dist(v,p) +\beta^+_j(p))^q\} &<\dist'(v,S_2)^q 
\quad\text{ and }\\
\dist'(v,S)^q &\leq \min_{j\geq i,\; p\in I^j_B} \{(\dist(v,p) +d^+_j(p))^q\}.
\]

We need to show $\dist'(v,S)^q=\min_{j\geq i,\; p\in I^j_B} \{(\dist(v,p) 
+\beta^+_j(p))^q\}$, i.e., the left hand sides of the above two inequalities are 
the same. First assume $\dist'(v,S)^q>\min_{j\geq i,\; p\in I^j_B} \{(\dist(v,p) 
+\beta^+_j(p))^q\}$. Then by \cref{clm:contr2} we obtain 
$\dist'(v,S)^q>\min_{j\geq i,\; p\in I^j_B} \{(\dist(v,p) +d^+_j(p))^q\}$, which 
however contradicts that the contribution of $v$ to $\hat{C}_{\ell}(S)$ is given 
by a facility (i.e., the second of the above inequalities). Hence we must have 
\[ 
\dist'(v,S)^q\leq\min_{j\geq i,\; p\in I^j_B} \{(\dist(v,p) +\beta^+_j(p))^q\}.
\]
It remains to show that also $\dist'(v,S)^q\geq\min_{j\geq i,\; p\in I^j_B} 
\{(\dist(v,p) +\beta^+_j(p))^q\}$. According to \cref{clm:contr1}, if the 
contribution of $v$ to $\hat{C}_{\ell}(S)$ is given by a facility, then we have 
\[
\dist'(v,S)^q &= \dist'(v,S_2)^q\quad\text{ or }\\
\dist'(v,S)^q &\geq \min_{j\geq i,\; p\in I^j_B} \{(\dist(v,p) 
+\beta^+_j(p))^q\}.
\]
In the former case, since the contribution of $v$ to $\hat{C}_{\ell-1}(S_2)$ is 
given by an interface point, we would obtain $\dist'(v,S)^q>\min_{j\geq i,\; 
p\in I^j_B} \{(\dist(v,p) +\beta^+_j(p))^q\}$, for which we saw above that this 
leads to a contradiction via \cref{clm:contr2}. Hence we are left with the other 
implication of \cref{clm:contr1}, which implies that the contribution of $v$ to 
$\hat{C}_{\ell}(S)$ and $\hat{C}_{\ell-1}(S_2)$ is the same.

By analogous arguments, the contribution of any $v\in B'_\ell$ to 
$C_{B'_\ell}(S_1)$ is the same as its contribution to $\hat{C}_{\leq\ell}(S)$. 
Since $B'_\ell$ and $B'_{\leq\ell-1}$ partition the set~$B'_{\leq\ell}$, this 
means that $\hat{C}_{\leq\ell}(S)=C_{B'_\ell}(S_1)+\hat{C}_{\leq\ell-1}(S_2)$, 
as required.
\end{proof}

The next lemma implies that the compatible facility set minimizing 
$\hat{C}_{\leq\ell}(S)$ is considered as a solution when recursing over 
consistent distance functions.

\begin{lemma}\label{lem:correct2}
Let $S\subseteq B'_{\leq\ell}\cap F$ be a facility set of $B'_{\leq\ell}$ that 
is compatible with 
$\hat{T}[B'_{\leq\ell},|S|,(d^+_j,d^-_j)_{j=i}^{\lambda(W)}]$, and let 
$S_1=S\cap B'_\ell$ and $S_2=S\cap B'_{\leq\ell-1}$. Then there exist distance 
functions $(\delta^+_j,\delta^-_j)_{j=i}^{\lambda(W)}$ for $B'_\ell$, and 
$(\beta^+_j,\beta^-_j)_{j=i}^{\lambda(W)}$ for $B'_{\leq\ell-1}$ such that
\begin{itemize}
\item $(d^+_j,d^-_j)_{j=i}^{\lambda(W)}$, 
$(\delta^+_j,\delta^-_j)_{j=i}^{\lambda(W)}$, and 
$(\beta^+_j,\beta^-_j)_{j=i}^{\lambda(W)}$ are consistent, and
\item the set $S_1$ is compatible with entry 
$T[B'_\ell,|S_1|,(\delta^+_j,\delta^-_j)_{j=i}^{\lambda(W)}]$ and $S_2$ is 
compatible with entry 
$\hat{T}[B'_{\leq\ell-1},|S_2|,(\beta^+_j,\beta^-_j)_{j=i}^{\lambda(W)}]$.
\end{itemize}
\end{lemma}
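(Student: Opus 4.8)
The plan is to prove \cref{lem:correct2} as essentially the converse of \cref{lem:correct1}: given that $S$ is compatible with $\hat{T}[B'_{\leq\ell},|S|,(d^+_j,d^-_j)_{j=i}^{\lambda(W)}]$, I would construct the distance functions $(\delta^+_j,\delta^-_j)_{j=i}^{\lambda(W)}$ for $B'_\ell$ and $(\beta^+_j,\beta^-_j)_{j=i}^{\lambda(W)}$ for $B'_{\leq\ell-1}$ pointwise, one level $j\geq i$ and one interface point $p\in I^j_B$ at a time. Two preliminary observations set things up. Since $\mc{B}_{i-1}$ is a partition of $W$, the parts $B'_\ell$ and $B'_{\leq\ell-1}$ partition $B'_{\leq\ell}$, so $S_1=S\cap B'_\ell$ and $S_2=S\cap B'_{\leq\ell-1}$ partition $S$; in particular $S_1,S_2\subseteq S$. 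And since the rounding operation $\langle\cdot\rangle_j$ is monotone, $S_1,S_2\subseteq S$ gives $\langle\dist(p,S_1)\rangle_j,\langle\dist(p,S_2)\rangle_j\geq\langle\dist(p,S)\rangle_j$, while $S=S_1\cup S_2$ gives $\langle\dist(p,S)\rangle_j=\min\{\langle\dist(p,S_1)\rangle_j,\langle\dist(p,S_2)\rangle_j\}$.

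Now fix $j\geq i$ and $p\in I^j_B$, and split into cases according to the entry's distance functions (recall compatibility of $S$ forces $\max\{d^+_j(p),d^-_j(p)\}=\infty$). If $d^+_j(p)$ is finite, then $d^-_j(p)=\infty$ and $d^+_j(p)\leq\langle\dist(p,S)\rangle_j$; I set $\delta^+_j(p)=\beta^+_j(p)=d^+_j(p)$ and $\delta^-_j(p)=\beta^-_j(p)=\infty$, which matches the first of the consistency properties, and since $\langle\dist(p,S_1)\rangle_j,\langle\dist(p,S_2)\rangle_j\geq d^+_j(p)$ both $S_1$ and $S_2$ satisfy the ``$d^+\leq\langle\dist\rangle$ and $d^-=\infty$'' branch of compatibility at $(j,p)$. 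Otherwise $d^+_j(p)=\infty$ and $d^-_j(p)=\langle\dist(p,S)\rangle_j$. If moreover $d^-_j(p)=\infty$ (equivalently $\dist(p,S)=\infty$), I set all four new values to $\infty$, which trivially fits the first consistency property and makes $S_1,S_2$ compatible at $(j,p)$. Otherwise $d^-_j(p)$ is finite, so $S\neq\emptyset$; if $\dist(p,S_1)\leq\dist(p,S_2)$ (in particular if $S_2=\emptyset$) then $\langle\dist(p,S)\rangle_j=\langle\dist(p,S_1)\rangle_j$ and I use the second consistency property, setting $\delta^-_j(p)=\beta^+_j(p)=d^-_j(p)=\langle\dist(p,S_1)\rangle_j$ and $\delta^+_j(p)=\beta^-_j(p)=\infty$: then $S_1$ is compatible via its ``$d^-=\langle\dist\rangle$'' branch, and $S_2$ via its ``$d^+\leq\langle\dist\rangle$'' branch since $\langle\dist(p,S_1)\rangle_j\leq\langle\dist(p,S_2)\rangle_j$. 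If instead $\dist(p,S_2)<\dist(p,S_1)$ (in particular if $S_1=\emptyset$) I symmetrically use the third consistency property, swapping the roles of $S_1$ and $S_2$.

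Carrying this out for every $(j,p)$ yields distance functions $(\delta^+_j,\delta^-_j)_{j=i}^{\lambda(W)}$ and $(\beta^+_j,\beta^-_j)_{j=i}^{\lambda(W)}$ that are consistent with $(d^+_j,d^-_j)_{j=i}^{\lambda(W)}$ by construction, and for which (checking each $(j,p)$ as above) $S_1$ is compatible with $T[B'_\ell,|S_1|,(\delta^+_j,\delta^-_j)_{j=i}^{\lambda(W)}]$ and $S_2$ with $\hat{T}[B'_{\leq\ell-1},|S_2|,(\beta^+_j,\beta^-_j)_{j=i}^{\lambda(W)}]$. The required ``exactly one of the two values is $\infty$'' condition holds in every case above, and no new finite value is ever introduced --- each finite value assigned to some $\delta^\pm_j(p)$ or $\beta^\pm_j(p)$ equals $d^+_j(p)$ or $d^-_j(p)$ --- so the constructed functions automatically map into the domain $\{\langle x\rangle_j\mid 0<x\leq 2^{j+6}\}\cup\{\infty\}$. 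This proves the lemma, and together with \cref{lem:correct1} it shows that the recursion defining $\hat T$ indeed computes the minimum of $\hat{C}_{\leq\ell}(S)$ over compatible $S$.

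I do not expect a genuine obstacle here: the argument is a bookkeeping exercise matching the three consistency patterns to the ``status'' of each pair $(j,p)$ in the given entry. The only places that need care are (i) the monotonicity of the rounding $\langle\cdot\rangle_j$, which supplies both $\langle\dist(p,S)\rangle_j=\min\{\langle\dist(p,S_1)\rangle_j,\langle\dist(p,S_2)\rangle_j\}$ and the inequalities $\langle\dist(p,S_i)\rangle_j\geq\langle\dist(p,S)\rangle_j$, and (ii) the degenerate cases $S_1=\emptyset$, $S_2=\emptyset$, or $S=\emptyset$, which the case split above already absorbs (e.g. when $S_1=\emptyset$ every $(j,p)$ with finite $d^-_j(p)$ falls into the ``$\dist(p,S_2)<\dist(p,S_1)$'' branch, forcing all $\delta^-_j(p)=\infty$, as required for $S_1=\emptyset$ to be compatible).
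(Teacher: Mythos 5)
Your proposal is correct and matches the paper's proof essentially verbatim: both argue pointwise over each level $j\geq i$ and interface point $p\in I^j_B$, use $\langle\dist(p,S)\rangle_j=\min\{\langle\dist(p,S_1)\rangle_j,\langle\dist(p,S_2)\rangle_j\}$, and assign the three consistency patterns according to whether $d^+_j(p)$ is finite or, if not, which of $S_1,S_2$ realizes the minimum rounded distance. The extra attention you give to the degenerate cases ($S_1$, $S_2$, or $S$ empty) and to the codomain of the constructed functions is harmless bookkeeping that the paper leaves implicit.
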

\begin{proof}
Consider any interface point $p\in I^j_B$ on some level $j\geq i$. Since $S$ is 
compatible with entry 
$\hat{T}[B'_{\leq\ell},|S|,(d^+_j,d^-_j)_{j=i}^{\lambda(W)}]$, we have exactly 
one of the following: 
\begin{itemize}
 \item $d^+_j(p)=\infty$ and $d^-_j(p)=\langle\dist(p,S)\rangle_j$, or 
 \item $d^+_j(p)\leq\langle\dist(p,S)\rangle_j$ and $d^-_j(p)=\infty$.
\end{itemize}
Note that $S=S_1\cup S_2$ implies that $\langle\dist(p,S)\rangle_j= 
\min\{\langle\dist(p,S_1)\rangle_j,\langle\dist(p, S_2)\rangle_j\}$. Hence if 
$d^+_j(p)\leq\langle\dist(p,S)\rangle_j$ we may set 
$\beta^+_j(p)=\delta^+_j(p)=d^+_j(p)$ and $\beta^-_j(p)=\delta^-_j(p)=\infty$, 
and obtain the first case of the consistency properties. Observe that this also 
implies the compatibility property of $S_1$ and $S_2$ for $p$.

Now assume that $d^+_j(p)=\infty$ and $\langle\dist(p,S_1)\rangle_j\leq 
\langle\dist(p,S_2)\rangle_j$. Then we may set 
$\delta^-_j(p)=\beta^+_j(p)=d^-_j(p)$ and $\delta^+_j(p)=\beta^-_j(p)=\infty$. 
Since $d^-_j(p)=\langle\dist(p,S)\rangle_j=\langle\dist(p, S_1)\rangle_j$, this 
gives the second case of the consistency properties. Again, this also implies 
the compatibility property of $S_1$ and $S_2$ for $p$.

The remaining case when $d^+_j(p)=\infty$ and $\langle\dist(p,S_1)\rangle_j>
\langle\dist(p,S_2)\rangle_j$ is analogous. Here we may set 
$\delta^+_j(p)=\beta^-_j(p)=d^-_j(p)$ and $\delta^-_j(p)=\beta^+_j(p)=\infty$. 
Because $d^-_j(p)=\langle\dist(p,S)\rangle_j=\langle\dist(p, S_2)\rangle_j$, 
this gives the third case of the consistency properties, and also implies the 
compatibility property of $S_1$ and $S_2$ for $p$, which concludes the proof.
\end{proof}

We now argue that the algorithm sets the value of 
$\hat{T}[B'_{\leq\ell},k',(d^+_j,d^-_j)_{j=i}^{\lambda(W)}]$ correctly via the 
recursion given in~\eqref{eqn:recursion}. By induction, for all 
$k''\in\{0,\ldots,k'\}$, any entry 
$T[B'_\ell,k'',(\delta^+_j,\delta^-_j)_{j=i}^{\lambda(W)}]$ stores the minimum 
value of $C_{B'_\ell}(\tilde S_1)$ for some compatible set $\tilde S_1\subseteq 
F\cap B'_\ell$, and any entry \mbox{$\hat{T}[B'_{\leq\ell-1},k'-k'', 
(\beta^+_j,\beta^-_j)_{j=i}^{\lambda(W)}]$} stores the minimum value of 
$\hat{C}_{\leq\ell-1}(\tilde S_2)$ for some compatible set $\tilde S_2\subseteq 
F\cap B'_{\leq\ell-1}$. Thus by \cref{lem:correct1} and the 
recursion of~\eqref{eqn:recursion} we have 
$\hat{T}[B'_{\leq\ell},k',(d^+_j,d^-_j)_{j=i}^{\lambda(W)}]=\hat{C}_{\leq\ell}( 
\tilde S)$ for some compatible set $\tilde S$. Now consider the set $S\subseteq 
F\cap B'_{\leq\ell}$ that is compatible with 
$\hat{T}[B'_{\leq\ell},k',(d^+_j,d^-_j)_{j=i}^{\lambda(W)}]$ and 
minimizes~$\hat{C}_{\leq\ell}(S)$. For $S_1=S\cap B'_\ell$ and $S_2=S\cap 
B'_{\leq\ell-1}$, \cref{lem:correct2}  implies that there are entries 
$T[B'_\ell,|S_1|,(\delta^+_j,\delta^-_j)_{j=i}^{\lambda(W)}]$ and 
$\hat{T}[B'_{\leq\ell-1},|S_2|,(\beta^+_j,\beta^-_j)_{j=i}^{\lambda(W)}]$ that 
store values at most $C_{B'_\ell}(S_1)$ and $\hat{C}_{\leq\ell-1}(S_2)$, 
respectively, and the sum of these two values is considered in the 
recursion~\eqref{eqn:recursion}. Hence $\hat{C}_{\leq\ell}(\tilde 
S)\leq\hat{C}_{\leq\ell}(S)$, but since $S$ minimizes $\hat{C}_{\leq\ell}(S)$ 
the latter is an equality and we get 
$\hat{T}[B'_{\leq\ell},k',(d^+_j,d^-_j)_{j=i}^{\lambda(W)}]= 
\hat{C}_{\leq\ell}(S)$, as required.

\subparagraph*{Bounding the runtime.}
To bound the size of the tables $T$ and $\hat{T}$, note that since there are 
$\lambda(W)-\xi(W)+1\leq 2\log_2(nX/\eps)+2$ considered levels $i$, and each 
level $\mc{B}_i$ of $\mc{D}$ is a partition of~$W$ where $|W|\leq n$, there are 
at most $O(n\log(nX/\eps))$ parts $B$ considered by $T$ in total. The other 
table $\hat{T}$ considers the same number of parts, since a set $B'_{\leq\ell}$ 
can be uniquely mapped to the part $B'_\ell$. The number of possible values for 
$k'$ is $k+1=O(n)$. The domain $\{\langle x\rangle_j\mid 0<x\leq 
2^{j+6}\}\cup\{\infty\}$ of a distance function for level $j$ has at most 
$\lceil 2^{j+6}/(\rho 2^j)\rceil+1=O(1/\rho)$ values, since $\langle x\rangle_j$ 
rounds a value to a multiple of $\rho 2^j$. The conciseness of the interface 
sets means that $|I^j_B|\leq (\hw/\rho)^{O(1)}$ according to \cref{lem:decomp}. 
Hence there are at most $O(1/\rho)^{(\hw/\rho)^{O(1)}}=2^{(\hw/\rho)^{O(1)}}$ 
possible distance functions. Since each entry of the table stores two distance 
functions for each of at most $2\log_2(nX/\eps)+2$ levels, the total number of 
entries of $T$ and $\hat{T}$ is at most
\[
O(n\log(nX/\eps)) \cdot n \cdot (2^{(\hw/\rho)^{O(1)}})^{O(\log(nX/\eps))}
= (nX/\eps)^{(\hw/\rho)^{O(1)}}.
\]

Computing an entry of a table is dominated by~\eqref{eqn:recursion}. Going 
through all values $k'\leq n$ and all possible consistent distance functions to 
compute~\eqref{eqn:recursion}, takes $n\cdot 2^{(\hw/\rho)^{O(1)}}$ time, as 
there are~$2^{(\hw/\rho)^{O(1)}}$ possible distance functions. Hence the total 
runtime is $(nX/\eps)^{(\hw/\rho)^{O(1)}}$, proving \cref{lem:dp} for \cluster.

\subparagraph*{The \fl problem.}
To compute an optimum rounded interface-respecting solution to \fl, the tables
$T$ and $\hat{T}$ can ignore the number of open facilities~$k'$, i.e., they 
have respective entries $T[B,(d^+_j,d^-_j)_{j=i+1}^{\lambda(W)}]$ and 
$\hat{T}[B'_{\leq\ell},(d^+_j,d^-_j)_{j=i}^{\lambda(W)}]$. Accordingly, 
compatibility of facility sets with entries is defined as before, but ignoring 
the sizes of the sets. The value stored in each entry now also takes the opening 
costs of facilities into account. That is, for any set of facilities $S\subseteq 
F\cap B$ in a part $B$ we define
\[
C_B(S)=\sum_{v\in B} \chi_{\mc{I}_0}(v)\cdot \min\Big\{\dist'(v,S)^q, 
\min_{\substack{j\geq i+1\\ p\in I^j_B}} \big\{(\dist(v,p) 
+d^+_j(p))^q\big\}\Big\} +\sum_{f\in S}w_f,
\]
and an entry $T[B,(d^+_j,d^-_j)_{j=i+1}^{\lambda(W)}]$ stores the minimum value 
of $C_B(S)$ over all sets $S$ compatible with the entry, or $\infty$ if no such 
set exists. For $S\subseteq F\cap B'_{\leq\ell}$ in a union of 
subparts~$B'_{\leq\ell}$ we define
\[
\hat{C}_{\leq\ell}(S)=
\sum_{v\in B'_{\leq\ell}} \chi_{\mc{I}_0}(v)\cdot \min\Big\{\dist'(v,S)^q, 
\min_{\substack{j\geq i\\ p\in I^j_B}} \big\{(\dist(v,p) 
+d^+_j(p))^q\big\}\Big\} +\sum_{f\in S}w_f,
\] 
and an entry $\hat{T}[B'_{\leq\ell},(d^+_j,d^-_j)_{j=i}^{\lambda(W)}]$ stores 
the minimum value of $\hat{C}_{\leq\ell}(S)$ over all sets $S$ compatible with 
the entry, or $\infty$ if no such set exists.

The entries of the tables can be computed in the same manner as before, but 
ignoring the set sizes. In particular, the most involved recursion becomes
\[
\hat{T}[B'_{\leq\ell},(d^+_j,d^-_j)_{j=i}^{\lambda(W)}]=
\min\big\{ T[B'_\ell,(\delta^+_j,\delta^-_j)_{j=i}^{\lambda(W)}]+
\hat{T}[B'_{\leq\ell-1},(\beta^+_j,\beta^-_j)_{j=i}^{\lambda(W)}] \;\mid\\
 (d^+_j,d^-_j)_{j=i}^{\lambda(W)}, (\delta^+_j,\delta^-_j)_{j=i}^{\lambda(W)}, 
(\beta^+_j,\beta^-_j)_{j=i}^{\lambda(W)}
\text{ are consistent} \big\}.
\]

Note that if $S_1=B'_\ell\cap F$ and $S_2=B'_{\leq\ell-1}\cap F$ then these two 
sets are disjoint, and so $\sum_{f\in S}w_f=\sum_{f\in S_1}w_f+\sum_{f\in 
S_2}w_f$ for the union $S=S_1\cup S_2$. Hence when proving 
$\hat{C}_{\leq\ell}(S)=C_{B'_\ell}(S_1)+\hat{C}_{\leq\ell-1}(S_2)$ for 
\cref{lem:correct1}, we can ignore the facility opening costs, and the proof 
remains the same as before. All other arguments carry over, and thus an optimum 
rounded interface-respecting solution for an instance of \fl can also be 
computed in $(nX/\eps)^{(\hw/\rho)^{O(1)}}$ time.

\section{Hardness for graphs of highway dimension 1}
\label{sec:hard}

In this section we prove our hardness result, which we restate here for 
convenience.

\thmhardness*

For both \cluster and \fl we present the same reduction from the NP-hard 
satisfiability problem (SAT), in which a boolean formula $\varphi$ in 
conjunctive normal form is given, and a satisfying assignment of its variables 
needs to be found.

For a given SAT formula $\varphi$ with $k$ variables and $\ell$ clauses we 
construct a graph $G_\varphi$ as follows. For each variable $x$ we introduce a 
path $P_x=(t_x,u_x,f_x)$ with two edges of length~$1$ each. The two endpoints 
$t_x$ and $f_x$ are facilities of $F$ and the additional vertex $u_x$ is a 
client, i.e., $\chi(u_x)=1$. For each clause $C_i$, where $i\in[\ell]$, we 
introduce a vertex $v_i$ and add the edge~$v_it_x$ for each variable $x$ such 
that $C_i$ contains $x$ as a positive literal, and we add the edge $v_if_x$ for 
each $x$ for which $C_i$ contains $x$ as a negative literal. Every edge incident 
to $v_i$ has length $(11c)^i$ for the constant $c>4$ due to \cref{dfn:hd}, and 
$v_i$ is also a client, i.e., $\chi(v_i)=1$. In case of \fl, every facility 
$f\in F$ has cost $w_f=1$, i.e., we construct an instance of the 
uniform version of the problem.

\begin{lemma}\label{lem:hd1}
The constructed graph $G_\varphi$ has highway dimension $1$.
\end{lemma}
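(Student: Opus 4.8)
The plan is to verify \cref{dfn:hd} directly with $h=1$, using the constant $c$ fixed there (namely $c=8$). Write $L_0=1$ for the length of the variable-gadget edges and $L_i=(11c)^i$ for the common length of the edges incident to the clause vertex $v_i$, so that $L_{i+1}=11c\cdot L_i$; call the gadget edges \emph{level-$0$} edges and the edges at $v_i$ \emph{level-$i$} edges. Since all edge lengths are positive, every shortest path is simple, so it visits each clause vertex $v_i$ at most once and hence uses at most two level-$i$ edges. Moreover, every shortest path leaving a variable gadget $\{t_x,u_x,f_x\}$ must traverse a clause edge as its first edge out of the gadget, and every edge at a clause vertex $v_i$ has length $L_i\ge 11c$; consequently $\dist(a,b)\ge 11c>2c$ whenever $a$ lies in a gadget and $b$ lies outside it, and likewise $\dist(v_i,w)\ge 11c>2c$ for every $w\ne v_i$.

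First I would prove the following structural claim by induction on $i\ge 0$: any shortest path $P$ using only edges of level at most $i$ has $\mathrm{length}(P)\le 3L_i$, and if $\ell\ge 1$ is the largest level of an edge on $P$ then $P$ passes through $v_\ell$ and $L_\ell\le\mathrm{length}(P)\le 3L_\ell$. The base case $i=0$ holds because such a path lies inside a single three-vertex gadget and so has length at most $2<3$. For the step, if $P$ uses no level-$i$ edge then $\mathrm{length}(P)\le 3L_{i-1}<3L_i$ by induction; otherwise $P$ uses a level-$i$ edge, hence passes through $v_i$, and deleting $v_i$ splits $P$ into at most two shortest sub-paths using only levels $\le i-1$, each of length at most $3L_{i-1}$, so $\mathrm{length}(P)\le 2L_i+6L_{i-1}=L_i\bigl(2+6/(11c)\bigr)<3L_i$ since $11c>6$; and $\mathrm{length}(P)\ge L_i$ because $P$ contains a level-$i$ edge.

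Next I would fix $r>0$ and $v\in V$ and let $\mathcal P$ be the set of shortest paths of length more than $r$ contained in $\beta_v(cr)$; note that every $P\in\mathcal P$ has $\mathrm{length}(P)=\dist(a,b)\le 2cr$ by the triangle inequality, where $a,b$ are its endpoints. If $r<2$ then $cr<2c$, which is strictly smaller than the length of any clause edge, so by the distance facts above $\beta_v(cr)$ is contained in a single gadget $\{t_x,u_x,f_x\}$ (when $v$ is a gadget vertex) or equals $\{v_i\}$ (when $v=v_i$); in the former case every shortest path of positive length inside a gadget passes through $u_x$, so $u_x$ hits all of $\mathcal P$, and in the latter case $\mathcal P=\emptyset$. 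If $r\ge 2$ then no gadget-internal shortest path (of length at most $2\le r$) belongs to $\mathcal P$, so every $P\in\mathcal P$ uses a clause edge; letting $\ell(P)\ge 1$ be its largest level, the structural claim gives $L_{\ell(P)}\le\mathrm{length}(P)\le 3L_{\ell(P)}$, and with $r<\mathrm{length}(P)\le 2cr$ this yields $r/3<L_{\ell(P)}\le 2cr$. Were there $P_1,P_2\in\mathcal P$ with $\ell(P_1)<\ell(P_2)$, then $L_{\ell(P_2)}\ge 11c\cdot L_{\ell(P_1)}>11c\cdot r/3>2cr$ (as $11>6$), contradicting $L_{\ell(P_2)}\le 2cr$; hence all $P\in\mathcal P$ share a single top level $\ell^*$, and since each contains a level-$\ell^*$ edge and all level-$\ell^*$ edges are incident to $v_{\ell^*}$, the vertex $v_{\ell^*}$ hits all of $\mathcal P$. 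In every case at most one vertex hits all shortest paths of length more than $r$ in $\beta_v(cr)$, so $G_\varphi$ has highway dimension at most $1$; since $G_\varphi$ has at least one edge, its highway dimension is exactly $1$.

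The step I expect to be the crux is the structural claim, specifically the assertion that a shortest path has a well-defined ``dominant'' clause level $\ell$ through whose vertex $v_\ell$ it must pass, together with the bound $\mathrm{length}(P)\le 3L_\ell$; this is what makes the geometric separation of scales work, and it is what forces the choice of a sufficiently large multiplier (here $11$) in the edge lengths $(11c)^i$ so that the ``at most one relevant level'' argument for $r\ge 2$ goes through with room to spare.
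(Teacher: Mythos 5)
Your proof is correct and takes essentially the same approach as the paper's: both exploit the geometric growth of the clause-edge lengths $(11c)^i$ to argue that at each scale $r$ only one clause vertex $v_{\ell^*}$ can be relevant, and that this single vertex hits every shortest path of length more than $r$ inside $\beta_v(cr)$. Your inductive structural claim (a shortest path with top level $\ell$ has length in $[L_\ell,3L_\ell]$) is a repackaging of the paper's direct computation that paths avoiding $v_i$ have length at most $5(11c)^{i-1}\le r$ while edges of level above $i$ cannot fit in the ball, with the small added benefit that you handle the small-$r$ regime explicitly.
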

\begin{proof}
Fix a scale $r>0$ and let $i=\lfloor \log_{11c} (r/5) +1 \rfloor$. Note that 
$\beta_w(cr)$ cannot contain any edge incident to a vertex $v_j$ for $j\geq 
i+1$, since the length of every such edge is $(11c)^j\geq 11cr/5>2cr$ and the 
diameter of $\beta_w(cr)$ is at most $2cr$. Thus if $\beta_w(cr)$ contains a 
vertex $v_j$ for $j\geq i+1$, then $\beta_w(cr)$ contains only $v_j$, and there 
is nothing to prove. Note also that any path in $\beta_w(cr)$ that does not use 
$v_i$ has length at most $2+\sum_{j=1}^{i-1} (2(11c)^j+2)$, since any such path 
can contain at most two edges incident to a vertex $v_j$ and the paths $P_x$ of 
length $2$ are connected only through edges incident to vertices~$v_j$. The 
length of such a path is thus strictly shorter than
\[
2+2\left(\frac{(11c)^i}{11c-1}-1\right)+2i \leq 5(11c)^{i-1} \leq r,
\]
where the first inequality holds since $i\geq 1$ and $c>4$. Hence the only 
paths that need to be hit by hubs on scale $r$ are those passing through $v_i$, 
which can clearly be done using only one hub, namely~$v_i$.
\end{proof}

To finish the reduction for \cluster, we claim that there is a satisfying 
assignment for $\varphi$ if and only if there is a solution for 
$G_\varphi$ with cost at most $k+\sum_{i=1}^\ell (11c)^{iq}$. If there is a 
satisfying assignment for $\varphi$ we open each facility $t_x$ for 
variables~$x$ that are set to true, and we open each facility $f_x$ for 
variables $x$ that are set to false. This opens exactly $k$ facilities and the 
cost of the solution is $k+\sum_{i=1}^\ell (11c)^{iq}$, since each of the $k$ 
vertices $u_x$ is assigned to either $t_x$ or~$f_x$ at distance~$1$, and vertex 
$v_i$ is assigned to a vertex $t_x$ or $f_x$ at distance $(11c)^i$ that 
corresponds to a literal of~$C_i$ that is true. 

Conversely, assume there is a solution to \cluster of cost at most 
$k+\sum_{i=1}^\ell (11c)^{iq}$ in~$G_\varphi$. Note that the minimum distance 
from any $u_x$ to a facility is $1$, while the minimum distance from any $v_i$ 
to a facility is~$(11c)^i$. Thus any solution must have cost at least 
$k+\sum_{i=1}^\ell (11c)^{iq}$, so that the assumed solution must open a 
facility at minimum distance for each client of $G_\varphi$. In particular, for 
each variable $x$, at least one of the facilities $t_x$ and $f_x$ is opened by 
the solution. Moreover, as only $k$ facilities can be opened and there are $k$ 
variables, exactly one of $t_x$ and $f_x$ is opened for each $x$. Thus the 
\cluster solution in $G_\varphi$ can be interpreted as an assignment for 
$\varphi$, where we set a variable $x$ to true if $t_x$ is opened, and we set it 
to false if $f_x$ is opened. Since also for each $v_i$ the solution opens a 
facility at minimum distance, there must be a variable in $C_i$ that is set so 
that its literal in $C_i$ is true, i.e., the assignment satisfies $\varphi$. 
Thus due to the above lemma bounding the highway dimension of $G_\varphi$, we 
obtain the \cref{thm:hardness} for \cluster.

For \fl we claim that there is a satisfying assignment for $\varphi$ if and only 
if there is a solution for $G_\varphi$ of cost at most $2k+\sum_{i=1}^\ell 
(11c)^{iq}$. In fact the arguments are exactly the same as for \cluster 
above: if there is a satisfying assignment then a solution for \fl of cost 
$2k+\sum_{i=1}^\ell (11c)^{iq}$ exists, by opening the $k$ facilities 
corresponding to the assignment of cost~$1$ each. Conversely, any solution has 
cost at least $k+\sum_{i=1}^\ell (11c)^{iq}$ due to the edge lengths, and at 
least $k$ facilities need to be opened, one for each variable gadget. This gives 
a minimum cost of $2k+\sum_{i=1}^\ell (11c)^{iq}$, and any such solution 
corresponds to a satisfying assignment of $\varphi$. This proves 
\cref{thm:hardness} for uniform \fl.

\printbibliography

\newpage
\appendix
\section{Highway dimension vs doubling dimension}\label{ap:hwdef}

We discuss here the relationship between low doubling and low highway dimension 
metrics. First off, in a follow-up paper to~\cite{abraham2010highway}, 
\citet{abraham2016highway} define a version of the highway dimension, which 
implies that the graphs also have bounded doubling dimension. Hence for this 
definition, the algorithm of \citet{DBLP:conf/focs/SaulpicCF19} is already a 
very efficient approximation scheme in metrics of low highway dimension. 
\cref{dfn:hd} on the other hand implies metrics of large doubling dimension as 
noted by \citet{abraham2010highway}: a star has highway dimension~$1$ (by using 
the center vertex to hit all paths), but its doubling dimension is unbounded. 
While it may be reasonable to assume that road networks have low doubling 
dimension (which are the main concern in the works of 
\citet{abraham2016highway,abraham2010highway,abraham2011vc}), there are metrics 
modeling transportation networks, for which it can be argued that the doubling 
dimension is large, while the highway dimension should be small, and thus rather 
adhere to \cref{dfn:hd}: in networks arising from public transportation, longer 
connections are serviced by larger and and sparser stations (such as train 
stations and airports). More concretely, the so-called hub-and-spoke networks 
that can typically be seen in air traffic networks is much closer to a star-like 
network and is unlikely to have small doubling dimension, while still having 
small highway dimension. Thus in these examples it is reasonable to assume that 
the doubling dimension is a lot larger than the highway dimension.

For further discussions on different definitions of the highway dimension we 
refer to the work of \citet{blum2019hierarchy} and Section~9 of 
\citet{FeldmannFKP15}.

\begin{figure}[b!]
\centering
\includegraphics[width=\textwidth]{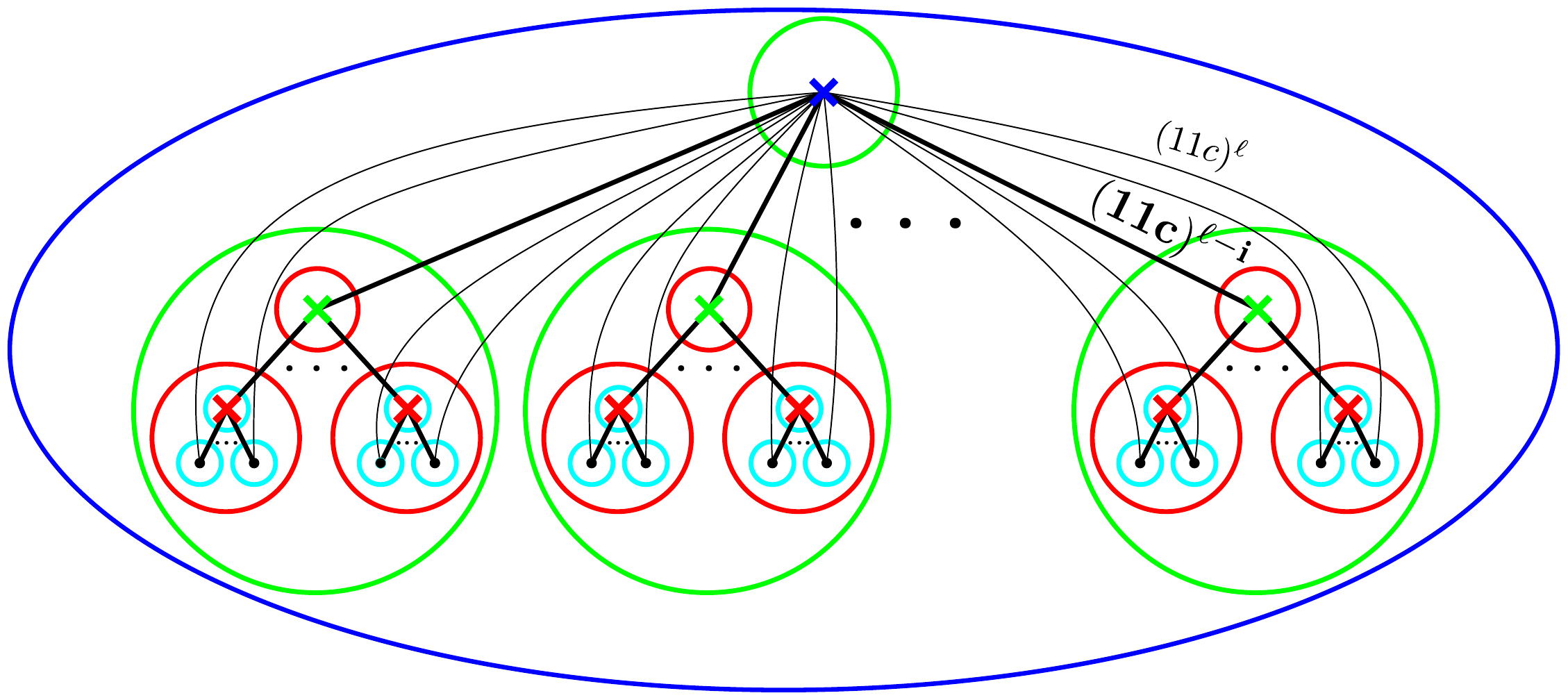}
\caption{An example of a low highway dimension graph without an analogue to 
portal-respecting paths.}
\label{fig:interface}
\end{figure}

Low doubling and low highway dimension metrics still have some similarities, 
which are exploited for the results of this and other papers. However, as 
pointed out in the introduction, a major difference is the non-existence of an 
analogue to portal-respecting paths in low highway dimension metrics. We 
illustrate this by the example given in \cref{fig:interface}: the given graph 
contains a rooted tree (thick black edges) of height $\ell$ (in the picture, 
$\ell=3$) for which each internal vertex at distance~$i$ from the root (topmost 
vertex) is connected to~$\Delta$ children via edges of length~$(11c)^{\ell-i}$, 
where $c>4$ is the constant from \cref{dfn:hd}. Additionally, each leaf of the 
tree is adjacent to the root via a (thin black) edge of length~$(11c)^\ell$. The 
resulting graph has doubling dimension $\log_2(\Delta)$, while the highway 
dimension is~$1$ (using a similar proof as for \cref{lem:hd1}). The circles 
represent towns of the town decomposition, each colour being a different level 
of the decomposition. For each town~$T$ its set~$X_T$ contains one hub (crosses) 
with the same colour (e.g., each green town contains a green hub connecting its 
red child towns), except for the towns on the lowest level (light blue) where 
$X_T=\emptyset$.

In this example any portal set according to \cref{lem:talwar-decomp} will not be 
of constant size, since the doubling dimension is unbounded. Thus we want to 
exploit the towns and their hub sets instead. But then no analogue to 
near-optimal portal-respecting paths exists: connecting a leaf to the root 
through hubs of increasing levels results in the path through the tree (thick 
edges). This path has length $\sum_{i=0}^{\ell-1} 
(11c)^{\ell-i}>(11c)^\ell+(11c)^{\ell-1}$, while using the direct (thin) edge is 
shorter by a factor of more than~$1+\frac{1}{11c}$. Thus for sufficiently 
small~$\eps$, to obtain a near-optimal connection we must use the direct edge, 
i.e., the leaf and root need to be connected using the interface point (the dark 
blue hub) of the town containing them.

\section{Proof sketch of \cref{lem:townproperties}}
\label{ap:proof}

\lemtownproperties*

\begin{proof}[Proof sketch.]
The first inequality follows immediately from \cite[Lemma~3.2]{FeldmannFKP15}. 
For the second inequality, we note that in \cite{FeldmannFKP15} the towns of the 
town decomposition~$\mc{T}$ are defined with respect to exponentially growing 
values $r_i=(c/4)^i$ where $c>4$ is the constant of \cref{dfn:hd}. Here the 
index $i\in\mathbb{N}_0$ is called a \emph{level}, but these levels behave quite 
differently from the levels of a hierarchical decomposition, which is why we 
refrained from introducing levels of town decompositions in this paper. In 
particular, the child towns of a town of level~$i$ might not be from level 
$i-1$, but can be from other levels as well. By \cite[Lemma~3.3]{FeldmannFKP15} 
however, the level of a child town of any town of level $i$ is at most $i-1$. If 
the level of a town $T$ is $i$, then by \cite[Lemma~3.2]{FeldmannFKP15} we have 
$\diam(T)\leq r_i$. Now let~$i$ be the level for which $\diam(T_0)\in 
(r_{i-1},r_i]$ for a given town $T_0$. By the above properties, the level of a 
$g$\textsuperscript{th}-generation descendant $T_g$ of $T_0$ is at most $i-g$, 
and so $\diam(T_g)\leq r_{i-g}$. Since $\diam(T_0)\geq r_{i-1}$ this implies the 
required bound $\diam(T_g)\leq \diam(T_0)/2^{g-1}$, if we set $c=8$.
\end{proof}

\end{document}